\numberwithin{equation}{section}
\long\def\symbolfootnote[#1]#2{\begingroup%
\def\thefootnote{\fnsymbol{footnote}}\footnote[#1]{#2}\endgroup}
\newtheorem{theorem}{Theorem}[section]
\newtheorem{lemma}[theorem]{Lemma}
\newtheorem{proposition}[theorem]{Proposition}
\newcommand{\R}{{\mathbb R}}
\newcommand{\ii}{{\textup{i}}}
\newcommand{\mcj}{{\mathcal{J}}}
\newcommand{\mcl}{{\mathcal{L}}}
\newcommand{\mce}{{\mathcal{E}}}
\newcommand{\mcg}{{\mathcal{G}}}
\newcommand{\Z}{{\mathbb Z}}
\newcommand{\Tr}{{\textup{Tr}}}
\newcommand{\C}{{\mathbb C}}
\newcommand{\baq}{\begin{eqnarray}}
\newcommand{\eaq}{\end{eqnarray}}
\newcommand{\beq}{\begin{equation}}
\newcommand{\eeq}{\end{equation}}
\title{Diffusive limit for a quantum linear  Boltzmann dynamics}
\author{\textbf{Jeremy Thane Clark} \vspace{.2cm}\\ jclark@mappi.helsinki.fi \\ University of Helsinki, Department of Mathematics\\  Helsinki 00014, Finland }
\begin{document}
\maketitle

\begin{abstract}
In this article, I study the diffusive behavior for a quantum test particle interacting with a dilute background gas.  The model I  begin with is a reduced picture for the test particle dynamics given by a quantum linear Boltzmann equation in which the gas particle scattering is assumed to occur through a hard-sphere interaction.  The state of the particle is represented by a density matrix that evolves according to a translation-covariant Lindblad equation.  The main result is a proof that the particle's position distribution converges to a Gaussian under diffusive rescaling.

\end{abstract}

\section{Introduction}

Boltzmann's transport equation generates an idealized dissipative dynamics for the  one-particle phase space density of a gas of interacting particles~\cite{Boltzmann,Cerc}.  This dynamics is relevant for a dilute gas in which the essential information for the interaction between gas particles is reduced to two-particle scattering data.  Boltzmann's equation effectively governs the convergence of the one-particle phase space density to the distribution corresponding to the thermal equilibrium state.  The linear Boltzmann equation is similar in character but governs the dissipative evolution for the one-particle phase space density of a population of (non self-interacting) test particles interacting with a dilute background gas in equilibrium.  By normalizing the density by the number of test particles, the linear Boltzmann equation serves as a Markovian master equation describing the noisy trajectory of a single test particle in the reservoir.   A mathematical derivation of Boltzmann's equation from a microscopic model for a gas of hard-spheres has appeared in the  article~\cite{Landford}, which verifies a mathematical conjecture in~\cite{Gallavotti} concerning what is referred to as the \textit{Boltzmann-Grad limit}.  A derivation of the linear Boltzmann equation from a microscopic model for a test particle interacting with a background gas follows from the same analysis and is discussed in the review~\cite{Spohn2}. Starting from a quantum version of a Lorentz gas, a linear Boltzmann equation has also been derived in a low density limit in~\cite{Eng}.

The linear Boltzmann equation is a classical model treating the test particle as a point.  A motivation for having a quantum dissipative dynamics rather than a classical one is to have a model in which the respective time scales of dissipation and decoherence for a quantum particle interacting with a gas are both apparent.   The Lindblad equation, serving as a quantum linear Boltzmann equation, should be determined by the essential features of the gas (e.g. spatial density, temperature) and the scattering behavior for the interaction between the test particle and a single gas particle.     

There have been various quantum master equations proposed in the physics literature to model the dissipative/decoherence influence of a background gas acting on a test particle~\cite{Joos,Gallis,Diosi,Vacchini,Hornberger}.  With the microscopic picture for the test particle interacting with an ideal gas as inspiration, the derivations have all been heuristic or  based on an analysis of the effect of a single scattering with a gas particle.  The first heuristic derivations were~\cite{Joos} and~\cite{Gallis} where the authors considered  limits in which the test particle becomes infinitely massive compared to a reservoir particle.  The limiting dynamics in those cases are transient in nature, since  by neglecting frictional effects, they predict linear growth in the mean kinetic energy. The first proposed model  to include energy relaxation was in~\cite{Diosi}.  The model was later refined and given clearer motivation in~\cite{Vacchini,Vacchini2} and  subsequently~\cite{Hornberger,HornVacc}.  A comprehensive review of the  literature can be found in~\cite{VaccHorn}.


In this article, I begin with the quantum linear Boltzmann equation discussed in~\cite{VaccHorn} for the special case in which the interaction between the test particle and a single reservoir particle is a hard-sphere potential and the reservoir particles are distinguishable.  The main result is a proof that the particle diffuses spatially in the limit of large times.  My analysis yields a closed form for the diffusion constant in terms of the scattering amplitude for the hard-sphere potential.  The techniques here are an extension of those in~\cite{Diff}, which require   the Lindblad generator and other operators involved be bounded. 

 I will mention a few recent mathematical results regarding diffusion in quantum models before proceeding to define my model.  The articles~\cite{Salmhofer,Salmhofer2} prove diffusion for a quantum particle in a random potential (quantum Lorentz gas) in a weak coupling limit.  The article~\cite{Schenker} shows diffusion for a quantum particle in a tight-binding approximation (living on $\Z^{d}$) whose wave packet evolves according to a stochastic Schr\"odinger equation. The results of ~\cite{Schenker} are extended in~\cite{Hamza}  to handle a weaker form of stochasticity in which the random potential forcing the particle is periodic rather than evolving independently at each lattice site.  Also for tight-binding models,~\cite{DFP} and~\cite{DF} prove diffusion for microscopic models of a particle interacting with an environment.  The environment in~\cite{DFP} is a lattice of non-interacting reservoirs that are taken to interact with the test particle in a weak coupling limit.  The result in~\cite{DF} treats a quantum particle interacting with a free Bosonic field in the limit that the particle is massive (weak coupling). 

In Sect.~\ref{SecDisc}, I present the model, state the main theorem regarding the diffusive behavior of the model, and give some background information.  Section~\ref{SecCons} discusses the technical questions regarding the existence of the dynamics I consider.  Sections~\ref{SecFiberDec}-\ref{SecDiffConstant} build up the technical tools for the perturbation argument that I employ.   Finally, the proof of the result is contained in Sect.~\ref{SecProofMain}.

\section{Model, results, and discussion}\label{SecDisc}

\subsection{The model and the main theorem}

The  state for the test particle at a given time $t\in \R_{+}$ will be described by a density matrix $\rho_{t}\in \mathcal{B}_{1}\big(L^{2}(\R^{3})\big)$, where $\mathcal{B}_{1}(\mathcal{H})$ denotes  the space of trace class operators over a Hilbert space $\mathcal{H}$.  The state $\rho_{t}$ evolves according to the Lindblad equation  
\begin{align}\label{MainDyn}
\hspace{3cm}\frac{d}{dt}\rho_{t}=\mathcal{L}(\rho_{t})=-\frac{\ii}{\hbar} \big[H,\,\rho_{t} \big]+\Psi(\rho_{t})-\frac{1}{2}\{\Psi^{*}(\textup{I}),\rho_{t}\}, \quad \quad \quad \rho_{0}=\rho,     
\end{align}
where the Hamiltonian $H=H^{*}$   is a function  $H(\vec{P})$ of the vector of momentum operators $\vec{P}$ with $P_{j}=-\ii\,\hbar\frac{\partial}{\partial x_{j}}$ for $j=1,\,2,\,3$, $\Psi$ is a completely positive map defined below, and $\Psi^{*}(\textup{I})$ is the evaluation of the identity for the adjoint map $\Psi^{*}$.  The term $-\frac{\ii}{\hbar} \big[H,\,\cdot \big]$ generates the deterministic part of the test particle dynamics, and $\Psi$ determines the dissipative influence of the gas as an idealized noise reservoir for the evolution of the test particle.   The Hamiltonian $H$ is equal to $\frac{1}{M}\vec{P}^{2}+H_{f}$, where $H_{f}$ is an energy shift related to forward scattering with the gas reservoir particles and  is defined below.

The forms of $\Psi$ and $H$, which I now define, are those suggested in~\cite{VaccHorn} for a quantum linear Boltzmann equation modeling a test particle in dilute gas of distinguishable particles.  The map $\Psi $ is taken to be of the form
\begin{align}\label{Jeez}
\Psi(\rho)=\int_{\R^{3}}d\vec{q}\,e^{\ii\frac{\vec{q}}{\hbar}\vec{X}}\mathbf{T}_{\vec{q}}(\rho)e^{-\ii\frac{\vec{q}}{\hbar}\vec{X} },   
\end{align}
where $\mathbf{T}_{\vec{q}}$ are completely positive maps that act as multiplication in the momentum representation
\beq \label{Ems}
\mathbf{T}_{\vec{q}}(\rho)(\vec{p}_{1},\vec{p}_{2})=\textup{T}_{\vec{q}}(\vec{p}_{1},\vec{p}_{2})\rho(\vec{p}_{1},\vec{p}_{2}),\quad \quad \quad \rho\in\mathcal{B}_{1}\big(L^{2}(\R^{3})\big),  
\eeq
for some functions $\textup{T}_{\vec{q}}:\R^{3}\times \R^{3}\rightarrow \C $ determined below.  To define the maps $\mathbf{T}_{\vec{q}} $, I must introduce some notation.  Let $m$, $M$, $m_{*}=\frac{mM}{m+M}$ refer to the mass of a single reservoir particle, the mass of the test particle, and the \textit{relative mass} respectively.  Let $\eta$ be the density of the gas and  $r(\vec{k})=(\frac{\beta}{2\pi m})^{\frac{3}{2}}  e^{- \frac{\beta}{2m}\vec{k}^{2} } $ be the momentum distribution for a single reservoir particle.  Define $\mathbf{f}:\R_{+}\times [0,\pi]\rightarrow \C$ to be the scattering amplitude determined by the interaction potential between the test particle and a gas particle, and denote $ f(\vec{p}_{2},\vec{p}_{1}):= \mathbf{f}(\mathbf{p},\theta)$, where $\mathbf{p}=\|\vec{p}_{1}\|_{2}=\|\vec{p}_{2}\|_{2}$ and  $\theta$ is the angle between $\vec{p}_{1}$ and $\vec{p}_{2}$. In my case, the scattering amplitude $\mathbf{f}(\mathbf{p},\theta)$ corresponds to the hard-sphere interaction, and I discuss its form in Sect.~\ref{SecDis}.  Finally, the completely positive maps $\mathbf{T}_{\vec{q}}$ have a Kraus form
\begin{align}\label{Kraus}
\mathbf{T}_{\vec{q}}(\rho)=\int_{ (\vec{q})_{\perp} }d\vec{v}\,  L_{\vec{q},\vec{v}}\,\rho\, L_{\vec{q},\vec{v}}^{*},   
\end{align}
where $(\vec{q})_{\perp}$ is the orthogonal complement of the one-dimensional space spanned by $\vec{q}$ and the operators $L_{\vec{q},\vec{v}}=L_{\vec{q},\vec{v}}(\vec{P})$ act as multiplication in the momentum representation as
\begin{multline}\label{ElementKraus}
L_{\vec{q},\vec{v}}(\vec{P})=\eta^{\frac{1}{2}}\frac{m^{\frac{1}{2}} }{m_{*}}\Big(\frac{1}{\,|\vec{q}| }\Big)^{\frac{1}{2}}\Big(r\big(\vec{v}+2^{-1}\frac{m}{m_{*}}\vec{q}+\frac{m}{M}\vec{P}_{\parallel \vec{q}  }    \big) \Big)^{\frac{1}{2}} \\ \times f\Big(\frac{m_{*}}{m}\vec{v}-\frac{m_{*}}{M}\vec{P}_{\perp \vec{q}}-2^{-1}\vec{q} , \,    \frac{m_{*}}{m}\vec{v}-\frac{m_{*}}{M}\vec{P}_{\perp \vec{q}}+2^{-1}\vec{q}\Big).
\end{multline}
In the above, the vector of operators $\vec{P}_{\parallel \vec{q}}$ and $\vec{P}_{\perp \vec{q}}$ have the forms $\vec{P}_{\parallel \vec{q}}= \frac{\vec{q}\cdot\vec{P}}{|\vec{q}|^{2}  }\,\vec{q} $ and  $\vec{P}_{\perp \vec{q}}=\vec{P}-\vec{P}_{\parallel \vec{q}}$.

The Hamiltonian $H_{f}$ will be given by
\beq \label{AltDisp}
H_{f}=H_{f}(\vec{P})=-\frac{2\pi\hbar^{2}\eta}{ m_{*}}\int_{\R^{3}}d\vec{q}\,r(\vec{q})\,\textup{Re}\big[f\big(\frac{m_{*}}{m}\vec{q}-\frac{m_{*}}{M}\vec{P},\, \frac{m_{*}}{m}\vec{q}-\frac{m_{*}}{M}\vec{P}\big)   \big].
\eeq

Theorem~\ref{MainThm} is the main result of this article and states that the particle behaves diffusively for  some diffusion constant $D$.  A closed expression for $D$ is discussed in Sect.~\ref{SecDis}.  The conditions for the theorem are in terms of the initial density matrix $\rho\in \mathcal{B}_{1}\big(L^{2}(\R^{3})\big)$.  The statement that $Y^{*}_{1}\rho Y_{2}$ is trace class for unbounded operators $Y_{1},\, Y_{2}$ with dense domains $\textup{D}(Y_{1})$, $\textup{D}(Y_{2})$ respectively, means that the corresponding quadratic form $Q(\psi_{1},\psi_{2})=\langle Y_{1}\psi_{1}|\rho \,Y_{2}\psi_{2}\rangle$ for $\psi_{1}\in \textup{D}(Y_{1})$ and $\psi_{2}\in \textup{D}(Y_{2})$ is bounded and determines a trace class operator for its kernel. 

\begin{theorem}\label{MainThm}
Assume that $X_{j}\rho$, $j=1,\,2,\,3$ are trace class and let $\mu_{t}(\vec{x}):=\rho_{t}(\vec{x},\vec{x})$ be the position density at time $t$.  In the limit $t\rightarrow \infty$, the rescaled density $t^{\frac{3}{2}}\mu_{t}(t^{\frac{1}{2}}\vec{r})$, $\vec{r}\in \R^{3}$ converges in distribution to a Gaussian with covariance matrix $D\,I_{3}$, $D>0$.  Moreover, if in addition $G_{1}G_{2}\rho$ and  $G_{1}\rho G_{2}$
are trace class for all $G_{1},G_{2}\in\{X_{j},P_{j},\,j=1,2,3 \}$ and $|\vec{P}|^{3-n}\rho|\vec{P}|^{n}$ is trace class for $n=0,\,1,\,2,\,3$, then the second moments satisfy
$$ t^{-1}\int_{\R^{3}}d\vec{x}\,\mu_{t}(\vec{x})\,x_{i}x_{j}=   \delta_{i,j} D+\mathit{O}(t^{-1}) \quad \text{as}\quad t\rightarrow \infty  .$$

\end{theorem}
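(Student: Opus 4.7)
The natural plan is to work on the Fourier side: let $\hat{\mu}_t(\vec{k}) := \int_{\R^3} e^{\ii \vec{k}\cdot \vec{x}}\mu_t(\vec{x})\,d\vec{x} = \Tr\bigl(e^{\ii \vec{k}\cdot \vec{X}}\rho_t\bigr)$. By L\'evy's continuity theorem, the convergence in distribution of the rescaled density $t^{3/2}\mu_t(t^{1/2}\vec{r})$ to a Gaussian with covariance $D\,I_3$ is equivalent to the pointwise convergence $\hat{\mu}_t(\vec{k}/\sqrt{t})\to e^{-D|\vec{k}|^2/2}$ for each $\vec{k}\in\R^3$. The problem is thereby reduced to analyzing the long-time behavior of $\hat{\mu}_t$ at momentum scales of order $t^{-1/2}$.

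Because $\mathcal{L}$ is translation-covariant (the Hamiltonian is a function of $\vec{P}$ and the jump operators $L_{\vec{q},\vec{v}}(\vec{P})$ act as multiplication in momentum, while the prefactor $e^{\ii \vec{q}\cdot\vec{X}/\hbar}$ just implements a rigid shift), the off-diagonal shift in the momentum representation is preserved by the evolution. Writing density matrices in the form $\rho(\vec{p}+\vec{k}/2,\vec{p}-\vec{k}/2)$, for each $\vec{k}\in\R^3$ one obtains a reduced generator $\mathcal{L}_{\vec{k}}$ acting on functions of a single momentum variable, and $\hat{\mu}_t(\vec{k})$ equals the integral of the $\vec{k}$-fiber of $e^{t\mathcal{L}}\rho$ against a constant. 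The rest of the proof becomes a spectral/perturbation analysis of the analytic family $\vec{k}\mapsto\mathcal{L}_{\vec{k}}$ near $\vec{k}=0$, which appears to be exactly the purpose of Sections~\ref{SecFiberDec}--\ref{SecDiffConstant}.

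At $\vec{k}=0$, $\mathcal{L}_0$ is the generator of the momentum-marginal evolution, i.e.\ a classical linear Boltzmann collision operator on $\R^3$ for which the Maxwell-Boltzmann density $r$ is a simple invariant with a spectral gap. Kato-type analytic perturbation theory then yields a simple isolated eigenvalue $\lambda(\vec{k})$ of $\mathcal{L}_{\vec{k}}$ for small $|\vec{k}|$ with $\lambda(0)=0$; isotropy and parity of $r$ and of the scattering amplitude $\mathbf{f}$ force $\lambda(\vec{k})=-\tfrac{1}{2}D|\vec{k}|^2+O(|\vec{k}|^3)$, where $D$ is identified as (one-third of the Laplacian in $\vec{k}$ of) $\lambda$ at $0$ and admits a closed-form expression in $\mathbf{f}$ via the standard second-order perturbation formula; positivity $D>0$ follows from a variance-type argument applied to that formula. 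Combined with the spectral gap (which gives exponential decay of the complementary spectral component), this produces $t\lambda(\vec{k}/\sqrt t)\to -D|\vec{k}|^2/2$ and the required limit $\hat{\mu}_t(\vec{k}/\sqrt t)\to e^{-D|\vec{k}|^2/2}$, with the normalization constant equal to $1$ because $\Tr(\rho)=1$. The second-moment statement follows by differentiating $\hat{\mu}_t(\vec{k})$ twice at $\vec{k}=0$ and plugging in the quadratic expansion of $\lambda$ together with the first-order correction to the eigenprojection, the remainder being $O(1)$ by the additional trace-class hypotheses on $\rho$.

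The main obstacle is the unboundedness of the generator and of the operators that enter the perturbation formula. The paper~\cite{Diff} carried out a similar spectral scheme but under boundedness assumptions, whereas here the free Hamiltonian is quadratic in $\vec{P}$ and each Kraus operator $L_{\vec{q},\vec{v}}(\vec{P})$ grows in momentum. Consequently, analyticity of $\vec{k}\mapsto \mathcal{L}_{\vec{k}}$, the spectral gap of $\mathcal{L}_0$, and the second-order expansion of $\lambda(\vec{k})$ must be established in a Banach space that carries enough momentum-weighted norms to control the perturbation; the hypothesis that $|\vec{P}|^{3-n}\rho|\vec{P}|^{n}$ is trace class for $n=0,1,2,3$ should be exactly what is needed to pair $\rho$ with the cubic remainder in $\lambda$ and close the second-moment estimate. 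Handling these unbounded manipulations rigorously, rather than the spectral/perturbative skeleton itself, is where the real work will lie.
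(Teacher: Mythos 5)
Your proposal follows essentially the same route as the paper: convergence of characteristic functions via the translation-covariant fiber decomposition, Kato perturbation theory for the isolated eigenvalue $\epsilon(\vec{k})=-\tfrac12 D|\vec{k}|^{2}+\mathit{O}(|\vec{k}|^{3})$ of $\mathcal{L}_{\vec{k}}$ together with the spectral gap of $\mathcal{L}_{0}$, and the second moments by differentiating $\langle 1|e^{t\mathcal{L}_{\vec{k}}}[\rho]_{\vec{k}}\rangle$ twice at $\vec{k}=0$, with the extra trace-class hypotheses (in particular $|\vec{P}|^{3-n}\rho|\vec{P}|^{n}$) supplying exactly the momentum-weighted control of the remainder, as in Sections~\ref{SecFiberDec}--\ref{SecProofMain}. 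The only slip is immaterial: the invariant state of $\mathcal{L}_{0}$ is the test-particle Maxwellian $\nu_{\infty}$ (mass $M$), not the gas distribution $r$.
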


\subsection{Discussion}\label{SecDis}

\subsubsection{Preliminary mathematical considerations  }
The Lindblad equation~(\ref{MainDyn}) determines a trace-preserving collection of  maps $\Phi_{t}:\mathcal{B}_{1}\big(L^{2}(\R^{3})\big)$ for $t\in \R_{+}$ determined by $\Phi_{t}(\rho)=\rho_{t}$.  The maps $\Phi_{t}$ are completely positive, satisfy the semigroup property $\Phi_{t}\Phi_{s}=\Phi_{t+s}$, and are strongly continuous.  The equation~(\ref{MainDyn}) has the generic form for the time derivative of a norm-continuous dynamical semigroup~\cite{Lindblad} except that  the generator $\mathcal{L}$ I consider is unbounded.  It is therefore necessary to be cautious with respect to the mathematical interpretation of the Lindblad equation~(\ref{MainDyn}).  Analogously to unbounded master equations in classical probability theory~\cite{Feller}, a solution to an unbounded Lindblad equation may be constructible without being unique or preserving the trace the density matrices~\cite{Davies}.  Intuitively, this would correspond to situations in which  infinitely many jumps may occur in a finite time period and there may be ``escape to infinity" in finite time.  In the case of translation-invariant Lindblad equations, this problem can be reduced to a classical one~\cite{Holevo}.  The uniqueness of the solution is implied by the probability preservation for an embedded classical Markovian semigroup that governs the momentum distribution.  In my case, the master equation governing the momentum density $\nu_{t}(\vec{p})=\rho_{t}(\vec{p},\vec{p})$ has the form 
\beq \label{MomDist}
\frac{d}{dt}\nu_{t}(\vec{p})= \mathcal{L}_{0}(\nu_{t})(\vec{p})= \int_{\R^{3}}d\vec{p}_{0} \big(\mcj (\vec{p},\vec{p}_{0}) \nu_{t}(\vec{p}_{0}) -\mcj(\vec{p}_{0},\vec{p}) \nu_{t}(\vec{p})\big), 
\eeq
 where the jump rates  $\mcj(\vec{p}+\vec{q},\vec{p})$ are written in terms of the  differential cross section $\frac{d\sigma}{d\Omega}(\vec{p},\vec{p}_{0}):=|f(\vec{p},\vec{p}_{0})|^{2}$ in the form
\begin{multline}\label{JumpRates}
 \mcj(\vec{p}+\vec{q},\vec{p})= \eta \frac{m}{m_{*}^{2}}\frac{1}{|\vec{q}|}\int_{(\vec{q})_{\perp}}d\vec{v}\, \frac{d\sigma}{d\Omega} \Big(\frac{m_{*}}{m}\vec{v}-\frac{m_{*}}{M}\vec{p}_{\perp \vec{q}}-2^{-1}\vec{q} , \,    \frac{m_{*}}{m}\vec{v}-\frac{m_{*}}{M}\vec{p}_{\perp \vec{q}}+2^{-1}\vec{q}\Big)\\ \times r\big(\vec{v}+2^{-1}\frac{m}{m_{*}}\vec{q}+\frac{m}{M}\vec{p}_{\parallel \vec{q}  }    \big).       
\end{multline}
 
For the differential cross section $\frac{d\sigma}{d\Omega}$ corresponding to a hard-sphere interaction, the solutions $\nu_{t}$ to~(\ref{MomDist}) converge exponentially fast in $L^{1}(\R^{3})$ to the stationary state $\nu_{\infty}(\vec{p})=(2\pi M \beta^{-1})^{-\frac{1}{2}}e^{-\frac{\beta}{2M}\vec{p}^{2}} $.  This puts the dynamics~(\ref{MomDist}) in the ``opposite" category as those in which explosion occurs, since a particle starting with distribution $\nu_{0}(\cdot)=\delta(\cdot-\vec{p}) $ for arbitrarily high momentum $|\vec{p}|\gg M^{\frac{1}{2}}\beta^{-\frac{1}{2} }$ must have distribution $\nu_{t}$ concentrated near a region of radius $M^{\frac{1}{2}}\beta^{-\frac{1}{2} }$ from the origin after a finite time period.   This is not surprising, since it holds for the classical linear Boltzmann dynamics modeling a test particle in a  hard-sphere gas for which the marginal momentum distribution satisfies~(\ref{MomDist}) with $\frac{d\sigma}{d\Omega}(\vec{p}_{2},\vec{p}_{1})=4^{-1}a^{2}$ for spheres with radius $a$.  I return to a discussion of the existence and uniqueness of the semigroup~$\Phi_{t}$ in the next section.  
    
\subsubsection{Hard-sphere scattering   }   
    
Beyond the values $m$, $M$, $\eta$, $\beta$, the critical input determining  the noise term $\Psi$ is the scattering amplitude $f(\vec{p}_{1},\vec{p}_{2})=\mathbf{f}(\mathbf{p},\,\theta)$.  The scattering amplitude carries the essential two-body scattering information between the test particle and a single reservoir particle that  becomes dominant in the low-density regime from which the linear Boltzmann dynamics arises (see~\cite{ReedIII} for a mathematical discussion of the scattering amplitude).  The Hamiltonian for the two-body interaction is   
\begin{align} \label{PertInt}
H_{\textup{tot}} &=-\frac{\hbar^{2}}{2M}\Delta_{\textup{test}}-\frac{\hbar^{2}}{2m}\Delta_{\textup{res}}+V(x_{\textup{test}}-x_{\textup{res}})\nonumber \\ &= -\frac{\hbar^{2}}{2(m+M)}\Delta_{\textup{cm}}-\frac{\hbar^{2}}{2m_{*}}\Delta_{\text{rel}}+V(\vec{x}_{\textup{rel}}), 
\end{align}
where $V(\vec{x}_{test}-\vec{x}_{res})$ is the interaction potential,  and $\Delta_{\textup{cm}}$ and $\Delta_{\text{rel}}$ are the Laplacians with respect to the center-of-mass coordinates $\vec{x}_{\textup{cm}}=\frac{m_{*}}{m}\vec{x}_{\textup{test}}+\frac{m_{*}}{M}\vec{x}_{\textup{res}}   $ and $\vec{x}_{\textup{rel}}= \vec{x}_{\textup{res}}-\vec{x}_{\textup{test}}$. 

   A hard-sphere potential of radius $a$ is formally given by a  potential 
\begin{align*}
V(\vec{x})= \left\{  \begin{array}{cc} \infty &  |\vec{x}|\leq a,    \\  \quad & \quad \\ 0    &  |\vec{x}|>a  .  \end{array} \right.  
\end{align*}
Mathematically, a hard-sphere interaction is not defined as a perturbation of the $V=0$ case of~(\ref{PertInt}), but as a certain self-adjoint extension of the Laplacian corresponding to the kinetic energy. For the region $R=\{|\vec{x}_{text}-\vec{x}_{res}|\geq a   \}= \{|\vec{x}_{rel}|\geq a \}\subset \R^{6}$,  the Hamiltonian for the test particle and a single reservoir particle is   
\beq\label{NonPertInt}
H_{\textup{tot}}=-\frac{\hbar^{2}}{2M}\Delta_{\textup{test}}-\frac{\hbar^{2}}{2m}\Delta_{\textup{res}}= -\frac{\hbar^{2}}{2(m+M)}\Delta_{\textup{cm}}-\frac{\hbar^{2}}{2m_{*}}\Delta_{\text{rel}} 
\eeq
with Dirichlet boundary conditions on $\partial R$.

Since the Hamiltonian~(\ref{NonPertInt}) is not a perturbation of a ``free" Hamiltonian as in~(\ref{PertInt}), the standard scattering formalism in~\cite{ReedIII} does not apply directly for the hard-sphere case.  However, the scattering amplitude for a hard-sphere can be defined using the limits of the scattering amplitude for a  permeable spherical potential $V(\vec{x})=\bar{V}\chi(|\vec{x}|\leq a)$, $\bar{V}>0$.  For the permeable sphere, the scattering amplitude has a partial-wave expansion 
\beq \label{PartWave}
\mathbf{f}(\mathbf{p},\theta)=\frac{\hbar}{2\ii\mathbf{p}}\sum_{\ell =0}^{\infty}(2\ell+1)\big(S_{\ell}\big(\frac{a}{\hbar}\mathbf{p}\big)-1\big)\frak{L}_{\ell}(\cos(\theta)),  
\eeq
where the values $ S_{\ell}(\kappa)\in \C$ are determined by the  equation below and $\frak{L}_{\ell}$ is the $\ell$th Legendre polynomial.  The value $S_{\ell}(\kappa)$ is a phase factor with
$$\hspace{4cm} S_{\ell}(\kappa)=-\frac{h_{\ell}^{(2)}(\kappa)}{h_{\ell}^{(1)}(\kappa)  }\,\frac{\log^{\prime}h_{\ell}^{(2)}(\kappa)-\frak{n}\log^{\prime}j_{\ell}(\frak{n}\kappa)   }{ \log^{\prime}h_{\ell}^{(1)}(\kappa)-\frak{n}\log^{\prime}j_{\ell}(\frak{n}\kappa)   },  \quad \quad \hspace{.8cm} \kappa \in \R_{+},  $$
where $\frak{n}=\big(1-2m_{*}\bar{V}a^{2} \kappa^{-2}\hbar^{-2} \big)^{\frac{1}{2}}$, $\log^{\prime}$ refers to the logarithmic derivative, and $j_{\ell}$,  $h_{\ell}^{(1)}$, and $h_{\ell}^{(2)}$ are the spherical Bessel functions and Hankel functions of the first and second kind respectively.  In the  hard-sphere limit $\bar{V}\rightarrow \infty$, the above becomes  
$$
\hspace{6.5cm}S_{\ell}(\kappa)= -\frac{h_{\ell}^{(2)}(\kappa)}{h_{\ell}^{(1)}(\kappa)  }, \quad \quad \hspace{3cm} \kappa \in \R_{+}.
$$

   Even for an interaction as simple as the hard-sphere, there is no simple formula for $\mathbf{f}(\mathbf{p},\,\theta)$ except  when $\frac{a}{\hbar}\mathbf{p}\ll 1$ and the much more challenging regime $\frac{a}{\hbar}\mathbf{p}\gg  1$.  When $\frac{a}{\hbar}\mathbf{p}\ll 1$, then 
  $$ \mathbf{f}(\mathbf{p},\theta)\approx -a   .   $$
In the regime  $\frac{a}{\hbar}\mathbf{p}\gg 1$, there is a rough approximation~\cite{Nussen2} given by
\beq \label{HighEnergy}
 \mathbf{f}(\mathbf{p},\theta)\approx -\frac{1}{2}a\Big(e^{-2\ii\frac{a}{\hbar} \mathbf{p}\sin(\frac{\theta}{2})}+\textup{i} \frac{1+\cos(\theta)}{\sin(\theta)}J_{1}\big(\frac{a}{\hbar}\mathbf{p}   \sin(\theta)\big)        \Big),  
 \eeq
 where $J_{1}$ is a cylindrical Bessel function of the first kind.  The left term $e^{-2\ii\frac{a}{\hbar} \mathbf{p}\sin(\frac{\theta}{2})}$ is dominant except in a small region in the forward scattering direction $\theta\approx 0$.  The second term in~(\ref{HighEnergy}) becomes dominant for $\theta$ on the order of $(\frac{a}{\hbar} \mathbf{p})^{-1}\ll 1$ and yields a diffraction peak of height proportional to $a( \frac{a}{\hbar} \mathbf{p}) \gg a$ and a series of much smaller peaks forming the classical Airy pattern, which vanish for  $\theta\gg (\frac{a}{\hbar} \mathbf{p})^{-1}$.  The total cross section $\sigma_{\textup{tot}}(\mathbf{p})=\int_{\Omega}|\mathbf{f}(\mathbf{p},\theta)|^{2}   $,
 which characterizes the effective target area with respect to scattering, has the limiting values  $\sigma_{\textup{tot}}(\mathbf{p})\approx 4\pi a^2$ and $\sigma_{\textup{tot}}(\mathbf{p})\approx 2\pi a^{2}$ for 
 $\frac{a}{\hbar}\mathbf{p}\ll 1$ and  $\frac{a}{\hbar}\mathbf{p}\gg  1$, respectively.  The fact that $\sigma_{\textup{tot}}(\mathbf{p})$ limits to twice the classical cross-sectional area $\pi a^{2}$ in the large $\mathbf{p}$ limit is known as the {\it extinction paradox}, since  classical behavior would be expected in that regime (i.e. yielding the cross-sectional area $\pi a^{2}$).    
 
The study of $\mathbf{f}(\mathbf{p},\theta)$ through the partial-wave expansion when $\frac{a}{\hbar}\mathbf{p}\gg  1$ is well-known to be difficult, since the number of non-negligible terms to be summed in~(\ref{PartWave}) grows on the order of $\frac{a}{\hbar}\mathbf{p}$.  There is a successful approach based on exchanging the sum~(\ref{PartWave}) for a complex integration over a contour for an analytic extension of the index $\ell$: 
\beq \label{PartWaveInt}
\mathbf{f}(\mathbf{p},\theta)=\frac{\hbar}{2i\mathbf{p}}\int_{\textup{C}}\frac{d\lambda\,\lambda}{\cos(\pi \lambda)}\,\frac{h_{\lambda}^{(2)}(\frac{a}{\hbar}\mathbf{p} )}{h_{\lambda}^{(1)}(\frac{a}{\hbar}{\mathbf{p}} )  }   \frak{L}_{\lambda-\frac{1}{2}}\big(\cos(\theta)   \big).  
\eeq
 The contour $\textup{C}$ is is $u$-shaped around the positive real axis, although it must avoid enclosing any poles of $h_{\lambda}^{(1)}(\frac{a}{\hbar}{\mathbf{p}})$ as a complex function of $\lambda$.  The rewriting of the sum for the complex integration is called \textit{Watson's transformation}~\cite{Watson}.  The complex integration formulation allows the application of the method of stationary phase and the calculus of residues in the evaluation of~(\ref{PartWaveInt}).  This method was introduced by Watson and further developed in~\cite{Fock,Franz}.   Mathematical detail can be found in~\cite{Nussen,Nussen2}, and~\cite{Grandy} is an introductory book on the subject of scattering of waves from spherical objects.  In addition to the quantum hard-sphere, the same formalism~(\ref{PartWave}) applies for acoustical scattering and electromagnetic scattering from spherical targets.  For the purpose of the current article, I essentially only require a small corollary of the above results, which is  that the differential cross section $\frac{d\sigma}{d\Omega}(\mathbf{p},\theta)= |\mathbf{f}(\mathbf{p},\theta)|^{2}$ does not vanish away from the region of the forward direction $\theta\approx 0$ as $\mathbf{p}\rightarrow \infty$.

 \subsubsection{The diffusion constant }

The diffusion constant $D$ in Thm.~\ref{MainThm} is a sum of two parts $D_{\textup{kin}},   D_{\textup{jps} }\geq 0$ that can be interpreted as driven by an averaged kinetic motion of the test particle and by spatial jumps made by the test particle.  Let $\frak{p}_{t}\in \R^{3}$ be the classical Markov process with jump rates given by~(\ref{JumpRates})  and starting in the stationary distribution
 $\nu_{\infty}$.  The constants $D_{\textup{kin}}$ and $D_{\textup{jps}}$ have the form
\begin{eqnarray*}  
D_{\textup{kin}}&=&3^{-1}\int_{0}^{\infty}dt\,\mathbb{E}_{\nu_{\infty}} \big[ \frak{v}(\frak{p}_{t})\cdot \frak{v}(\frak{p}_{0}) \big], \\
D_{\textup{jps} }&=& 3^{-1}\hbar^{2}\int_{\R^{3}}d\vec{p}\,\nu_{\infty}(\vec{p})\,\int_{\R^{3}}d\vec{q} \sum_{n=1}^{3}  (\partial_{1,n}\partial_{2,n}   \textup{T}_{\vec{q}})\big(\vec{p},\vec{p}  \big),   \\ 
\end{eqnarray*}
where $\partial_{1,n},\partial_{2,n}$, $n=1,2,3$ are the partial derivatives with respect to the first and second arguments of $\textup{T}_{\vec{q}}\big(\vec{p}_{1},\vec{p}_{2}  \big)$ respectively, and $\frak{v}:\R^{3}\rightarrow \R^{3}$  has the form
\begin{align}\label{VelocityFunction}
 \frak{v}(\vec{p})= &  (\nabla H)(\vec{p})+\frak{u}(\vec{p}),\quad  \text{for}  \\ \frak{u}(\vec{p})= &\frac{2 \eta}{M}\int_{\R^{3}}d\vec{p}_{\textup{rel}}\,|\vec{p}_{\textup{rel}}|\,r\big(\frac{m}{m_{*}} \vec{p}_{\textup{rel}}+\frac{m}{M}\vec{p}   \big)   \int_{\Omega}\frac{ \vec{p}_{\textup{rel}}+|\vec{p}_{\textup{rel}}|\widehat{\theta}    }{\big|\vec{p}_{\textup{rel}}+|\vec{p}_{\textup{rel}}|\widehat{\theta}  \big| } \,\textup{Im}\Big[ (\partial_{\mathbf{z}}\mathbf{f})\big( \vec{p}_{\textup{rel}},\theta) \overline{\mathbf{f}}\big( \vec{p}_{\textup{rel}},\theta)  \Big]\nonumber . 
 \end{align}
The integration $\int_{\Omega}$ is over unit vectors $\widehat{\theta}\in \R^{3}$, $\theta$ is the angle between  $\vec{p}_{\textup{rel}} $ and $\widehat{\theta}$, and $\partial_{\mathbf{z}}$ is the mixed derivative  $\partial_{\mathbf{z}}=\cos(2^{-1}\theta )\partial_{\mathbf{p}}-2\mathbf{p}^{-1}\sin(2^{-1}\theta)\partial_{\theta}$.


It is worth formulating classical analogies of the dynamics to  better understand  the contributions to the diffusion constant.   Consider a classical dynamics in which the spatial variable $X_{t}\in \R^{3}$ for the particle  is driven ballistically by the momentum process $\frak{p}_{t}$ through the dispersion relation $(\nabla H)\big(\vec{p}\big)$:
\begin{align*}
X_{t}=X_{0}+\int_{0}^{t}dr\,(\nabla H)\big(\frak{p}_{r}\big).
\end{align*}
The Markov process  $\frak{p}_{t}$ is time-reversible, since the jump rates~(\ref{JumpRates}) satisfy the  detailed balance property that the kernel $A(\vec{p}_{1},\vec{p}_{2}):= \mcj(\vec{p}_{2},\vec{p}_{1})e^{\frac{\beta}{4M}(\vec{p}_{2}^{2}-\vec{p}_{1}^{2}  ) } $ is symmetric in $\vec{p}_{1}$ and $\vec{p}_{2}$.  This follows, since
\begin{multline}\label{JumpRatesAgain}
A(\vec{p}_{1},\vec{p}_{2}) = \eta \big(\frac{\beta}{2\pi m}    \big)^{\frac{3}{2}}\frac{m}{m_{*}^{2}}\frac{1}{|\vec{p}_{2}-\vec{p}_{1}|}e^{-\frac{\beta}{8m}(\vec{p}_{1}-\vec{p}_{2})^{2}          }\\  \times\int_{(\vec{p}_{2}-\vec{p}_{1} )_{\perp}}d\vec{v}\, \frac{d\sigma}{d\Omega} \Big(\frac{m_{*}}{m}\vec{v}-2^{-1}(\vec{p}_{2}-\vec{p}_{1}) , \,    \frac{m_{*}}{m}\vec{v}+2^{-1}(\vec{p}_{2}-\vec{p}_{1})\Big) e^{-\frac{\beta}{2m}(\vec{v}+\frac{m}{2M}(\vec{p}_{1}+\vec{p}_{2}))^{2}         }       
\end{multline}
and  $\frac{d\sigma}{d\Omega}(\vec{p}_{1},\vec{p}_{2}) $ is a function of the norm $|\vec{p}_{1}|=|\vec{p}_{2}|$ and the angle between the vectors $\vec{p}_{1}$ and $\vec{p}_{2}$.  
Central limit theorems for integral functionals of time-reversible Markov processes were treated in~\cite{Kipnis}. A more general discussion  on the subject of integral functionals of Markov processes can be found in the recent book~\cite{Landim}.  This theory would suffice to show  the renormalized processes $N^{-\frac{1}{2}}X_{Nt}$ converge in law as $N\rightarrow \infty$ to a Brownian motion $\mathbf{B}_{t}$ in $\R^{3}$ with diffusion constant $D'I_{s}$ for $D'>0$ with the Green-Kubo form
$$  D'=3^{-1}\int_{0}^{\infty}dt\,\mathbb{E}_{\nu_{\infty}} \big[ (\nabla H)(\frak{p}_{t})\cdot (\nabla H)(\frak{p}_{0}) \big] .  $$

Our expression for the diffusion constant $D$ differs from the  expression $D'$ by the substitution of $(\nabla H)\big(\vec{p}\big)$ with $\frak{v}(\vec{p})$ and the additional contribution $D_{\textup{jps}}$.  Technically, the quantum dynamics is more closely related to a classical Markov process that moves with velocity $(\nabla H)\big(\vec{p}\big)$ and makes joint spatial/momentum jumps with Poisson rates that depend on the  current state of the momentum (i.e. translation invariant rates).  Suppose jumps $(\vec{x},\vec{q})\in \R^{3}\times \R^{3}$ occur with Poisson rate density $ \textup{R}_{\vec{p}}(\vec{x},\vec{q}) $ when the current momentum is $\vec{p}\in \R^{3}$, and the following integration formulas hold:
$$ \mcj (\vec{p}+\vec{q},\vec{p})=\int_{\R^{3}}d\vec{x}\,\textup{R}_{\vec{p}}(\vec{x},\vec{q}) \hspace{1cm}\text{and}\hspace{1cm} \frak{u}(\vec{p})=\int_{\R^{3}\times \R^{3}}d\vec{x}\,d\vec{q}\, \textup{R}_{\vec{p}}(\vec{x},\vec{q})\,\vec{x} .    $$
In other words, the marginal jump rates in momentum agree with the process $\frak{p}_{t}$ and the averaged velocity generated by position jumps is $\frak{u}(\vec{p})$.  Suppose also  the rates $ \textup{R}_{\vec{p}}(\vec{x},\vec{q}) $ satisfy rotation invariance  $ \textup{R}_{\vec{p}}(\vec{x},\vec{q})=\textup{R}_{S\vec{p}}(S\vec{x},S\vec{q}) $ for $S\in \textup{SO}_{3}$ and the detailed balance condition 
\begin{align}\label{ReDetailBalance}
\nu_{\infty}(\vec{p}) \textup{R}_{\vec{p}}(\vec{x},\vec{q})=\nu_{\infty}(\vec{p}+\vec{q}) \textup{R}_{\vec{p}+\vec{q} }(-\vec{x},-\vec{q} ).
\end{align}
Then, under some assumptions of the existence of moments for the spatial jumps, the diffusion matrix is  $D''I_{3}$ for 
$$D''= 3^{-1}\int_{0}^{\infty}dt\,\mathbb{E}_{\nu_{\infty}} \big[ \frak{v}(\frak{p}_{t})\cdot \frak{v}(\frak{p}_{0}) \big]  + 3^{-1}\int_{\R^{3}}d\vec{p}\,\nu_{\infty}(\vec{p})\,\int_{\R^{3}\times \R^{3}}d\vec{x}\,d\vec{q}\,\textup{R}_{\vec{p}}(\vec{x},\vec{q})\,|\vec{x}|^{2} .     $$
The detailed balance condition~(\ref{ReDetailBalance}) eliminates an extra term from appearing in the expression for $D''$, and  there is an analogous condition satisfied by the quantum dynamics considered in this article, which is discussed in Sect.~\ref{RemarkDetailedBalance}.  

Some similarity between $D_{\textup{jps} }$ and the second term in the expression for $D''$ can be seen by a comparison of $\textup{R}_{\vec{p}}(\vec{x},\vec{q})$ with a Wigner transform of $\textup{M}_{\vec{q}}\big( \vec{p}_{1},\vec{p}_{2} \big)$:
 $$\textup{R}_{\vec{p}}'(\vec{x},\vec{q}):=(2\pi)^{-3}\int_{\R^{3}}d\vec{k}\,e^{\textup{i} \vec{x}\vec{k}}   \textup{M}_{\vec{q}}\big( \vec{p}-2^{-1}\hbar \vec{k} ,\vec{p}+2^{-1}\hbar \vec{k} \big).  $$
 Although the classical analogy is not perfect, it gives some intuition for both the symmetries and complications in the analysis of the model.

In the limit of large mass $M$ for the test particle, the $D_{\textup{kin}}$ component of the diffusion constant dominates.  Let  $\lambda=\frac{m}{M}$, then as $M\rightarrow \infty$   
\begin{align*} D_{\textup{jps} }=  & \frac{4 \eta \hbar^{2}m_{*}}{3M^{2}}\big(\frac{2m_{*} }{\pi\beta }   \big)^{\frac{1}{2}}\int_{\R_{+}}d\mathbf{q} \,\mathbf{q}^{3}    e^{- \mathbf{q}^{2}}   \Big(  \frac{\beta }{4m_{*}}\big( 2\mathbf{q}^{2}+3\frac{m}{M} \big)\, \sigma_{\textup{tot}}\big((\frac{2m_{*} }{\beta} )^{\frac{1}{2}}\mathbf{q} \big)+  \sigma_{\mathbf{z}}\big((\frac{2m_{*} }{\beta} )^{\frac{1}{2}}\mathbf{q}\big)    \Big) \\
= &
 \lambda^{2}\frac{4 \eta \hbar^{2}}{3m}\big(\frac{2m }{\pi\beta }   \big)^{\frac{1}{2}}\int_{\R_{+}}d\mathbf{q} \,\mathbf{q}^{3}    e^{- \mathbf{q}^{2}}   \Big(  \frac{\beta }{2m} \mathbf{q}^{2}\, \sigma_{\textup{tot}}\big((\frac{2m }{\beta} )^{\frac{1}{2}}\mathbf{q} \big)+  \sigma_{\mathbf{z}}\big((\frac{2m }{\beta} )^{\frac{1}{2}}\mathbf{q}\big)    \Big)+\mathit{O}(\lambda^{3}),   
 \end{align*}
where  $\sigma_{\textup{tot}}(\mathbf{p})$ is the total cross section defined above and $\sigma_{\mathbf{z}}(\mathbf{p})= \int_{\Omega}|\partial_{\mathbf{z}}\mathbf{f}(\mathbf{p},\theta)|^{2} $.
In the regime $\lambda\ll 1$,  $D_{\textup{jps} }$ is small compared to the classical part $D_{\textup{kin}}$, which has order $\lambda$.

I remark that the leading term for $ D_{\textup{jps}}$ in the limit of small $\lambda$ is not the same as the \textit{quantum diffusion coefficient} $D_{\textup{qdc}}$ discussed in~\cite[Sec. 5  ]{VaccHorn}, which has the form
$$D_{\textup{qdc}}=\lambda^{2}\frac{\eta \beta \hbar^{2}}{6 m^{2}}   \big(\frac{2m  }{\pi\beta  }   \big)^{\frac{1}{2}}\int_{\R_{+}}d\mathbf{q} \,\mathbf{q}^{5}    e^{- \mathbf{q}^{2}}\, \sigma_{0}\big((\frac{2m }{\beta} )^{\frac{1}{2}}\mathbf{q} \big) ,  $$
where $\sigma_{0}(\mathbf{p})=\int_{\Omega}\big(1-\cos(\theta)\big)\frac{d \sigma}{d\Omega}(\mathbf{p},\theta)  $.  
The constant  $D_{\textup{qdc}}$ arises in a Brownian limit of the quantum linear Boltzmann dynamics~(\ref{MainDyn}).  The difference between the leading order of $D_{\textup{jps}}$ and $D_{\textup{qdc}}$ is a non-commutation between the diffusive and the Brownian limits.  Most notably,  the leading order for $D_{\textup{jps}}$ depends on the derivatives of the scattering amplitude, but $D_{\textup{qdc}}$ depends only on the values of the differential cross section $\frac{d \sigma}{d\Omega}(\mathbf{p},\theta)=|\mathbf{f}(\mathbf{p},\theta)|^{2}   $.

\subsubsection{Remarks on the proof strategy}

The basic strategy   I use for the proof of Thm.~\ref{MainThm} is a standard one, which is formulated using a decomposition arising from the translation invariance of the model.  The articles~\cite{Schenker,Diff,DF} are similarly based on a translation symmetry.   The translation invariance of the model implies a ``decomposition" of the state space into  fibers $\mathcal{B}_{1}\big(L^{2}(\R^{3})\big)\approx \int_{\R^{3} }^{\oplus}L^{1}(\R^{3})  $\symbolfootnote[2]{This equivalence is not meant to be precise.} that evolve independently according to the dynamics.  In other words, there are a family of maps $\rho\rightarrow [\rho]_{\vec{k}}  $ from $\mathcal{B}_{1}\big(L^{2}(\R^{3})\big)$ to $L^{1}(\R^{3})$ for $\vec{k}\in \R^{3}$ and a family of semigroups $ \widetilde{\Phi}_{t}^{(\vec{k})}:L^{1}(\R^{3})  $   such that
$$ [\Phi_{t}(\rho)]_{\vec{k}}=  \widetilde{\Phi}_{t}^{(\vec{k})}[\rho]_{\vec{k}} .  $$
The longterm diffusive behavior for the test particle is determined by information contained in the semigroups $ \widetilde{\Phi}_{t}^{(\vec{k})}$ for an infinitesimal neighborhood around $\vec{k}=0$.   In particular, if $\mathcal{L}_{\vec{k}}$ is the infinitesimal generator for $\widetilde{\Phi}_{t}^{(\vec{k})}$, then the diffusion matrix is determined by the second derivative at zero for the eigenvalue $\epsilon (\vec{k})\in \R_{-}+\textup{i}\,\R  $  of $\mathcal{L}_{\vec{k}}$ having leading real part in the spectrum: $D=-\nabla^{\otimes^2}\epsilon_{\vec{k}}|_{\vec{k}=0}$.

The analysis in the proof uses that the first few derivatives of $\mathcal{L}_{\vec{k}}$ in a neighborhood of $\vec{k}=0$ are relatively bounded to 
$\mathcal{L}_{0}$.  The contribution to $\mathcal{L}_{\vec{k}}$ coming from the kinetic Hamiltonian term $\frac{1}{2M}\vec{P}^{2}$ is a multiplication operator with function $g_{\vec{k}}(\vec{p}):= \frac{\textup{i}}{M}\vec{k}\cdot \vec{p}   $.   The choice of hard-sphere interaction for the model has the advantage that the Hamiltonian term is relatively bounded to $\mathcal{L}_{0}$.  Softer short-range interaction potentials will not have this property, and the argument would in the least have to be modified to cover other cases.

\section{Existence and conservativity of the dynamics  }\label{SecCons}

In this section, I provide some technical discussion regarding the construction of the semigroup $\Phi_{t}$.  The reader who is unconcerned by this issue may pass to the next section without any notations missed.  A theory specific to translationally invariant Lindblad dynamics was developed in~\cite{Holevo}.  In Appendix~\ref{FourFormal},  I present a Fourier transform relating the noise map $\Psi$ to the maps~$\mathbf{T}_{\vec{q}}$ in~(\ref{Jeez}).   See~\cite{Fagnola,Cov} and the review~\cite{Unbounded} for technical work regarding Lindblad equations with unbounded generators.     

 I define the semigroup $\Phi_{t}: \mathcal{B}_{1}\big(L^{2}(\R^{3})\big)$  as the pre-adjoint maps corresponding  to a semigroup of completely positive maps $\Phi_{t}^{*}$ acting on the collection of bounded operators $ \mathcal{B}\big(L^{2}(\R^{3})\big)$ over the Hilbert space $L^{2}(\R^{3})$ and satisfying the integral equation below:  
\beq \label{IntEq}
\langle \phi |\Phi_{t}^{*}(G)\, \psi\rangle = \langle \phi |G\, \psi\rangle+ \int_{0}^{t}dr\,  \mathcal{L}\big(\phi,\,\Phi_{r}^{*}(G)\, \psi \big), \quad \quad \phi,\psi \in \textup{D}(\vec{P}^{2}),\,\,G\in \mathcal{B}\big(L^{2}(\R^{3})\big),
\eeq
where the form generator $\mathcal{L}:\textup{D}(\vec{P}^{2})\times \mathcal{B}\big(L^{2}(\R^{3})\big)\times \textup{D}(\vec{P}^{2})\rightarrow \C$ is defined as
$$ \mathcal{L}(\phi ;\,G;\, \psi )=  \langle  A  \phi \big| G \psi\rangle  + \langle  \phi \big| G A \psi\rangle +\int_{\R^{3}}d\vec{q} \int_{ (\vec{q})_{\perp} }d\vec{v}\,  \big\langle e^{\ii\frac{\vec{q}}{\hbar}\vec{X}}L_{\vec{q},\vec{v}}\, \phi \big| G \, e^{\ii\frac{\vec{q}}{\hbar}\vec{X}} L_{\vec{q},\vec{v}}\psi \big\rangle,    $$
for $A=-\frac{i}{\hbar} H -2^{-1}\Psi^{*}(\textup{I})$.  The domain $\textup{D}(\vec{P}^{2})\subset L^{2}(\R^{3})$ is the collection of all vectors $\phi$ whose density in the momentum representation satisfies  $\int_{\R^{3}}d\vec{p}\,|\vec{p}|^{4}|\phi(\vec{p})|^{2}<\infty$.    The translation invariance of the dynamics is characterized in the form generator through  the covariance relation
\beq \label{CovGen}
\hspace{4.2cm} \mathcal{L}\big(\phi ;\,e^{-\ii\vec{b}\vec{P}}Ge^{\ii\vec{b}\vec{P}};\, \psi \big)=   \mathcal{L}\big(e^{\ii\vec{b}\vec{P}}\phi ;\,G;\,e^{\ii\vec{b}\vec{P}}\psi \big),  \hspace{1.8cm} \vec{b}\in \R^{3}.
\eeq
This relation follows since $A$ and $L_{\vec{q},\vec{v}}$ are functions of $\vec{P}$ and by Weyl intertwining relations between  $e^{\ii\vec{b}\vec{P}}$ and  $e^{\ii\frac{\vec{q}}{\hbar}\vec{X}}$.

A semigroup $\Phi_{t}:\mathcal{B}_{1}\big(L^{2}(\R^{3})\big)$ is said to be \textit{conservative} if it preserves trace.  This is equivalent to the adjoint  maps sending the identity operator to itself (i.e. $\Phi_{t}^{*}(\textup{I})=\textup{I}$), since
$$ \hspace{3cm}\Tr[ \Phi_{t}(\rho)]=\Tr[ \Phi_{t}^{*}(\textup{I})\rho]= \Tr[\rho], \hspace{2cm} \rho\in\mathcal{B}_{1}\big(L^{2}(\R^{3})\big).   $$
Conservativity is analogous to being \textit{stochastic} for a Markovian semigroup~\cite[Ch.X]{Feller}.
 The weak*-continuity for $\Phi_{t}^{*}$ stated in Thm.~\ref{ExistenceLemma} is equivalent to strong continuity for $\Phi_{t}$, by a general result for semigroups on Banach spaces~\cite[Cor.3.18]{Bratteli}.  The translation covariance property~(\ref{CovGroup}) implies the same covariance for $\Phi_{t}$.

 Denote $\mathcal{T}_{m}=\{y\in L^{1}(\R^{3})\,|\,\int_{\R^{3}}d\vec{p}|\vec{p}|^{m}|y(\vec{p})|<\infty\} $.  In the proof of Thm.~\ref{ExistenceLemma}, I use results from Sect.~\ref{SecMarMom} that the operator $\mathcal{L}_{0}$ (defined in~(\ref{MomDist})) is closed with domain $\mathcal{T}_{1}$  and generates an ergodic Markovian semigroup.  The proof of Thm.~\ref{ExistenceLemma} reiterates some of the reasoning from~\cite{Holevo} and concludes with an argument specific to my model verifying that $\mathcal{L}_{0}$ is equal to the generator of a certain Markovian semigroup defined through the quantum dynamics.

\begin{theorem}\label{ExistenceLemma}
There exists a unique  semigroup of completely positive maps $\Phi_{t}^{*}:\mathcal{B}\big(L^{2}(\R^{3})\big)$ satisfying the integral equation~(\ref{IntEq}).  The semigroup is weak*-continuous and conservative.  Finally, the maps $\Phi_{t}^{*}$ have the translation covariance property 
\begin{align}\label{CovGroup}
\hspace{4.2cm} \Phi_{t}^{*}\big(e^{-\ii\vec{b}\vec{P}}Ge^{\ii\vec{b}\vec{P}} \big)=   e^{-\ii\vec{b}\vec{P}} \Phi_{t}^{*}(G)e^{\ii\vec{b}\vec{P}}, \hspace{1.8cm} \vec{b}\in \R^{3},
\end{align}
for all $G\in \mathcal{B}\big(L^{2}(\R^{3})\big)$.

\end{theorem}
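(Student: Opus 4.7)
The plan is to construct $\Phi_t^*$ as the minimal positive solution of the integral equation~(\ref{IntEq}) and then use translation invariance to reduce the conservativity question to a statement about the classical momentum-space dynamics generated by $\mathcal{L}_0$. I would first introduce the contraction semigroup $T_t = e^{tA}$ on $L^2(\R^3)$ generated by the dissipative operator $A = -\frac{\ii}{\hbar}H - \frac{1}{2}\Psi^*(\textup{I})$, which is well-defined since $H$ is a function of $\vec{P}$ and $\Psi^*(\textup{I})$ is a nonnegative function of $\vec{P}$. Setting $\Phi^{*,(0)}_t(G) = T_t^* G T_t$ and
\begin{equation*}
\Phi^{*,(n+1)}_t(G) = T_t^* G T_t + \int_0^t dr \int_{\R^3} d\vec{q} \int_{(\vec{q})_\perp} d\vec{v}\, T_{t-r}^* L_{\vec{q},\vec{v}}^* e^{-\ii\frac{\vec{q}}{\hbar}\vec{X}} \Phi^{*,(n)}_r(G) e^{\ii\frac{\vec{q}}{\hbar}\vec{X}} L_{\vec{q},\vec{v}} T_{t-r},
\end{equation*}
one obtains a monotone sequence of completely positive maps on $\mathcal{B}(L^2(\R^3))^+$ bounded by $\|G\|\textup{I}$. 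Its pointwise weak*-limit $\Phi_t^*$ is a weak*-continuous semigroup of completely positive contractions satisfying~(\ref{IntEq}), and any other positive solution of~(\ref{IntEq}) dominates~$\Phi_t^*$.

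Next I would establish conservativity, $\Phi_t^*(\textup{I}) = \textup{I}$, which will automatically yield uniqueness, since any positive contraction-valued solution that dominates the minimal one and evaluates to $\textup{I}$ at $G=\textup{I}$ must coincide with it. The key observation is that the operators $T_t$, $L_{\vec{q},\vec{v}}$, and $\Psi^*(\textup{I})$ are all functions of $\vec{P}$, while the multipliers $e^{\ii\frac{\vec{q}}{\hbar}\vec{X}}$ act on such functions by translating the argument $\vec{P} \mapsto \vec{P} + \vec{q}$. Consequently, when $G = m(\vec{P})$ is a bounded Borel function of $\vec{P}$, every $\Phi^{*,(n)}_t(G)$ is again a function of $\vec{P}$, and the restriction of $\Phi_t^*$ to this abelian subalgebra defines a Markovian semigroup $\widehat{\Phi}_t$ on $L^\infty(\R^3)$. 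An explicit computation using the form of $L_{\vec{q},\vec{v}}$ from~(\ref{ElementKraus}) and the change of variable $\vec{p}_0 = \vec{p} - \vec{q}$ shows that the pre-adjoint on $L^1(\R^3)$ is governed precisely by an integral equation with generator $\mathcal{L}_0$ from~(\ref{MomDist}).

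Once this identification is in place, I would invoke the results of Sect.~\ref{SecMarMom} asserting that $\mathcal{L}_0$ generates a conservative Markovian semigroup on $\mathcal{T}_1$: the $L^1$-pre-adjoint preserves integrals, so the dual semigroup on $L^\infty$ satisfies $\widehat{\Phi}_t(1) = 1$. Reinterpreting $1 \in L^\infty(\R^3)$ as $\textup{I} \in \mathcal{B}(L^2(\R^3))$ gives $\Phi_t^*(\textup{I}) = \textup{I}$, hence conservativity and uniqueness. The translation covariance~(\ref{CovGroup}) is then inherited term-by-term from the covariance~(\ref{CovGen}) of the form generator: conjugation by $e^{\ii\vec{b}\vec{P}}$ leaves $T_t$ and the $L_{\vec{q},\vec{v}}$ invariant and acts on $e^{\ii\frac{\vec{q}}{\hbar}\vec{X}}$ only by a scalar Weyl phase that cancels between $L_{\vec{q},\vec{v}}$ and its adjoint in the Dyson iteration.

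The main obstacle I anticipate is the precise identification of the restricted classical generator with $\mathcal{L}_0$ on its natural core $\mathcal{T}_1$. The minimal solution a priori only satisfies the weak integral equation~(\ref{IntEq}), not a strong differential equation, and one must verify that when restricted to functions of $\vec{P}$ the resulting classical semigroup actually coincides with the strongly generated semigroup of $\mathcal{L}_0$. The standard remedy, following~\cite{Holevo}, is to exhibit a dense invariant core of pre-adjoint data (e.g.\ Schwartz functions in momentum) on which both the iteration and $\mathcal{L}_0$ produce the same time derivative, and then appeal to uniqueness of the classical semigroup. This is where the integrability encoded in $\mathcal{T}_1$, together with the polynomial growth of $\Psi^*(\textup{I})$ relative to $\vec{P}^2$, enters the argument.
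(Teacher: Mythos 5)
Your proposal is correct and follows essentially the same route as the paper: construct the minimal solution (the paper cites Davies rather than writing out the Dyson iteration), use the Holevo-type reduction to the abelian algebra of functions of $\vec{P}$, identify the restricted classical generator with $\mathcal{L}_{0}$ on a suitable core, and then obtain conservativity (hence uniqueness) from the ergodicity/stochasticity of $e^{t\mathcal{L}_{0}}$ established in Sect.~\ref{SecMarMom}, with covariance inherited from~(\ref{CovGen}). The only cosmetic differences are that the paper verifies invariance of the abelian algebra via the commutant of the Weyl translations and carries out the generator identification on $\mathcal{T}_{4}$ using the equivalence of the graph norms of $\mathcal{L}_{0}$ and $\frak{M}$, rather than on Schwartz functions.
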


\begin{proof} By~\cite{Davies}, a weak*-continuous semigroup of completely positive maps $\frak{T}_{t}^{(min),*}$ satisfying~(\ref{IntEq}) can be constructed if $A$ generates a strongly continuous contraction semigroup on $L^{2}(\R^{3})$, and the domain $\textup{D}(A)$ of $A$ is contained in the domain of $L_{\vec{q},\vec{v}}$ for all index values $\vec{q},\vec{v}$.  In my case, $A$ is a perturbation of $ -\frac{\ii}{\hbar}\frac{1}{2M}\vec{P}^{2}$ by $-\frac{\ii}{\hbar}H_{f}-\frac{1}{2}\Psi^{*}(\textup{I})$.  The operators $\Psi^{*}(\textup{I})= \mathcal{E}(\vec{P})$ and $H_{f}$ are both functions of the vector of momentum operators and are relatively bounded to $|\vec{P}|$ (see Lemmas~\ref{Horses} and~\ref{Leftover}).  Thus, $A$ is m-accretive with domain $\textup{D}(\vec{P}^{2})$ and generates a      
contractive semigroup.  Finally, the operators $L_{\vec{q},\vec{v}}$ are relatively bounded to $|\vec{P}|$ by Lem.~\ref{Leftover} and thus include $\textup{D}(\vec{P}^{2})$ in their domains.  It follows by~\cite{Davies}, there exists a solution  
$\frak{T}_{t}^{(min),*}$ to~(\ref{IntEq}) such that for any second  weak*-continuous collection of completely positive maps  $\frak{T}_{t}(G)$ solving~(\ref{IntEq}) that
$$\hspace{5cm} \frak{T}_{t}^{(min),*}(G)\leq \frak{T}_{t}(G)\leq \|G\|\textup{I}, \quad  \hspace{.7cm} \quad G\in \mathcal{B}\big(L^{2}(\R^{3})\big),\quad \, G\geq 0.
   $$    
The semigroup $\frak{T}_{t}^{(min),*}$ is referred to as the {\it minimal solution}. From~\cite{Holevo}, the relation~(\ref{CovGen}) has the consequence that the minimal solution $\frak{T}_{t}^{(min),*}$ satisfies the translation covariance relation~(\ref{CovGroup}).

  With the construction of $\frak{T}_{t}^{(min),*}$, the next question is of its uniqueness as a solution to~(\ref{IntEq}).  By~\cite{Davies}, the uniqueness of the solution $\frak{T}_{t}^{(min),*}$ is implied by  $\frak{T}_{t}^{(min),*}(\textup{I})=\textup{I}$ for all $t\geq 0$ (i.e. $\frak{T}_{t}^{(min),*}$ is conservative). 
As observed in~\cite{Holevo}, the translational covariance of $\frak{T}_{t}^{(min),*}$ reduces the problem of checking conservativity for the Lindblad dynamics to checking stochasticity for a classical Kolmogorov semigroup.  I prove below that the Kolmogorov semigroup is $e^{t\mathcal{L}_{0}^{*}}$.   By Prop.~\ref{RelBnds},  $e^{t\mathcal{L}_{0}}$ is ergodic to the stationary state $(2\pi M\beta^{-1}   )^{-\frac{1}{2}}   e^{-\frac{\beta}{2M}\vec{p}^{2}}$, and it follows that  $e^{t\mathcal{L}_{0}^{*}}(1_{\R^{3}})=1_{\R^{3}}$.  This would entail that $\Phi_{t}^{*} :=\frak{T}_{t}^{(min),*}$ is the unique weak*-continuous solution to~(\ref{IntEq}).

 The algebra $\mathcal{A}=\{ g(\vec{P})\,\big|\,g\in L^{\infty}(\R^{3})\}$ is invariant under the action of $\frak{T}_{t}^{(min),*}$, since bounded functions of $\vec{P}$ are the only elements in $\mathcal{B}\big(L^{2}(\R^{3})\big)$ that commute with the Weyl operators $e^{\ii\vec{b}\vec{P}}$ for all $\vec{b}\in \R^{3}$.  Hence, I can define a semigroup of maps $S_{t}^{*}: L^{\infty}(\R^{3})$ through 
$g_{t}=S_{t}^{*}(g)$ for
$$\hspace{5cm}g_{t}(\vec{P}):=\frak{T}_{t}^{(min),*}\big(g(\vec{P})\big),\hspace{3cm}  g\in L^{\infty}(\R^{3}).$$  The semigroup $S_{t}^{*}$ is contractive and maps positive functions to positive functions. Moreover, $S_{t}^{*}(1_{\R^{3}})=1_{\R^{3}}$ if and only if $\frak{T}_{t}^{(min),*}(\textup{I})=\textup{I}$.  Let  $L^{*}$ be the  generator for $S_{t}^{*}$ and $L$ be the adjoint  generating $S_{t}$.  Denote the dense domain of $L$ by $\textup{D}(L)\subset L^{1}(\R^{3})$.  I will show that $L=\mathcal{L}_{0}$. 

 By differentiating~(\ref{IntEq}) at $t=0$ for $G= g(\vec{P})$ with  $g\in L^{\infty}(\R^{3}) $,
\begin{align}\label{LindToMast}
  \frac{d}{dt}\Big|_{t=0}\int_{\R^{3}}d\vec{p} &\,y(\vec{p}) S_{t}^{*}(g)(\vec{p})\nonumber  \\ &= \int_{\R^{3}}d\vec{q}\int_{\R^{3}}d\vec{p}\,y(\vec{p})\mcj (\vec{p}+\vec{q},\vec{p})\,g(\vec{p}+\vec{q})-\int_{\R^{3}}d\vec{p}\, y(\vec{p})\int_{\R^{3}}d\vec{q}\,\mcj (\vec{p}+\vec{q},\vec{p})\,g(\vec{p})\nonumber \\ &=\int_{\R^{3}}d\vec{p}\,\mathcal{L}_{0}(y)(\vec{p})g(\vec{p}), 
\end{align}
where $y(\vec{p})= \phi(\vec{p})\psi(\vec{p})$.  The set of possible $y(\vec{p})$ with $\phi, \psi\in \textup{D}(\vec{P}^{2})$ is $\mathcal{T}_{4}$.  As a consequence of Part (2) of Lem.~\ref{RelBnds}, the generator $\mathcal{L}_{0}$ and the multiplication operator $(\frak{M}y)(\vec{p})=  |\vec{p}|  y(\vec{p})$ (having domain $\mathcal{T}_{1}$) are relatively bounded to each other.  In other words, there is a $c>0$ such that for all $y\in \mathcal{T}_{1}$ the graph norms for $\mathcal{L}_{0}$ and $\frak{M}$ satisfy  
\beq c^{-1}\big(\|y\|_{1} +\|\frak{M}(y)\|_{1}       \big)\leq \|y\|_{1}+\|\mathcal{L}_{0}(y)\|_{1}         \leq c\big( \|y\|_{1} + \|\frak{M}(y)\|_{1} \big) .    \eeq
This implies  the right side of~(\ref{LindToMast}) is bounded for $y\in  \mathcal{T}_{1}$.  Moreover, any closed operator agreeing with $\mathcal{L}_{0}$ on a subset of $\mathcal{T}_{1}$ that is dense in $\mathcal{T}_{1}$ with respect to the graph norm of $\frak{M}$ must be equal to $\mathcal{L}_{0}$.  Continuing with~(\ref{LindToMast}),  for all $y\in \mathcal{T}_{4},\, g\in L^{\infty}(\R^{3})$    
\begin{align} \label{LindToMast2}
\frac{d}{dt}\Big|_{t=0}\int_{\R^{3}}d\vec{p}\, y(\vec{p}) S_{t}^{*}(g)(\vec{p})&=\frac{d}{dt}\Big|_{t=0}\int_{\R^{3}}d\vec{p}\, S_{t}(y)(\vec{p}) g(\vec{p})\nonumber  \\  &= \int_{\R^{3}}d\vec{p}\, L(y)(\vec{p}) g(\vec{p})
\end{align}
Hence, $L^{*}(y)$ is the weak limit of $t^{-1}\big( S_{t}^{*}(y)-y\big)$ as $t\rightarrow 0$ for all $y\in \mathcal{T}_{4}$.  Consequently, by the remark~\cite[IV.1.5]{Kato}, $\mathcal{T}_{4}$ must be contained in the domain $\textup{D}(L)$.  Moreover (\ref{LindToMast}) and~(\ref{LindToMast2}) imply $L$ agrees with $\mathcal{L}_{0}$ for elements in $\mathcal{T}_{4}$.   Since $L$ generates a contractive semigroup, it must be closed.  Finally, $\mathcal{T}_{4}$ is dense in $\mathcal{T}_{1}$ with respect to the graph norm of $\frak{M}$, and therefore $\mathcal{L}_{0}=L$ by the remark above.

\end{proof}

\section{The fiber decomposition and detailed balance}\label{SecFiberDec}

In this section, I discuss a few symmetries and decompositions for the dynamics.  The dynamics also satisfies a rotation covariance, which I do not explain.

\subsection{Fiber maps }
The Lindblad generator $\mathcal{L}$ satisfies translation covariance with respect to position shifts $\vec{a}\in \R^{3}$
\begin{align}\label{EllCov}   
 \mathcal{L}(e^{-\ii\frac{\vec{a}}{\hbar}\vec{P}}\rho e^{\ii\frac{\vec{a}}{\hbar}\vec{P}})= e^{-\ii\frac{\vec{a}}{\hbar}\vec{P}}\mathcal{L}(\rho) e^{\ii\frac{\vec{a}}{\hbar}\vec{P}}.
 \end{align}
  A consequence of this symmetry is that the dynamics admits a fiber decomposition in which individual fibers obey autonomous dynamics.  Similar fiber decompositions can, for instance, be found in the study of periodic Schr\"odinger potentials~\cite[Ch.XIII.16]{ReedIV} and in the study of certain translation invariant noisy dynamics for a quantum particle living on $\Z^{d}$,~\cite{Schenker,Diff,DF}.  Also, the autonomous Markovian  evolution for the momentum distribution~(\ref{MomDist}) is a special case of the fiber decomposition and provides a useful tool for proving  a solution of an unbounded Lindblad equation is unique when the Lindbladian is translation-covariant~\cite{Holevo} (as discussed in the previous section). In this and future sections, I will denote  $\mathcal{T}_{n}=\{y\in L^{1}(\R^{3})\,\big|\,  \int_{\R^{3}} d\vec{p}\,|\vec{p}|^{n}|y(\vec{p})| <\infty \}$ and use the following notations for norms 
\begin{itemize}
\item    $\|y\|_{p}$ is the standard   $L^{p}(\R^{3})$ norm for $p\in[1,\infty]$.    

\item $\|\rho\|_{\mathbf{1}}:=\textup{Tr}[|\rho|]    $ is the trace norm  for elements in $\mathcal{B}_{1}\big(L^{2}(\R^{3})\big)$.

\item $\|A\|$ is the operator norm for elements $A\in\mathcal{B}\big(L^{p}(\R^{3})\big)$.  

\item  $\|y\|_{A}$ is the graph norm for a densely defined operator $A$ on $L^{1}(\R^{3})$: 
$$ \|y\|_{A}:=\|y\|_{1}+\|A y\|_{1}.    $$
\end{itemize}

Now, I define the fiber maps.  For $\vec{k}\in \R^{3}$, there is a contractive map $\rho\rightarrow [\rho]_{\vec{k}}$ that sends an element 
$\rho\in \mathcal{B}_{1}\big(L^{2}(\R^{3})\big)$ to an element $ L^{1}(\R^{3})$ satisfying the formal relation
 \begin{align}\label{DeftOnes}
 [\rho]_{\vec{k}}(\vec{p})=\rho(\vec{p}-2^{-1}\hbar\vec{k},\,\vec{p}+2^{-1}\hbar\vec{k}),
 \end{align} 
where the kernel of $\rho$ on the right side is in the momentum representation. A rigorous definition of $[\rho]_{\vec{k}}$ requires something more than the integral kernel of $\rho$, which is only defined a.e. $\R^{3}\times \R^{3}$.  However, I can define the contractions $[\cdot]_{\vec{k}}:\mathcal{B}_{1}\big(L^{2}(\R^{3})\big)\rightarrow L^{1}(\R^{3})$ through the Riesz representation theorem.  Let $g\in L^{\infty}(\R^{3})$, then  by standard properties of the trace
\begin{align} \label{BndFib}
\big|\Tr\big[ e^{\ii\vec{k}\vec{X}} g(\vec{P}+2^{-1}\hbar\vec{k})\rho\big]\Big| &\leq \|g(\vec{P})\|\,\|\rho\|_{\mathbf{1}}\nonumber \\ &=\|g\|_{\infty}\|\rho\|_{\mathbf{1}}.\end{align}
By the Riesz representation theorem, there is a unique complex-valued measure $\mu_{\vec{k},\rho}$ on $\R^{3}$ with  total variation $\leq \|\rho\|_{\mathbf{1}}$  and satisfying the first equality below for all $g \in  L^{\infty}(\R^{3})$.
\begin{align} \label{Reisz}
\Tr[e^{\ii\vec{k}\vec{X}} g(\vec{P}+2^{-1}\hbar\vec{k})\rho]= & \int_{\R^{3}}\mu_{\vec{k},\rho}(d\vec{p})\,g(\vec{p})\nonumber \\ = & \int_{\R^{3}}d\vec{p}\,[\rho]_{\vec{k}}(\vec{p})\,g(\vec{p})
\end{align}
  The continuous spectrum of the family $P_{j}$, $j=1,2,3$ implies the measure $\mu_{\vec{k},\rho}$ must be absolutely continuous with respect to Lebesgue measure, and I denote its Radon-Nikodym derivative by $[\rho]_{\vec{k}}\in L^{1}(\R^{3})$ in~(\ref{Reisz}).  It is clear from~(\ref{DeftOnes})  the functions $[\rho]_{\vec{k}}$, $\vec{k}\in \R^{3}$ also determine the density matrix $\rho$.

The fiber decomposition for the dynamics $\Phi_{t}$ can be characterized by the existence of semigroups $\widetilde{\Phi}_{t}^{(\vec{k})}:L^{1}(\R^{3})$ such that 
\beq \label{ForFib}
[\Phi_{t}(\rho)]_{\vec{k}}=\widetilde{\Phi}_{t}^{(\vec{k})}[\rho]_{\vec{k}}.   
\eeq
The relation~(\ref{ForFib}) is equivalent  to noticing that $\Phi_{t}^{*}$ maps the Banach space
 $B_{\vec{k}}=\{e^{\ii\vec{k}\vec{X}  }g(\vec{P}),\, g\in L^{\infty}(\R^{3}) \}\subset \mathcal{B}_{\infty}\big(L^{2}(\R^{3})\big) $ into itself for each $\vec{k}$ with   
 \beq  \label{SailorMoon} \Phi_{t}^{*}\big( e^{\ii\vec{k}\vec{X}  }g(\vec{P}+2^{-1}\hbar\vec{k} ) \big)=   \widetilde{\Phi}_{t}^{(\vec{k}),*}(g)(\vec{P}+2^{-1}\hbar\vec{k}).    
 \eeq 
where $  \widetilde{\Phi}_{t}^{(\vec{k}),*}:L^{\infty}(\R^{3})$ is the adjoint semigroup of $\widetilde{\Phi}_{t}^{(\vec{k})}$.
 
 \begin{lemma}\label{TrivialContract}
For all $\vec{k}\in \R^{3}$, the semigroups $\widetilde{\Phi}_{t}^{(\vec{k})}:L^{1}(\R^{3})$ are strongly continuous and contractive.  Consequently, the generator   $\mathcal{L}_{\vec{k}}$ of the semigroup $\widetilde{\Phi}_{t}^{(\vec{k})}$ is a closed operator with dense domain $\textup{D}( \mathcal{L}_{\vec{k}})\subset L^{1}(\R^{3})$, and all $z\in \mathbb{C}$ with $\textup{Re}(z)>0$ belong to the resolvent set of $\mathcal{L}_{\vec{k}}$ with
\begin{align}\label{ResProp}
\big\|\big( \mathcal{L}_{\vec{k}} -z    \big)^{-1} \big\|\leq \frac{1}{\textup{Re}(z)   }. 
\end{align}
 
\end{lemma}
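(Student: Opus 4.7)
The plan is to deduce both semigroup properties of $\widetilde{\Phi}_{t}^{(\vec{k})}$ directly from the corresponding properties established for $\Phi_{t}$ in Thm.~\ref{ExistenceLemma}, by exploiting the intertwining relations (\ref{ForFib}) and (\ref{SailorMoon}). Once contractivity and strong continuity are in hand, closedness of $\mathcal{L}_{\vec{k}}$, density of its domain, and the resolvent bound (\ref{ResProp}) all follow from the standard Hille--Yosida package together with the Laplace-transform formula for the resolvent.

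For contractivity I would argue on the dual side. Since $\Phi_{t}^{*}$ is completely positive and sends the identity to itself, it is a contraction in operator norm on $\mathcal{B}\bigl(L^{2}(\R^{3})\bigr)$. Combined with (\ref{SailorMoon}), the unitarity of $e^{\ii\vec{k}\vec{X}}$, and the isometric identification $\|h(\vec{P})\|=\|h\|_{\infty}$ for bounded $h$, this yields
\begin{equation*}
\|\widetilde{\Phi}_{t}^{(\vec{k}),*}(g)\|_{\infty}=\bigl\|\Phi_{t}^{*}\bigl(e^{\ii\vec{k}\vec{X}}g(\vec{P}+2^{-1}\hbar\vec{k})\bigr)\bigr\|\leq \|g\|_{\infty}.
\end{equation*}
Since $\widetilde{\Phi}_{t}^{(\vec{k})}$ is the pre-adjoint of $\widetilde{\Phi}_{t}^{(\vec{k}),*}$, contractivity transfers to the predual and gives $\|\widetilde{\Phi}_{t}^{(\vec{k})}(y)\|_{1}\leq \|y\|_{1}$ for every $y\in L^{1}(\R^{3})$.

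For strong continuity, recall from the remark preceding Thm.~\ref{ExistenceLemma} that weak$^{*}$-continuity of $\Phi_{t}^{*}$ is equivalent to strong continuity of $\Phi_{t}$ on $\mathcal{B}_{1}\bigl(L^{2}(\R^{3})\bigr)$. Since the map $\rho\mapsto [\rho]_{\vec{k}}$ is a contraction from $\mathcal{B}_{1}$ to $L^{1}(\R^{3})$ by (\ref{BndFib}), the relation (\ref{ForFib}) gives
\begin{equation*}
\|\widetilde{\Phi}_{t}^{(\vec{k})}[\rho]_{\vec{k}}-[\rho]_{\vec{k}}\|_{1}\leq \|\Phi_{t}(\rho)-\rho\|_{\mathbf{1}}\longrightarrow 0\quad\text{as}\quad t\to 0^{+},
\end{equation*}
so the semigroup is strongly continuous on the range of $[\cdot]_{\vec{k}}$. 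The key technical point is to verify that this range is dense in $L^{1}(\R^{3})$: for any bounded, compactly supported $y\in L^{1}\cap L^{\infty}$ I would select $\phi\in L^{2}(\R^{3})$ whose value at $\vec{p}+2^{-1}\hbar\vec{k}$ equals $1$ on $\supp y$ and set $\psi(\vec{q}):= y(\vec{q}+2^{-1}\hbar\vec{k})$; a direct computation from (\ref{Reisz}) then shows $[|\psi\rangle\langle\phi|]_{\vec{k}}=y$. Since such $y$ are dense in $L^{1}(\R^{3})$ and the semigroup is uniformly bounded, a standard $3\epsilon$ argument promotes strong continuity to all of $L^{1}(\R^{3})$. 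This density verification is the only delicate step and is the place I expect to put the most care.

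With strong continuity and contractivity in place, the Hille--Yosida theorem gives that $\mathcal{L}_{\vec{k}}$ is closed and densely defined and that $\{z\in\C:\textup{Re}(z)>0\}$ lies in its resolvent set. Writing the resolvent as the Laplace transform $(\mathcal{L}_{\vec{k}}-z)^{-1}=-\int_{0}^{\infty}e^{-zt}\widetilde{\Phi}_{t}^{(\vec{k})}\,dt$ and applying contractivity of each $\widetilde{\Phi}_{t}^{(\vec{k})}$ under the integral yields the bound (\ref{ResProp}).
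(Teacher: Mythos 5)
Your proposal is correct, and the contractivity half is essentially the paper's own argument: dualize, use the identification $\|h(\vec{P})\|=\|h\|_{\infty}$, the defining relation~(\ref{SailorMoon}), and the fact that $\Phi_{t}^{*}$ is a contraction. Where you diverge is the strong-continuity step. The paper stays on the adjoint side: it asserts that the weak$^{*}$-continuity of $\Phi_{t}^{*}$ transfers ``by similar reasoning'' to $\widetilde{\Phi}_{t}^{(\vec{k}),*}$, and then invokes the pre-adjoint duality (the same \cite[Cor.3.18]{Bratteli} equivalence used for $\Phi_{t}$ itself) to conclude strong continuity of $\widetilde{\Phi}_{t}^{(\vec{k})}$. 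You instead work directly in $L^{1}(\R^{3})$: strong continuity of $\Phi_{t}$, the contraction property~(\ref{BndFib}) of the fiber map, and linearity give $\|\widetilde{\Phi}_{t}^{(\vec{k})}[\rho]_{\vec{k}}-[\rho]_{\vec{k}}\|_{1}\leq\|\Phi_{t}(\rho)-\rho\|_{\mathbf{1}}$, and you then extend by density of the range of $[\cdot]_{\vec{k}}$ plus uniform boundedness. Your rank-one construction $[\,|\psi\rangle\langle\phi|\,]_{\vec{k}}=\psi(\vec{p}-2^{-1}\hbar\vec{k})\,\overline{\phi(\vec{p}+2^{-1}\hbar\vec{k})}$ does hit every bounded compactly supported $y$, so the density step goes through. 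The trade-off: your route is more explicit and self-contained (it makes visible the surjectivity-onto-a-dense-set of the fiber map, which the paper's terse ``similar reasoning'' would also implicitly rely on when testing weak$^{*}$-convergence against arbitrary $y\in L^{1}$), while the paper's route is shorter because it reuses the adjoint/pre-adjoint machinery verbatim. The Hille--Yosida consequences and the Laplace-transform bound for~(\ref{ResProp}) are handled identically in both.
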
 
 \begin{proof}
 It is equivalent to establish the contraction property for the adjoint semigroup $\widetilde{\Phi}_{t}^{(\vec{k}),*}$.  The supremum norm of an element $f\in L^{\infty}(\R)$ is equal to the operator norm of $f(\vec{P})\in \mathcal{B}\big(L^{2}(\R^{3})\big)$, so I have the first equality below for  $ g\in L^{\infty}(\R) $:
\begin{align*}
\|\widetilde{\Phi}_{t}^{(\vec{k}),*}g \|_{\infty}&=\|\widetilde{\Phi}_{t}^{(\vec{k}),*}(g)(\vec{P}) \|\\ &= \| \Phi_{t}^{*}\big(e^{\ii\vec{k}\vec{X}  }g(\vec{P}+2^{-1}\hbar \vec{k})\big)\|\\  &\leq  \|g(\vec{P}+2^{-1}\hbar \vec{k})\|   =\|g\|_{\infty}.  
\end{align*}
The second equality above is by the defining relation~(\ref{SailorMoon}), and the inequality is due to $\Phi_{t}^{*}$ being a contraction.  By similar reasoning, the weak*-continuity of $\Phi_{t}^{*}$ implies weak*-continuity for $\widetilde{\Phi}_{t}^{(\vec{k}),*}$, and consequently the pre-adjoint semigroup $\widetilde{\Phi}_{t}^{(\vec{k})}$ is strongly continuous.

The consequences for the generator $\mathcal{L}_{\vec{k}} $ and its resolvents are standard and can be found in~\cite[Sec.IX.2-3]{Kato} or~\cite[Sec.XIII.9-10]{Feller}.

\end{proof}

By the same reasoning as for $\mathcal{L}_{0}$ in the proof of Thm.~\ref{ExistenceLemma}, the domain $\textup{D}(\mathcal{L}_{\vec{k}})$ must include $\mathcal{T}_{4}$. An implication of Prop.~\ref{RelBnds} is that the domain contains $\mathcal{T}_{1}$ for all $\vec{k}$.    The operator  $\mathcal{L}_{\vec{k}}$  is a sum of terms
\beq \label{FiberGen0}
\mathcal{L}_{\vec{k}}= -\frac{\ii}{\hbar}h_{\vec{k}}+        \mathcal{J}_{\vec{k}}-\mathcal{E}_{\vec{k}}
\eeq
where the terms $h_{\vec{k}},\,\mathcal{J}_{\vec{k}},\,\mathcal{E}_{\vec{k}}:\mathcal{T}_{1}\rightarrow L^{2}(\R^{3})$ arise from the Hamiltonian, and  gain and loss parts of the Lindblad generator~(\ref{MainDyn}), respectively.  The terms above are determined analogously to $\widetilde{\Phi}_{t}^{(\vec{k})}$ in~(\ref{ForFib}) such that for $\rho$ in a dense domain of $\mathcal{B}_{1}\big( L^{2}(\R^{3})   \big)$,
$$
 h_{\vec{k}}[\rho]_{\vec{k}}= \big[  [H,\rho]\big]_{\vec{k}}, \quad  \quad \mathcal{J}_{\vec{k}}[\rho]_{\vec{k}}= \big[  \Psi(\rho)  \big]_{\vec{k}},\quad \quad \mathcal{E}_{\vec{k}}[\rho]_{\vec{k}}=\big[ 2^{-1}\{\Psi^{*}(\textup{I}),\rho\}  \big]_{\vec{k}}.
 $$
   Formally, I have 
\begin{eqnarray}
  h_{\vec{k}}(f)(\vec{p}) &=& \big(H(\vec{p}-2^{-1}\hbar\vec{k})- H(\vec{p}+2^{-1}\hbar\vec{k})       \big)f(\vec{p}) \nonumber \\  \mathcal{J}_{\vec{k}}(f)(\vec{p})&=&\int_{\R^{3}}d\vec{p}_{0}\,\mcj_{\vec{k}}(\vec{p},\vec{p}_{0})f(\vec{p}_{0}) \label{FiberGen1} \\
\mathcal{E}_{\vec{k}}(f)(\vec{p})&=& 2^{-1}\big(\mathcal{E}(\vec{p}-2^{-1}\hbar\vec{k} ) + \mathcal{E}(\vec{p}+2^{-1}\hbar\vec{k} )    \big)f(\vec{p}) \nonumber
\end{eqnarray}
where $\mathcal{E}(\vec{p})=\int_{\R^{3}}d\vec{p}_{0}\,\mathcal{J}_{0}(\vec{p}_{0},\vec{p})$, and   $\mcj_{\vec{k}}(\vec{p},\vec{p}_{0})$ is complex-valued  pseudo jump kernel having the form
$$\mcj_{\vec{k}}(\vec{p},\vec{p}_{0})= M_{\vec{p}-\vec{p}_{0} }\big(\vec{p}_{0}-2^{-1}\hbar\vec{k},\,\vec{p}_{0}+2^{-1}\hbar\vec{k}\big).    $$

The jump ``rates" $\mcj_{\vec{k}}(\vec{p},\vec{p}_{0})$ can be written more explicitly as
\begin{align}\label{OffRates}
\mcj_{\vec{k}}(&\vec{p}+\vec{q}  , \vec{p} )\nonumber   \\ &=M_{\vec{q}}\big(\vec{p}-2^{-1}\hbar\vec{k},\,\vec{p}+2^{-1}\hbar\vec{k}\big)\nonumber   \\ &= \int_{(\vec{q})_{\perp}}d\vec{v}\,L_{\vec{q},\vec{v}}(\vec{p}-2^{-1}\hbar\vec{k})\overline{L}_{\vec{q},\vec{v}}(\vec{p}+2^{-1}\hbar\vec{k})   \, \nonumber   \\ &= \eta\frac{m}{m_{*}^{2}}|\vec{q}|^{-1}\,e^{-\frac{\beta m \hbar^{2} }{8M^{2}}\vec{k}_{\parallel \vec{q}}^{2}   }\int_{(\vec{q})_{\perp}}d\vec{v}\,   r\big(\vec{v}+2^{-1}\frac{m}{m_{*}}\vec{q}+\frac{m}{M}\vec{p}_{\parallel \vec{q}}    \big)\mathbf{f} \big(\mathbf{Q}_{-},\,\Theta_{-}\big)     \,\overline{\mathbf{f}}\big(\mathbf{Q}_{+},\,\Theta_{+}\big)  
\end{align}
where $\mathbf{Q}_{\pm}$ and  $\Theta_{\pm}$ are defined as 
\begin{eqnarray*}
\mathbf{Q}_{\pm}&=&\mathbf{Q}_{\pm}(\vec{q},\vec{v},\vec{p},\vec{k}  )=\big( \big(\frac{m_{*}}{m}\vec{v}-\frac{m_{*}}{M}(\vec{p}_{\perp q}\pm 2^{-1}\hbar\vec{k}_{\perp q}   )   \big)^{2}   + 4^{-1}\vec{q}^{2}     \big)^{\frac{1}{2}},  \\  \Theta_{\pm} &=&\Theta_{\pm}(\vec{q},\vec{v},\vec{p},\vec{k}  )=2\tan^{-1}\Big(\frac{\big|2^{-1}\vec{q}  \big|}{\big|\frac{m_{*}}{m}\vec{v}-\frac{m_{*}}{M}(\vec{p}_{\perp q}\pm 2^{-1}\hbar\vec{k}_{\perp q}   )  \big|}    \Big), \\
\vec{z}_{\pm}&=&\vec{z}_{\pm}(\vec{q},\vec{v},\vec{p},\vec{k}  )= \vec{v}+2^{-1}\frac{m}{m_{*}}\vec{q}+\frac{m}{M}(\vec{p}_{\parallel \vec{q}}   \pm 2^{-1}\hbar\vec{k}_{\parallel \vec{q}}),
\end{eqnarray*}
and I have used the simple relation
\begin{align}\label{ARelate}
r(\vec{z}_{-} )^{\frac{1}{2}}r\big(\vec{z}_{+}   \big)^{\frac{1}{2}}  =e^{-\frac{\beta m \hbar^{2}}{8M^{2}}\vec{k}_{\parallel \vec{q}}^{2}   }\, r\big(\vec{v}+2^{-1}\frac{m}{m_{*}}\vec{q}+\frac{m}{M}\vec{p}_{\parallel \vec{q}}    \big).
\end{align}

\subsection{Detailed balance}\label{RemarkDetailedBalance}

  There is  also a detailed balance symmetry for the noise map $\Psi$.  The  function  $\textup{T}_{\vec{q}}(p_{1},p_{2})$ for the multiplication maps $\mathbf{T}_{\vec{q}}:\mathcal{B}_{1}\big(L^{2}(\R^{3})\big)$ (defined in~(\ref{Kraus})) can be written in the form
 \begin{align}\label{DetailedEms}
  \textup{T}_{\vec{q}}(\vec{p}_{1},\vec{p}_{2})=   \frac{ \upsilon_{\infty}^{\frac{1}{4}}(\vec{p}_{1}+\vec{q})\upsilon_{\infty}^{\frac{1}{4}}(\vec{p}_{2}+\vec{q})    }{ \upsilon_{\infty}^{\frac{1}{4}}(\vec{p}_{1})\upsilon_{\infty}^{\frac{1}{4}}(\vec{p}_{2})   } \gamma_{\vec{q}}(\vec{p}_{1},\vec{p}_{2}) ,
  \end{align}
 where  $\gamma_{\vec{q}}(\vec{p}_{1},\vec{p}_{2})$ is the kernel for a positive operator on the Hilbert space $L^{2}(\R^{3})$ for each $\vec{q}$ and satisfies the symmetry
 \begin{align}\label{GammaSymmetry}
   \gamma_{\vec{q}}(\vec{p}_{1},\vec{p}_{2})= \gamma_{-\vec{q}}(\vec{p}_{1}+\vec{q},\vec{p}_{2}+\vec{q}).
  \end{align}
The explicit form for $\gamma_{\vec{q}}(\vec{p}_{1},\vec{p}_{2})$ can easily be worked out, and along the diagonal, it reduces to $  \gamma_{\vec{q}}(\vec{p},\vec{p})= A(\vec{p}+\vec{q},\vec{p})  $ for the kernel $A$ defined in~(\ref{JumpRatesAgain}).
  
 The above   translates into a decomposition for the noise map $\Psi$.   Let $\Upsilon:\mathcal{B}_{1}\big(L^{2}(\R^{3})\big) $ by given by
  $\Upsilon (\rho)  :=  \upsilon_{\infty}^{\frac{1}{4}}(\vec{P})\,\rho \,\upsilon_{\infty}^{\frac{1}{4}}(\vec{P})   $, and formally define 
  $\Gamma=  \Upsilon^{-1}\Psi \Upsilon  $.  The above decomposition for the maps $\mathbf{T}_{\vec{q}}$ implies that $\Gamma$  satisfies $$\hspace{4cm}\overline{\Gamma^{*}(\overline{\rho} )}=\Gamma(\rho),  \hspace{2cm}\rho\in \mathcal{B}_{1}\big(L^{2}(\R^{3})\big), $$ where the bars refer to  complex conjugation is in the momentum representation.  We do use this formal relation, although the symmetry~(\ref{GammaSymmetry}) is useful for seeing a cancellation  in the proof of Prop.~\ref{Perturbation}.

\section{ The ``marginal" momentum dynamics    }\label{SecMarMom}

For a translation-invariant linear Boltzmann dynamics, the  momentum is a Markovian process.  Thus, the marginal probability density for the momentum obtained by integrating out the spacial degrees of freedom is governed by an autonomous master equation.  The quantum situation is similar in this respect as seen in the classical Kolmogorov equation~(\ref{MomDist}) having jump kernel $\mathcal{J}(\vec{p}_{2},\vec{p}_{1})$, which corresponds to the fiber decomposition from the last section for the $\vec{k}=0$ case (I drop the index for $\vec{k}=0$).  I denote the gain and loss terms for the Markov generator $\mathcal{L}_{0}$ by $\mathcal{J}$ and $\mathcal{E}$, respectively. In this section, I will characterize the ergodicity for the classical stochastic semigroup $e^{t\mathcal{L}_{0}}$ and collect some analytic facts about the generator $\mathcal{L}_{0}$.  As mentioned in Sect.~\ref{SecDis}, the momentum distribution is exponentially ergodic to a Gaussian with variance $\frac{M}{\beta}$, and that statement is made more precise below in Prop.~\ref{RelBnds}.  In the analysis, I rely on soft consequences of Nussenzveig's results in~\cite{Nussen2} to ensure  the scattering cross section $\frac{d\sigma}{d\Omega}(\mathbf{p},\theta)$ does not vanish for $\theta\in (0,\pi)$ bounded away from the endpoints    as $\textup{p}\nearrow \infty$.  
 
  The following lemma gives upper and lower bounds for the escape rates $\mathcal{E}(\vec{p})= \int_{\R^{3}}d\vec{p}_{0} \mathcal{J}(\vec{p}_{0},\vec{p})$.  An implication is that the operator $\Psi^{*}(\textup{I})=\mathcal{E}(\vec{P})$ acting on the Hilbert space $L^{2}(\R^{3})$ is relatively bounded to $|\vec{P}|$.  As I mentioned in the introduction, the total cross section $\sigma_{\textup{tot}}(\mathbf{p})>0$ appearing in the bounds of Lem.~\ref{Horses} approaches $4\pi a^{2}$ for $\mathbf{p}\ll 1$ and $2\pi a^{2}$ for $\mathbf{p}\gg 1$.

\begin{lemma}\label{Horses}
The escape rate $\mathcal{E}(p)$ has the following upper and lower bounds
$$ \underline{\mathcal{E}}\vee (\underline{\mathcal{F}}\,|\vec{p}|)    \leq \mathcal{E}(\vec{p})\leq \overline{\mathcal{E}}+\overline{\mathcal{F}}\,|\vec{p}|, $$
where the constants $\underline{\mathcal{E}},\overline{\mathcal{E}},\underline{\mathcal{F}},\overline{\mathcal{F}}$  are defined as
\begin{eqnarray*}
\underline{\mathcal{E}}&=& \big(\inf_{\mathbf{p}\in \R_{+} }\sigma_{\textup{tot}}(\mathbf{p})\big)\frac{2\eta}{m_{*}}\big(\frac{m_{*}  }{m  }\big)^{4} \big(\frac{2m}{\pi\beta}   \big)^{\frac{1}{2} },  \quad  \underline{\mathcal{F}}= \big(\inf_{\mathbf{p}\in \R_{+}}\sigma_{\textup{tot}}(\mathbf{p})\big)\frac{\eta}{M}\big(\frac{m_{*}  }{m  }\big)^{3},  \\  \overline{\mathcal{E}}&=& \big(\sup_{\mathbf{p}\in \R_{+}}\sigma_{\textup{tot}}(\mathbf{p})\big)\frac{2\eta}{m_{*}}\big(\frac{m_{*}  }{m  }\big)^{4} \big(\frac{2m}{\pi\beta}   \big)^{\frac{1}{2} } , \quad \overline{\mathcal{F}}= \big(\sup_{\mathbf{p}\in \R_{+}}\sigma_{\textup{tot}}(\mathbf{p})\big)\frac{\eta}{M}\big(\frac{m_{*}  }{m  }\big)^{3} .
\end{eqnarray*}

\end{lemma}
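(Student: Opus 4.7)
The plan is to reduce $\mathcal{E}(\vec{p})=\int_{\R^{3}}d\vec{q}\,\mcj(\vec{p}+\vec{q},\vec{p})$ to the classical collision-frequency formula
\[
\mathcal{E}(\vec{p})=\eta\int_{\R^{3}}d\vec{k}\,r(\vec{k})\,|\vec{v}_{\textup{rel}}|\,\sigma_{\textup{tot}}(|\vec{p}_{\textup{rel}}|),\qquad \vec{v}_{\textup{rel}}=\tfrac{\vec{k}}{m}-\tfrac{\vec{p}}{M},\quad \vec{p}_{\textup{rel}}=m_{*}\vec{v}_{\textup{rel}},
\]
and then to replace $\sigma_{\textup{tot}}$ by its infimum or supremum over $\R_{+}$, both of which are finite and strictly positive by the hard-sphere asymptotics in~\cite{Nussen2} recalled in Sect.~\ref{SecDis}. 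Once that reduction is in hand, the claimed bounds follow from elementary moment estimates for the Maxwell--Boltzmann density $r$.

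To obtain the reduction I would insert the explicit kernel~(\ref{JumpRates}) and split $\vec{q}=\mathbf{q}\hat{q}$ in spherical coordinates, so that $d\vec{q}/|\vec{q}|=\mathbf{q}\,d\mathbf{q}\,d\hat{q}$. For each fixed $\hat{q}$, I would change variables $(\mathbf{q},\vec{v})\mapsto \vec{k}$ on $\{\mathbf{q}>0\}\times (\hat{q})_{\perp}$ by setting $\vec{k}_{\perp\hat{q}}=\vec{v}$ and $\vec{k}\cdot\hat{q}=\tfrac{m}{M}\vec{p}\cdot\hat{q}+\tfrac{m}{2m_{*}}\mathbf{q}$; this is precisely the inversion of the argument of $r$ in~(\ref{ElementKraus}), has Jacobian $\tfrac{2m_{*}}{m}$, and yields the clean identity $\mathbf{q}=2\,\vec{p}_{\textup{rel}}\cdot\hat{q}$, so the admissible hemisphere becomes $\{\hat{q}:\vec{p}_{\textup{rel}}\cdot\hat{q}>0\}$. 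Parameterizing that hemisphere by spherical coordinates $(\alpha,\phi)$ with pole along $\vec{p}_{\textup{rel}}$, the scattering angle between $\vec{p}_{\textup{rel}}$ and $\vec{p}_{\textup{rel}}-\vec{q}$ is $\theta=\pi-2\alpha$, and the identity $\sin\alpha\cos\alpha=\tfrac{1}{2}\sin\theta$ collapses the hemispherical integral of $(\vec{p}_{\textup{rel}}\cdot\hat{q})\,|\mathbf{f}|^{2}$ into $\tfrac{|\vec{p}_{\textup{rel}}|}{4}\sigma_{\textup{tot}}(|\vec{p}_{\textup{rel}}|)$, producing the displayed formula.

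With $\mathcal{E}(\vec{p})$ in this form, the bounds drop out quickly. For the upper bound, the triangle inequality gives $|\vec{v}_{\textup{rel}}|\leq |\vec{k}|/m+|\vec{p}|/M$ and the standard Gaussian moment $\int r(\vec{k})|\vec{k}|\,d\vec{k}=2\sqrt{2m/(\pi\beta)}$ supplies the $\overline{\mathcal{E}}+\overline{\mathcal{F}}|\vec{p}|$ structure after bounding $\sigma_{\textup{tot}}$ above by $\sup_{\mathbf{p}\in\R_{+}}\sigma_{\textup{tot}}(\mathbf{p})$. For the lower bound I would combine two applications of Jensen's inequality: first, since $\int r(\vec{k})\,\vec{v}_{\textup{rel}}\,d\vec{k}=-\vec{p}/M$, Jensen yields $\int r|\vec{v}_{\textup{rel}}|\,d\vec{k}\geq |\vec{p}|/M$, giving the linear branch $\underline{\mathcal{F}}\,|\vec{p}|$; second, because $|\vec{k}/m-\vec{p}/M|$ is convex in $\vec{p}$ and the integral is even in $\vec{p}$ (by $r(-\vec{k})=r(\vec{k})$), its value at $\vec{p}=0$ is the global minimum, giving $\int r|\vec{v}_{\textup{rel}}|\,d\vec{k}\geq \int r(\vec{k})|\vec{k}|/m\,d\vec{k}$ and hence the constant branch $\underline{\mathcal{E}}$. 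Taking the maximum of these two lower bounds recovers the $\underline{\mathcal{E}}\vee(\underline{\mathcal{F}}\,|\vec{p}|)$ shape.

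The main obstacle is the bookkeeping inside the change of variables $(\vec{q},\vec{v})\mapsto (\vec{k},\hat{q})$: identifying $\vec{v}$ with $\vec{k}_{\perp\hat{q}}$, computing the Jacobian $\tfrac{2m_{*}}{m}$, and correctly deriving $\theta=\pi-2\alpha$. Once that reduction is established, matching the precise powers of $m_{*}/m$ in $\underline{\mathcal{E}},\overline{\mathcal{E}},\underline{\mathcal{F}},\overline{\mathcal{F}}$ to the Maxwell--Boltzmann moment and the extremal values of $\sigma_{\textup{tot}}$ is direct algebra.
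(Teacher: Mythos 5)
Your overall strategy coincides with the paper's: both reduce $\mathcal{E}(\vec{p})$ to a Maxwell--Boltzmann average of $|\vec{p}_{\textup{rel}}|\,\sigma_{\textup{tot}}(|\vec{p}_{\textup{rel}}|)$, bound $\sigma_{\textup{tot}}$ by its supremum/infimum, and then get the upper bound from the triangle inequality plus the Gaussian first moment and the lower bound from Jensen (linear branch) together with evenness/convexity in $\vec{p}$ (constant branch) --- exactly the paper's endgame. The only methodological difference is the reduction itself: the paper invokes the delta-function identity $\int d\vec{p}_{\textup{rel}}\,|\vec{p}_{\textup{rel}}|\int_{\Omega}=\frac{m}{m_{*}}\int\frac{d\vec{q}}{|\vec{q}|}\int_{(\vec{q})_{\perp}}d\vec{v}$ from~(\ref{HereToFore}) (following \cite{VaccHorn}), while you carry out the substitution by hand for each fixed $\hat{q}$; your bookkeeping there checks out ($d\mathbf{q}\,d\vec{v}=\frac{2m_{*}}{m}d\vec{k}$, $\mathbf{q}=2\vec{p}_{\textup{rel}}\cdot\hat{q}$, $\theta=\pi-2\alpha$, and the weighted hemispherical integral does collapse to $\tfrac{1}{4}|\vec{p}_{\textup{rel}}|\,\sigma_{\textup{tot}}$).

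The genuine problem is your closing claim that ``matching the precise powers of $m_{*}/m$ \dots is direct algebra.'' It is not: from your formula $\mathcal{E}(\vec{p})=\eta\int d\vec{k}\,r(\vec{k})\,|\vec{v}_{\textup{rel}}|\,\sigma_{\textup{tot}}$ the algebra produces the stated constants multiplied by $(m/m_{*})^{3}$, e.g.\ a slope $\eta\,(\inf_{\mathbf{p}}\sigma_{\textup{tot}})/M$ with no factor $(\frac{m_{*}}{m})^{3}$, and a constant term $\frac{2\eta}{m}(\inf_{\mathbf{p}}\sigma_{\textup{tot}})(\frac{2m}{\pi\beta})^{1/2}$ instead of $\underline{\mathcal{E}}$. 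The source of the mismatch is the reduction step: your parametrization yields $\int\frac{d\vec{q}}{|\vec{q}|}\int_{(\vec{q})_{\perp}}d\vec{v}=\frac{m^{2}}{m_{*}^{2}}\int d\vec{p}_{\textup{rel}}\,|\vec{p}_{\textup{rel}}|\int_{\Omega}$, whereas the paper's chain gives the coefficient $\frac{m}{m_{*}}$ on the other side; the discrepancy is precisely the Jacobian $(\frac{m_{*}}{m})^{3}$ of $\vec{v}\mapsto\frac{m_{*}}{m}\vec{v}-\frac{m_{*}}{M}\vec{p}_{\perp\vec{q}}$ in the step passing from $\int d\vec{p}_{\textup{rel}}d\vec{p}_{\textup{rel},0}\,\delta(\cdot)$ to $\int d\vec{q}\,d\vec{v}\,\delta(\frac{m_{*}}{m}\vec{q}\cdot\vec{v})$ in~(\ref{HereToFore}), which the displayed chain does not record. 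As far as I can verify (for instance, evaluating $\mathcal{E}(0)$ directly from~(\ref{JumpRates}) with a constant differential cross section gives $\frac{2\eta}{m}\sigma_{\textup{tot}}(\frac{2m}{\pi\beta})^{1/2}$, the classical collision frequency), your version is the correct evaluation of $\mathcal{E}$; consequently your argument proves a stronger lower bound than stated but does not deliver the stated upper bound, whose constants appear to carry a spurious factor $(\frac{m_{*}}{m})^{3}$. You must therefore either reconcile your reduction with the constants in the statement or state explicitly that you obtain the bounds with constants $(m/m_{*})^{3}\,\underline{\mathcal{E}}$, etc.; the factor is harmless for everything downstream, since only bounds of the form $c_{1}\vee c_{2}|\vec{p}|\leq\mathcal{E}(\vec{p})\leq c_{3}+c_{4}|\vec{p}|$ with positive constants are ever used, but as a proof of the lemma as literally stated your proposal does not close without addressing it.
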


\begin{proof}
  The escape rate $\mathcal{E}(\vec{p})$ can be written
\begin{align}\label{HereToFore}
\mathcal{E}(\vec{p})&= \eta \frac{m}{m_{*}^{2}}\int_{\R^{3}} d\vec{q}\frac{1}{|\vec{q}|}\int_{(\vec{q})_{\perp} }d\vec{v}\, r\big(\vec{v}+2^{-1}\frac{m}{m_{*}}\vec{q}+\frac{m}{M}\vec{p}_{\parallel \vec{q}} \big)  \,\frac{d\sigma}{d\Omega}\big(|\vec{p}_{rel}(\vec{q},\vec{v},\vec{p}_{\perp \vec{q}})|,\,\theta(\vec{q},\vec{v},\vec{p}_{\perp \vec{q}})  \big)\nonumber  \\ &=  \frac{ \eta}{m_{*} } \int_{\R^{3}} d\vec{p}_{\textup{rel}}\big| \vec{p}_{\textup{rel}}  \big|\,r\big(\frac{m}{m_{*}}\vec{p}_{\textup{rel} }+\frac{m}{M}\vec{p}  \big)\int_{\Omega}\frac{d\sigma}{d\Omega}\big( \big|\vec{p}_{\textup{rel} }\big|,\theta \big) \nonumber     \\    &= \eta\frac{ m^{2}_{*}}{m^{3} } \int_{\R^{3}} d\vec{p}_{0}\big|\frac{m_{*}}{m}\vec{p}_{0}-\frac{m_{*}}{M}\vec{p}   \big|\,r(\vec{p}_{0} )\int_{\Omega}\frac{d\sigma}{d\Omega}\big( \big|\frac{m_{*}}{m}\vec{p}_{0}-\frac{m_{*}}{M}\vec{p}\big|,\theta \big),
\end{align}
where $\vec{p}_{rel}=\frac{m_{*}}{m}\vec{v}-\frac{m_{*}}{M}\vec{p}_{\perp \vec{q}}+2^{-1}\vec{q}$ and $\theta=2\tan^{-1}\big(\frac{2^{-1}|\vec{q}|}{|\frac{m_{*}}{m}\vec{v}-\frac{m_{*}}{M}\vec{p}_{\perp \vec{q}}|   }\big) $.  
The integral $\int_{\Omega}$ is over all unit vectors $\hat{\theta}\in \R^{3}$ and $\theta$ is the angle between $\vec{p}_{\textup{rel} }$ and $\hat{\theta}$.  
The second and third equalities in~(\ref{HereToFore}) are changes of integration variables, where the third equality is simply due to $\vec{p}_{\textup{rel} }\rightarrow \vec{p}_{0}= \frac{m}{m_{*}}\vec{p}_{\textup{rel} }+\frac{m}{M}\vec{p} $.   
For the change of integration in the second inequality, I consider a smooth extension of $\frac{d\sigma}{d\Omega}(|\vec{p}_{\textup{rel}}|,\theta)$   to  a function $\frac{d\sigma}{d\Omega}\big(\vec{p}_{\textup{rel}},\,\vec{p}_{\textup{rel},0}\big)$, $\vec{p}_{\textup{rel},0}\in \R^{3}$,  where  $\frac{d\sigma}{d\Omega}(|\vec{p}_{\textup{rel}}|,\theta)=\frac{d\sigma}{d\Omega}\big(\vec{p}_{\textup{rel}},\,\vec{p}_{\textup{rel},0}\big)$
when $|\vec{p}_{\textup{rel}}|=|\vec{p}_{\textup{rel},0}|$.    
I relate the variables $\vec{p}_{\textup{rel},0}$ and $\vec{p}_{\textup{rel}}$  
to $\vec{v},\vec{q}\in \R^{3}$ as  $\vec{p}_{\textup{rel},0}=\frac{m_{*}}{m}\vec{v}-\frac{m_{*}}{M}\vec{p}_{\perp \vec{q}}-2^{-1}\vec{q}$ and as above for $\vec{p}_{\textup{rel}}$.  I can go from  integration over $\vec{p}_{\textup{rel}}$, $\theta$ to integration over $\vec{p}_{\textup{rel}},\vec{p}_{\textup{rel},0}$ and then to integration over $\vec{q},\vec{v}$ by the same reasoning as in~\cite[App.A]{VaccHorn}: 
\begin{align}
\int_{\R^{3}} d\vec{p}_{\textup{rel}}\big|\vec{p}_{\textup{rel}}| \int_{\Omega}&=\int_{\R^{3}} d\vec{p}_{\textup{rel}}\int_{\R^{3}} d\vec{p}_{\textup{rel},0}\,\delta\big( 2^{-1}|\vec{p}_{\textup{rel}}|^{2}-2^{-1}|\vec{p}_{\textup{rel},0}|^{2}\big) \nonumber   \\ &=\int_{\R^{3}}d\vec{q}\int_{\R^{3}}d\vec{v}\,\delta\big(\frac{m_{*}}{m}\vec{q}\cdot \vec{v}   \big) \nonumber \\ &  =\frac{m}{m_{*}}\int_{\R^{3}}d\vec{q}\frac{1}{|\vec{q}|}\int_{\R^{3}}d\vec{v}\,\delta\big(\frac{\vec{q}}{|\vec{q}|}\cdot \vec{v}   \big)\nonumber  \\ & =  \frac{m}{m_{*}}\int_{\R^{3}}d\vec{q}\frac{1}{|\vec{q}|}\int_{(\vec{q})_{\perp} }d\vec{v}.
\end{align}

With~(\ref{HereToFore}), I have the following upper bound for the quantum escape rates:
\begin{align*}
\mathcal{E}(\vec{p}) &\leq \big(\sup_{\mathbf{p}\in \mathbb{R}_{+} }\sigma_{\textup{tot}}(\mathbf{p})\big)\eta \frac{m^{2}_{*}}{m^{3}} \int_{\R^{3}} d\vec{p}_{0}\big|\frac{m_{*}}{m}\vec{p}_{0}-\frac{m_{*}}{M}\vec{p}\big| \frac{e^{-\frac{\beta}{2m}\vec{p}_{0}^{2}   }   }{(2\pi m\beta^{-1})^{\frac{3}{2}}    }\\ & \leq \big(\sup_{\mathbf{p}\in \mathbb{R}_{+} }\sigma_{\textup{tot}}(\mathbf{p})\big)\eta \frac{m^{2}_{*}}{m^{3}}\big( 2\frac{m_{*}}{m}(\frac{2m}{\pi\beta})^{\frac{1}{2}}  + \frac{m_{*}}{M}|\vec{p}| \big).  
\end{align*}
This gives the coefficients for an upper bound.  For a lower bound, I have
$$\mathcal{E}(\vec{p})\geq \big(\inf_{\mathbf{p}\in \mathbb{R}_{+} }\sigma_{\textup{tot}}(\mathbf{p})\big)\eta \frac{m^{2}_{*}}{m^{3}} \int_{\R^{3}} d\vec{p}_{0}\big|\frac{m_{*}}{m}\vec{p}_{0}-\frac{m_{*}}{M}\vec{p}\big| \frac{e^{-\frac{\beta}{2m}\vec{p}_{0}^{2}   }   }{(2\pi m\beta^{-1})^{\frac{3}{2}}    }.$$
The right side is minimized for $p=0$ with the value $\underline{\mathcal{E}}$.  Moreover, since the integral is a weighted average with mean zero of the convex function $F(\vec{p}_{0})=\big|\frac{m_{*}}{m}\vec{p}_{0}-\frac{m_{*}}{M}\vec{p}\big| $, the integral is smaller than the value $F(0)=\frac{m_{*}}{M}|\vec{p}|$.

\end{proof}

The following theorem summarizes a small part of the results from~\cite{Nussen2}.  To prove $\mathcal{L}_{0}$ has a spectral gap in Prop.~\ref{RelBnds}, it is imperative that the scattering does not all become absorbed in the forward direction when the test particle is traveling with high momentum.  The lemma states that the differential cross section $\frac{d\sigma}{d\Omega}(\mathbf{p},\theta)$ approaches $4^{-1}a^{2}$ as $\mathbf{p}\rightarrow \infty$ for $\theta$ bounded away from the endpoints  $0,\pi$.

\begin{theorem}[Nussenzveig]\label{Nussen}
As $ \frac{a}{\hbar} \mathbf{p}\rightarrow \infty $, there is uniform convergence 
$$\hspace{4cm}\frac{d\sigma}{d\Omega}=|\mathbf{f}(\mathbf{p},\,\theta  )|^{2}\longrightarrow \frac{a^{2}}{4}, \quad \quad \quad  \delta \leq  \theta \leq \pi -\delta,    $$
for any fixed $\delta >0$.   

\end{theorem}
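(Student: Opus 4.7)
The plan is to deduce the statement from the high-energy approximation (\ref{HighEnergy}), whose derivation via Watson's transformation is the technical heart of Nussenzveig's analysis in [Nussen2]. Writing $\kappa := \frac{a}{\hbar}\mathbf{p}$ for brevity, (\ref{HighEnergy}) gives
$$\mathbf{f}(\mathbf{p},\theta) = -\tfrac{a}{2}\,e^{-2\ii\kappa\sin(\theta/2)} \;-\; \tfrac{\ii a}{2}\,\tfrac{1+\cos\theta}{\sin\theta}\,J_{1}(\kappa\sin\theta) \;+\; E(\mathbf{p},\theta),$$
where $E$ collects the error beyond the two explicit terms. The first (``reflection'') term has constant modulus $a/2$, so $|\mathbf{f}(\mathbf{p},\theta)|^{2}\to a^{2}/4$ uniformly on $[\delta,\pi-\delta]$ will follow once the other two terms are shown to vanish uniformly as $\kappa \to \infty$.

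For the diffraction term, restricting $\theta\in[\delta,\pi-\delta]$ gives $\sin\theta\geq\sin\delta>0$, so the argument $\kappa\sin\theta\to\infty$ uniformly and the prefactor $\frac{1+\cos\theta}{\sin\theta}$ is uniformly bounded. Invoking the classical asymptotic $J_{1}(x)=\sqrt{2/(\pi x)}\cos(x-3\pi/4)+O(x^{-3/2})$ then yields a uniform bound of order $\kappa^{-1/2}$.

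For the remainder $E$, the proof would proceed by steepest descent applied to the Watson contour integral (\ref{PartWaveInt}). The relevant saddle in the $\lambda$-plane sits at $\lambda_{s}=\kappa\cos(\theta/2)$, which lies in the compact subset $[\kappa\cos(\tfrac{\pi-\delta}{2}),\,\kappa\cos(\tfrac{\delta}{2})]\subset(0,\kappa)$ and is therefore separated both from the origin and from the turning point $\lambda\approx\kappa$. In this regime Debye's asymptotic expansion for $h^{(1,2)}_{\lambda}(\kappa)$ and the Mehler--Dirichlet asymptotic for $\mathfrak{L}_{\lambda-1/2}(\cos\theta)$ are simultaneously valid and uniform in $\theta$. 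The residual contributions from the Regge poles of $1/h^{(1)}_{\lambda}(\kappa)$ fall off exponentially in $\kappa$ provided $\theta$ stays bounded away from the forward/backward directions, yielding $E=o(1)$ uniformly on $[\delta,\pi-\delta]$.

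The main obstacle is the uniformity of the remainder $E$ on compact subsets of $(0,\pi)$: this requires uniform versions of the Debye and Mehler--Dirichlet expansions together with a quantitative decay estimate on the Regge-pole tail. All of these ingredients are developed in [Nussen,Nussen2] with explicit remainder bounds, so the stated theorem ultimately follows by inspection of those error estimates and their $\theta$-dependence; the only work on our side is to isolate the uniform control on $[\delta,\pi-\delta]$ and to combine it with the elementary asymptotics of $J_{1}$ above.
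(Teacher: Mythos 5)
Your proposal is correct and rests on essentially the same foundation as the paper's own proof: both ultimately defer to Nussenzveig's uniform error estimates obtained via the Watson transform, the paper by citing directly the uniform convergence $\mathbf{f}(\mathbf{p},\theta)\to-\frac{a}{2}e^{-2\ii\frac{a}{\hbar}\mathbf{p}\sin(\theta/2)}$ on $[\delta,\pi-\delta]$ (indeed even on intervals shrinking like $(\frac{a}{\hbar}\mathbf{p})^{-\gamma}$, $0<\gamma<3^{-1}$). Your only extra step, disposing of the diffraction term via the $J_{1}$ asymptotic after starting from the rougher formula~(\ref{HighEnergy}), is elementary and correct, but it is already subsumed in the cited uniform statement, so the two arguments differ only in packaging.
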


\begin{proof}
By~\cite[Sec. 3]{Nussen2}, there is uniform convergence
$$ \mathbf{f}(\mathbf{p},\,\theta  )\longrightarrow -\frac{a}{2}e^{-2\ii\frac{a}{\hbar}\mathbf{p}  \sin(\frac{\theta}{2}) }  $$
as $\frac{a}{\hbar}\mathbf{p}\rightarrow \infty$ for $\theta$ bounded away from the endpoints of the interval $[0,\pi]$.  The estimates in~\cite{Nussen2} even allow for uniform convergence over intervals $ [(\frac{a}{\hbar}\mathbf{p})^{-\gamma},\,\pi-(\frac{a}{\hbar}\mathbf{p})^{-\gamma}]$ for $0 <\gamma < 3^{-1} $.

\end{proof}

For $y\in L^{1}(\R^{3})$, define the multiplication operator $\frak{M}(y)(\vec{p})=|\vec{p}|y(\vec{p})$.  Part (5) of the proposition below states that the graph norm for $\frak{M}^{n}$ is equivalent to the graph norm for the operator $\mathcal{L}_{0}^{n}$.  This equivalence is useful, since  $\|\cdot\|_{\frak{M}^{n}}$ is easy to compute in some situations and the norm $\| \cdot \|_{\mathcal{L}_{0}^{n}}$ is contractive under the dynamical evolution $e^{t\mathcal{L}_{0}}$.  By the discussion in Sect.~\ref{SecCons}, the domain $\textup{D}(\mathcal{L}_{0})$ contains $\mathcal{T}_{4}$, and the domain for $\textup{D}(\mathcal{L}_{\vec{k}})$ also includes $\mathcal{T}_{4}$ by the same argument.  Since $\mathcal{L}_{\vec{k}}$ is closed, Part (5) of Prop.~\ref{RelBnds} for $n=1$ implies  $\textup{D}(\mathcal{L}_{ 0})=\mathcal{T}_{1}$, and Part (6) implies  $\textup{D}(\mathcal{L}_{\vec{k}})$ contains $\mathcal{T}_{1}$.  Moreover, Part (5) also implies  $\mathcal{L}_{ 0}^{n}$ maps elements in $\mathcal{T}_{N}$ into $\mathcal{T}_{N-n}$ for $N\geq n$.

\begin{proposition}\label{RelBnds}
Let $\mathcal{L}_{\vec{k}}$ be the generator for the contractive semigroup $\widetilde{\Phi}_{t}^{(\vec{k})}$ acting on the $\vec{k}$-fiber.   Define the projection $\mathbf{P}$  to operate as $\mathbf{P}y=\big(\int_{\R^{3}}d\vec{p}\,y(\vec{p})     \big)\nu_{\infty}$ for $y\in L^{1}(\R^{3})$.

\begin{enumerate}

\item Zero is a non-degenerate eigenvalue for $\mathcal{L}_{0}$ with corresponding spectral projection  $\mathbf{P}$.

\item The remainder of spectrum satisfies   $\sum(\mathcal{L}_{0})-\{0\}\subset (-\infty,\,-\frak{g}]+\ii\,\R$ for some $\frak{g}>0$.  

\item Given $\delta>0$, there is a $C_{\delta}>0$ such that $ \| \widetilde{\Phi}_{t}^{(0)}(\textup{I}-\mathbf{P})  \|\leq C_{\delta} e^{(\delta-\frak{g})t } $ for all $t>0$. 

\item  For $C_{\delta}>0$ defined above and  $z\in \C$ with $\textup{Re}(z)>\delta-\frak{g}$,
$$ \big\| \frac{1}{\mathcal{L}_{0}-z} \big\| \leq |z|^{-1}+\frac{C_{\delta}}{\textup{Re}(z)-\delta+\frak{g} }\quad \text{and} \quad \big\| \frac{\mathcal{L}_{0}}{\mathcal{L}_{0}-z} \big\| \leq 2+\frac{C_{\delta}|z|}{\textup{Re}(z)-\delta+\frak{g} }  .   $$ 

\item The graph norms $\| \cdot \|_{\mathcal{L}_{0}^{n}}$ and  $\|\cdot\|_{\frak{M}^{n}}$ are equivalent.  

\item $\mathcal{L}_{\vec{k}}$ is relatively bounded to $\mathcal{L}_{0}$.

\end{enumerate}

\end{proposition}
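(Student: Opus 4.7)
The plan is to attack the six parts in order, since each largely depends on the previous. I start with the spectral properties of $\mathcal{L}_0$ on $L^1(\R^3)$ (parts 1--3). The detailed balance identity from Sec.~\ref{RemarkDetailedBalance}, $\mcj(\vec{p}_2,\vec{p}_1)\nu_\infty(\vec{p}_1) = \mcj(\vec{p}_1,\vec{p}_2)\nu_\infty(\vec{p}_2)$, makes $\nu_\infty^{1/2}\mathcal{L}_0 \nu_\infty^{-1/2}$ symmetric on a natural $L^2(\R^3)$ domain, so the spectrum of $\mathcal{L}_0$ is real. The kernel of $\mathcal{L}_0^*$ is spanned by constants (so that of $\mathcal{L}_0$ is spanned by $\nu_\infty$), and a.e.\ positivity of the jump kernel (from the Gaussian $r$ together with Nussenzveig's Thm.~\ref{Nussen}, which guarantees $d\sigma/d\Omega$ does not vanish away from the forward/backward directions) makes the zero eigenspace one-dimensional, yielding part~(1). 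For parts~(2) and~(3), I would apply a Foster--Lyapunov/Harris argument: the lower bound $\mathcal{E}(\vec{p})\geq \underline{\mathcal{F}}|\vec{p}|$ from Lem.~\ref{Horses} gives a drift inequality $\mathcal{L}_0^{*}V \leq -cV + c'\mathbf{1}_K$ on a compact set $K$ for a Lyapunov function such as $V(\vec{p})=1+|\vec{p}|$, while uniform positivity of the cross section combined with the Gaussian factor $r$ in~(\ref{JumpRates}) yields a Doeblin minorization on $K$. Harris's theorem then gives geometric ergodicity of the classical jump process, which translates into exponential $L^1$-contractivity of $\widetilde{\Phi}_t^{(0)}(\textup{I}-\mathbf{P})$ and, via Laplace transformation, the spectral gap $\frak{g}$.

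For part~(4), I would write the resolvent as a Laplace transform split via $\mathbf{P}$,
\begin{align*}
(\mathcal{L}_0 - z)^{-1} = -z^{-1}\mathbf{P} - \int_0^\infty e^{-zt}\widetilde{\Phi}_t^{(0)}(\textup{I}-\mathbf{P})\, dt, \qquad \textup{Re}(z) > \delta-\frak{g},
\end{align*}
and use~(3) to bound the integral term in operator norm; the second estimate then follows from the identity $\mathcal{L}_0 (\mathcal{L}_0 - z)^{-1} = \textup{I} + z(\mathcal{L}_0 - z)^{-1}$. For part~(5), the upper bound $\|\mathcal{L}_0 y\|_1 \leq C(\|y\|_1 + \|\frak{M}y\|_1)$ comes out of the decomposition $\mathcal{L}_0 = \mathcal{J} - \mathcal{E}$ together with the identity $\|\mathcal{J} y\|_1 \leq \|\mathcal{E}|y|\|_1$ (Fubini on the jump kernel) and the upper bound in Lem.~\ref{Horses}; one iterates for $n>1$. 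The reverse direction, controlling $\|\frak{M}^n y\|_1$ by $\|y\|_1 + \|\mathcal{L}_0^n y\|_1$, is more delicate: I would show that $\frak{M}^n(\mathcal{L}_0 - z)^{-1}$ is bounded on $L^1$ for some $z$ with sufficiently large real part, deriving this from polynomial moment bounds $\mathbb{E}_{\vec{p}}[|\frak{p}_t|^n] \leq C(1+|\vec{p}|^n) e^{\omega t}$ for the classical Markov process through a Lyapunov calculation with $V(\vec{p})=|\vec{p}|^n$.

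For part~(6), the explicit decomposition $\mathcal{L}_{\vec{k}} = -\frac{\ii}{\hbar}h_{\vec{k}} + \mathcal{J}_{\vec{k}} - \mathcal{E}_{\vec{k}}$ in~(\ref{FiberGen0})--(\ref{FiberGen1}) handles each piece separately: the Hamiltonian commutator $h_{\vec{k}}$ is multiplication by an $O(|\vec{k}||\vec{p}|)$ function (from the form of $H$), hence relatively bounded to $\frak{M}$ and, via~(5), to $\mathcal{L}_0$; the shifted loss term $\mathcal{E}_{\vec{k}}$ is bounded by $C(1+|\vec{p}|)$ by applying Lem.~\ref{Horses} at the shifted arguments $\vec{p}\pm 2^{-1}\hbar\vec{k}$; and for the pseudo-jump kernel in~(\ref{OffRates}), Cauchy--Schwarz gives $|\mathbf{f}(\mathbf{Q}_-,\Theta_-)\overline{\mathbf{f}}(\mathbf{Q}_+,\Theta_+)| \leq 2^{-1}(|\mathbf{f}|^2(\mathbf{Q}_-,\Theta_-) + |\mathbf{f}|^2(\mathbf{Q}_+,\Theta_+))$ which, with~(\ref{ARelate}), dominates $|\mcj_{\vec{k}}(\vec{p},\vec{p}_0)|$ by a finite sum of shifted copies of $\mcj_0$. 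I expect the main obstacle to be the reverse inequality in~(5): establishing the required moment estimates demands showing that the friction $\mathcal{E}(\vec{p})\sim|\vec{p}|$ strictly dominates, in the appropriate sense, the mean jump size $\int \mcj(\vec{p}_0,\vec{p})|\vec{p}_0|^n\,d\vec{p}_0$, a delicate matching of the total cross section asymptotics against the integration geometry inherited from the high-energy estimates of Nussenzveig.
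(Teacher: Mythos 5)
Your plan for parts (1), (4), (6) and for the easy half of (5) runs essentially parallel to the paper (Part (6) in particular is exactly the paper's argument: the bound $|L_{\vec{q},\vec{v}}(\vec{p}-2^{-1}\hbar\vec{k})\overline{L}_{\vec{q},\vec{v}}(\vec{p}+2^{-1}\hbar\vec{k})|\leq 2^{-1}(|L_{\vec{q},\vec{v}}|^{2}(\vec{p}-2^{-1}\hbar\vec{k})+|L_{\vec{q},\vec{v}}|^{2}(\vec{p}+2^{-1}\hbar\vec{k}))$ together with Lem.~\ref{Horses}). The decisive gap is your route to the hard half of Part (5), namely $\|\frak{M}^{n}y\|_{1}\lesssim\|y\|_{\mathcal{L}_{0}^{n}}$. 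You propose to deduce boundedness of $\frak{M}^{n}(\mathcal{L}_{0}-z)^{-1}$ on $L^{1}$ from the moment estimate $\mathbb{E}_{\vec{p}}[|\frak{p}_{t}|^{n}]\leq C(1+|\vec{p}|^{n})e^{\omega t}$. That estimate is circular for this purpose: by positivity preservation and duality it only yields $\|\frak{M}^{n}e^{t\mathcal{L}_{0}}y\|_{1}\leq Ce^{\omega t}(\|y\|_{1}+\|\frak{M}^{n}y\|_{1})$, a weighted-space-to-weighted-space bound, whereas boundedness of $\frak{M}^{n}(\mathcal{L}_{0}-z)^{-1}$ requires only $\|y\|_{1}$ on the right. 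What you would actually need is a moment bound uniform in the initial momentum, and the dynamics does not supply it at the required strength: the collision rate grows only linearly, $\mathcal{E}(\vec{p})\asymp|\vec{p}|$ (Lem.~\ref{Horses}), while each collision reduces $|\vec{p}|$ by a multiplicative factor, so a trajectory started at $|\vec{p}_{0}|=P$ passes through stages of size $P,\gamma P,\gamma^{2}P,\dots$, each lasting a time of order $(\overline{\mathcal{F}}\gamma^{k}P)^{-1}$ and each contributing an amount of order one to $\int_{0}^{\infty}e^{-zt}\,\mathbb{E}_{\vec{p}_{0}}[|\frak{p}_{t}|]\,dt$; summing the $\sim\log P$ stages shows the quantity you would have to bound uniformly picks up a $\log P$ along approximate point masses at large momentum. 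So the resolvent route you sketch cannot close, and this is the crux of Part (5), not a technicality to be "matched delicately" at the end. The paper never touches $\frak{M}(\mathcal{L}_{0}-z)^{-1}$: it bounds $\||\vec{p}|^{n}y\|_{1}\leq\underline{\mathcal{F}}^{-1}\||\vec{p}|^{n-1}\mathcal{E}y\|_{1}$ and then controls $\||\vec{p}|^{n-1}\mathcal{E}y\|$ by $\||\vec{p}|^{n-1}\mathcal{G}_{r}y\|$ through the splitting $\mathcal{L}_{0}=\mathcal{G}_{r}+\mathcal{J}\textup{I}_{r}$, the finiteness of $\||\vec{p}|^{n-1}\mathcal{J}\textup{I}_{r}\|$, and the weighted estimate (\ref{Critical}), which rests on the same drift computation (\ref{Gipper}) (with the bounded weight $w$) that produces the spectral gap.

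For parts (2)--(3) your Harris/Foster--Lyapunov scheme is a genuinely different route from the paper, which instead shows $\mathcal{J}\textup{I}_{r}$ is compact, proves exponential decay of the lossy semigroup $e^{t\mathcal{G}_{r}}$ in the equivalent bounded-weight norm $\|\cdot\|_{w}$, invokes Weyl's theorem on the essential spectrum, rules out accumulation of eigenvalues near the imaginary axis, and then obtains (3) from a resolvent bound via Kato. Your route is plausible but has an unaddressed step: Harris's theorem with the unbounded Lyapunov function $V(\vec{p})=1+|\vec{p}|$ delivers geometric ergodicity in the $V$-weighted total-variation norm, with a constant proportional to $V(\vec{p})$ for point masses, so the plain $L^{1}\to L^{1}$ operator bound $\|\widetilde{\Phi}_{t}^{(0)}(\textup{I}-\mathbf{P})\|\leq C_{\delta}e^{(\delta-\frak{g})t}$ asserted in Part (3) does not follow directly; you would need an extra regularization step (e.g.\ showing $e^{t_{0}\mathcal{L}_{0}}$ maps $L^{1}$ boundedly into the $V$-weighted space, exploiting that the drift of $V$ is in fact superquadratic in the sense $\mathcal{L}_{0}^{*}V\lesssim -c|\vec{p}|^{2}+C$), and this should be spelled out. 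Finally, the aside in Part (1) that detailed balance makes the $L^{1}$ spectrum of $\mathcal{L}_{0}$ real is unjustified as stated: the similarity transformation relates $\mathcal{L}_{0}$ to a symmetric operator on a weighted $L^{2}$ space, and spectra on $L^{1}$ and on that $L^{2}$ space need not coincide; fortunately nothing in the proposition requires reality of the spectrum.
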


\begin{proof}\text{  }\\
\noindent Part (1):\\

It can be verified that the state $\nu_{\infty}(\vec{p})=\big(2\pi M\beta^{-1}\big)^{-\frac{1}{2}}e^{-\frac{\beta}{2M}\vec{p}^{2}}$ satisfies $\mathcal{L}_{0}\nu_{\infty}=0$ by a direct computation.  This also follows from the detailed balance form of the jump rates~(\ref{JumpRatesAgain}).   Since the jump rates $\mathcal{J}(\vec{p}_{2},\vec{p}_{1})$ are strictly positive,  the process is ergodic and the only stationary state is  $\nu_{\infty}$.  Thus, $0$ is a non-degenerate eigenvalue of $\mathcal{L}_{0}$.

\vspace{.5cm}

\noindent Part (2):\\

 Let $\textup{I}_{r}:L^{1}(\R^{3})\rightarrow L^{1}(\R^{3})$ be the  projection onto the ball of radius $r>0$ (i.e. multiplication by the indicator function $\chi(|\vec{p}|\leq r)$) and define the operator $\mcg_{r}=\mathcal{L}_{0}(\textup{I}-\textup{I}_{r})-\mce\,\textup{I}_{r}$ with domain $\mathcal{T}_{1}$.  I will argue the following points: 
\begin{enumerate}[(i).]
\item $\mcj \textup{I}_{r}$ is compact.

\item For sufficiently large $r$, $\mcg_{r}$ has spectrum in  $(-\infty,\,-h]+\ii\,\R$ for some $h>0$.  In fact, 
\begin{align}\label{FryCook}
 \| e^{-t\mcg_{r}}\|\leq 2 e^{-th},   
\end{align}
where $\|\cdot\|$ is the operator norm for maps on $L^{1}(\R^{3})$.
\end{enumerate}
Assuming the above two points, then since $\mcl_{0}=\mcg_{r}+\mcj \textup{I}_{r}$, it follows by Weyl's theorem~\cite{Kato} that $\mcl_{0}$ and $\mcj \textup{I}_{r}$ have the same essential spectrum.  Since $\mcl_{0}$ generates a contractive semigroup $\sum(\mcl_{0})\subset -\R_{+}+ \ii\, \R$, the set $\sum(\mcl_{0}) \cap \big[(-h,0]+\ii\,\R\big] $ may only contain elements in the point spectrum  of $\mcl_{0}$ and no accumulation points.  In fact, there will only be finitely many elements in $\sum(\mcl_{0}) \cap \big[ (-h+\epsilon,0]+\ii\,\R  \big]$ for any $\epsilon>0$.  To see this, let us suppose there were an infinite sequence of normalized eigenvectors $e_{n}\in L^{1}(\R^{3})$ for $\mathcal{L}_{0}$ with eigenvalues $v_{n}\in (-h+\epsilon,0]+\ii\,\R  $.  Since there are no accumulation points, the $v_{n}$ must have $|\textup{Im}(v_{n})|\rightarrow \infty$.  Notice that     
\beq \label{Finite} 
\mathcal{J}\textup{I}_{r}e_{n}= \big(\mathcal{L}_{0}-\mathcal{G}_{r}\big)e_{n}=\big(v_{n}-\mathcal{G}_{r}\big)e_{n}.    \eeq
The left side of~(\ref{Finite}) tends to zero in $L^{1}(\R^{3})$ as $n\rightarrow \infty$.  To see this, first observe 
$$\big\|\mathcal{J}\textup{I}_{r}e_{n}\big\|_{1}\leq  \Big(\sup_{|\vec{p} |\leq r}\| l_{\vec{p}}\|_{1}\Big)\int_{|\vec{p}|\leq r} |e_{n}(\vec{p})| ,      $$
where $l_{\vec{p}_{0}}(\vec{p})=\mcj(\vec{p},\vec{p}_{0})$.  By using that $\mcl_{0}e_{n}=v_{n}e_{n}$ and $\|e_{n}\|_{1}=1$, there is the inequality
$$\sup_{|\vec{p}|\leq r}\big|e_{n}(\vec{p})\big|\leq   \frac{  \sup_{|\vec{p}|\leq r ,\,\vec{p}_{0}\in \R^{3}  } \mcj(\vec{p},\vec{p}_{0})  }{ |\textup{Im}(v_{n})| }\longrightarrow 0\quad \text{as}\quad |\textup{Im}(v_{n})|\longrightarrow \infty .  $$
 Moreover, the right side of~(\ref{Finite}) must have $L^{1}$ norm $\geq 2^{-1}\epsilon$ for each $n$, since $\big(v_{n}-\mathcal{G}_{r}\big)^{-1}=\int_{0}^{\infty}dt e^{t(\mathcal{G}_{r}-v_{n})}$ has operator norm $\leq 2\epsilon^{-1}$ by~(\ref{FryCook}).  This contradiction shows  there can be only finite many eigenvalue in the region  $(-h+\epsilon,0]+\ii\,\R  $.  All of these  eigenvalues must have real part strictly smaller than $0$ by the mixing generated from the strictly positive jump rates $\mathcal{J}(p_{2},p_{1})>0$.      Hence, there is a gap $\frak{g}$ between $0$ and the real component of the remainder of the spectral values $\Sigma(\mathcal{L}_{0})-\{0\}$.

I now turn to establishing the points (i) and (ii).  For the compactness of $\mcj \textup{I}_{r}$, notice  $\mcj \textup{I}_{r}$ maps elements $y\in L^{1}(\R^{3})$ to integral combinations of the functions $l_{\vec{p}}(\vec{p}_{0})$ through the formula
$$\mcj \textup{I}_{r}(y)= \int_{|\vec{p}|\leq r }d\vec{p} y(\vec{p})\,l_{\vec{p}}.   $$
The compactness of $\mcj \textup{I}_{r}$ follows, since $l_{\vec{p}_{0}}\in L^{1}(\R^{3})$ is a continuous function with respect to the $L^{1}$ norm over the compact set $|\vec{p}_{0}|\leq r $.  

Proving  $\textup{Re}\sum (\mcg_{r})$ is bounded away from zero for large enough $r$ is more involved.  Notice that the semigroup $e^{t\mcg_{r}}:L^{1}(\R^{3})$ maps positive functions to positive functions, since $\mcg_{r}$ agrees with $\mcl_{0}$ on the image of $\textup{I}-\textup{I}_{r}$ and acts as a real-valued multiplication operator on the image of  $\textup{I}_{r}$.  Showing that $y_{t}=e^{t\mcg_{r}}y$ has $\|y_{t}\|_{1}\leq 2e^{-th}\|y\|_{1}$ for $h>0$ and all positive functions $y$ gives the upper bound   $\textup{Re}(\sum \mcg_{r})\leq -h$.  

 It is convenient to consider a slightly deformed norm $\|\cdot\|_{w}$ that is equivalent to the ordinary $\|\cdot \|_{1}$ norm.  Define the function $w(\vec{p})= 2-\frac{1}{|\vec{p}|+1}$ and the corresponding weighted norm
$$\|y\|_{w}= \int_{\R^{3}}d\vec{p}\,w(\vec{p})\,|y(\vec{p})|,\quad \quad y\in L^{1}(\R^{3}),   $$   
so that I have $\|y\|_{1}\leq \|y\|_{w}\leq 2\|y\|_{1}$.  For a positive element $y\in \mathcal{T}_{1}$,  
\begin{align*}
\frac{d}{dt}\|y_{t}\|_{w}&= \int_{\R^{3}} d\vec{p}\,w(\vec{p})\,(\mcg_{r}y_{t})(\vec{p})\\ &= \int_{\R^{3}} d\vec{p}\,(\mcg_{r}^{*}w)(\vec{p})\,y_{t}(\vec{p})\\ &=\int_{|\vec{p}|\geq r} d\vec{p}\,\big(\mathcal{L}_{0}^{*}w\big)(\vec{p})\,y_{t}(\vec{p})-\int_{|\vec{p}|\leq r} d\vec{p}\,w(\vec{p})\mathcal{E}(\vec{p})y_{t}(\vec{p}). 
\end{align*}
Since $\mce(\vec{p})$ is bounded from below by $\mathcal{E}(0)=\underline{\mathcal{E}}$, I only need to demonstrate sufficient decay from the first term on the right-hand side.  My estimates will show  there are $h^{\prime},r>0$ such that
\begin{align}\label{Gipper}
(\mathcal{L}_{0}^{*}w)(\vec{p})=  \int_{\R^{3}} d\vec{p}_{0}\,\big(w(\vec{p}_{0})\mcj(\vec{p}_{0},\vec{p})-w(\vec{p})\mcj(\vec{p}_{0},\vec{p})\big)\leq -h^{\prime} \quad \quad \forall (|\vec{p}|>r).   
\end{align} 
I can then take $h= h^{\prime}\wedge \underline{\mathcal{E}} $ as a lower bound for the rate of exponential decay for $\|y_{t}\|_{w}$ and thus also for $\|y_{t}\|_{1}$.  The physical reason for expecting the exponential decay in $\|y_{t}\|_{w}$ is that when the test particle is traveling with high enough momentum $\vec{p}_{0}$, then the first collision will typically knock it to a momentum $\vec{p}_{1}$ such that $|\vec{p}_{1}|$ is a fraction of the value $|\vec{p}_{0}|$.  How large $\vec{p}_{0}$ should be for this effect will depend on the masses $m$, $M$ and the temperature $\beta$, in order that the test particle is traveling at a much higher speed than the particles in the reservoir.  Finally, this effect is amplified by the fact that collisions occur more frequently in proportion to the speed of the test particle due to the ``hard" character of the interaction.  The combination of these two features is enough to be visible to the gently sloping weight $w(\vec{p})$.

\begin{align}\label{Backward}
&\int_{\R^{3}} d\vec{p}_{0}\,\big(w(\vec{p}_{0})\mcj(\vec{p}_{0},\vec{p})-w(\vec{p})\mcj(\vec{p}_{0},\vec{p})\big)\nonumber \\ =&\eta\,\frac{m}{m_{*}^{2}}\int_{\R^{3}}\frac{d\vec{q}}{|\vec{q}|}\int_{(\vec{q})_{\perp}}d\vec{v}\,r\big(\vec{v}+2^{-1}\frac{m}{m_{*}}\vec{q}+\frac{m}{M}\vec{p}_{\parallel \vec{q}}    \big)\nonumber \\  &\times \big|\mathbf{f}\big(|\vec{p}_{rel}(\vec{q},\vec{v},\vec{p}_{\perp \vec{q}})|,\theta(\vec{q},\vec{v},\vec{p}_{\perp \vec{q}})  \big)\big|^{2}\big(w(\vec{p}+\vec{q})-w(\vec{p}) \big)\nonumber  \\ =&\frac{\eta}{m_{*}}\int_{\Omega} \int_{\R^{3}}d\vec{p}_{rel}\,|\vec{p}_{rel}|\, r\big(\frac{m}{m_{*}}\vec{p}_{rel}+\frac{m}{M}\vec{p}  \big) \,|\mathbf{f}(|\vec{p}_{rel}|,\theta)|^{2}\big(w(\vec{p}+\vec{p}_{rel}-\hat{\theta}|\vec{p}_{rel}|)-w(\vec{p}) \big),
\end{align}
where $\hat{\theta}$ is a vector variable on the unit sphere with angle $\theta$ from $\vec{p}_{\textup{rel}}$, and the second equality above follows by the same change of integration as in the proof of Lem.~\ref{Horses}.

  By changing variables, $\vec{b}=\frac{m}{m_{*}}\vec{p}_{rel}+\frac{m}{M}\vec{p}$ and rewriting the integrand from~(\ref{Backward}), I have
$$ 
-\frac{\eta\, m_{*}^{2} }{m^{3}}\int_{\Omega} \int_{\R^{3}}d\vec{b}\, r(\vec{b}) \,|\mathbf{f}\big(\big|\frac{m_{*}}{m}\vec{b}-\frac{m_{*}}{M}\vec{p}\big|  ,\theta\big)|^{2}\frac{|\frac{m_{*}}{M}\vec{p}-\frac{m_{*}}{m}\vec{b}|( |\vec{p}|-\big|(1+\frac{m_{*}}{M})\vec{p}-\frac{m_{*}}{m}\vec{b}-\hat{\theta}|\frac{m_{*}}{M}\vec{p}-\frac{m_{*}}{m}\vec{b} |  \big|)    }{(|\vec{p}|+1)(\big|(1+\frac{m_{*}}{M})\vec{p}-\frac{m_{*}}{m}\vec{b}-\hat{\theta}|\frac{m}{M}\vec{p}-\frac{m_{*}}{m}\vec{b} |  \big|+1) }.  
$$
 Due to the Gaussian factor $r(\vec{b})$, the integrand decays rapidly for $|\vec{b}|\gg (\frac{m}{\beta})^{\frac{1}{2}}  $.   Hence for large $|\vec{p}|$, I have $|\vec{p}|\gg |\vec{b}|$ and
\begin{multline*}
\frac{|\frac{m_{*}}{M}\vec{p}-\frac{m_{*}}{m}\vec{b}|( |\vec{p}|-\big|(1-\frac{m_{*}}{M})\vec{p}+\frac{m_{*}}{m}\vec{b}-\hat{\theta}|\frac{m_{*}}{M}\vec{p}-\frac{m_{*}}{m}\vec{b} |  \big|)    }{(|\vec{p}|+1)(\big|(1-\frac{m_{*}}{M})\vec{p}+\frac{m_{*}}{m}\vec{b}-\hat{\theta}|\frac{m}{M}\vec{p}-\frac{m_{*}}{m}\vec{b} |  \big|+1) }\\ =\frac{m_{*}}{M}\Big( \big(1-\frac{m_{*}}{M}+\frac{m_{*}}{M}\cos(\theta))^{2}+\frac{m^{2}_{*}}{M^{2}}\sin^{2}(\theta)\big)^{-\frac{1}{2}} -1\Big)    +\mathit{O}( |\vec{p}|^{-1}).
\end{multline*}
I note that 
\begin{align*}
\big(1-\frac{m_{*}}{M}+\frac{m_{*}}{M}\cos(\theta)\big)^{2}+\frac{m^{2}_{*}}{M^{2}}\sin^{2}(\theta)&=(1-\frac{m_{*}}{M})^{2}+\frac{m^{2}_{*}}{M^{2}}+2\frac{m_{*}}{M}(1-\frac{m_{*}}{M})\cos(\theta) \\
&= \frac{m^{2}+M^{2}+2mM\cos(\theta)     }{(m+M)^{2}}\\ &\leq 1, 
\end{align*} 
where equality occurs only for $\theta=0$. 

With the estimate above
\begin{multline}\label{Monty}
\int_{\R^{3}} d\vec{p}_{0}\,\big(w(\vec{p}_{0})\mcj(\vec{p}_{0},\vec{p})-w(\vec{k})\mcj(\vec{p}_{0},\vec{p})\big) =-\eta\frac{m^{3}_{*}}{m^{3}M}\int_{\R^{3}}d\vec{b}\,r(\vec{b})\\ \times \int_{\Omega}  \,|\mathbf{f}\big(|\frac{m}{m_{*}}\vec{b}-\frac{m}{M}\vec{p}|  ,\theta\big)|^{2}\Big( \frac{m+M  }{\big( m^{2}+M^{2}+2m M\cos(\theta) \big)^{\frac{1}{2}}} -1\Big)+\mathit{O}(|\vec{p}|^{-1}).     
\end{multline}
Since $r(\vec{b})$ is probability distribution concentrated around $\vec{b}=0$, I have an average of terms of the form
\beq \label{ScatCross}
\int_{\Omega}  \,|\mathbf{f}(\mathbf{p} ,\theta)|^{2}\Big( \frac{m+M  }{\big( m^{2}+M^{2}+2m M\cos(\theta) \big)^{\frac{1}{2}}}  -1\Big)\quad \quad \text{for}\quad |\mathbf{p}|\gg 1.  
\eeq
However, by Thm.~\ref{Nussen}, $|\mathbf{f}(\mathbf{p} ,\theta)|^{2}$ converges uniformly to $4^{-1}a^{2}$ as $\mathbf{p}\rightarrow \infty$ for $\theta$ bounded away from $0,\pi$.  Hence~(\ref{ScatCross}) will include a contribution from the integration of $\theta$ away from the boundary points of approximately  
\begin{align*}
\frac{a^{2}}{4}\int_{\Omega}\Big(\frac{m+M}{  \big(m^{2}+M^{2}+2mM\cos(\theta)\big)^{\frac{1}{2}} }-1\Big)&= \pi a^{2}\Big((m+M)\frac{m\wedge M}{mM} -1  \Big)\\ &=\pi a^{2}\frac{m\wedge M}{m\vee M}.   
\end{align*}

 Hence, there exists an $r>0$ and an $h^{\prime}>0$ such that~(\ref{Gipper}) holds, which completes the proof.

\vspace{.5cm}

\noindent Part (3):  \\

Since the projection $\mathbf{P}$ commutes with the maps  $\widetilde{\Phi}_{t}^{(0)}$, the operators $ \widetilde{\Phi}_{t}^{(0)}(\textup{I}-\mathbf{P}) $ form a semigroup that may be viewed as operating on the Banach space $\textup{Im}(\textup{I}-\mathbf{P})\subset L^{1}(\R^{3})$.  By~\cite[Sec.IX.4]{Kato}, it is enough to show  there is a $C_{\delta}>0$ such that
\begin{align}\label{SnapOn}
\big\|\big(\mathcal{L}_{0}-z       \big)^{-1} (\textup{I}-\mathbf{P}) \big\|\leq \frac{C_{\delta}}{z+\frak{g}-\delta }  
\end{align}
for all $z>-\frak{g} +\delta $.  By Lem.~\ref{TrivialContract}, for $z>0$ I have the second inequality below 
\begin{align}\label{ShnizzleShnazzle}
\big\|  \big(\mathcal{L}_{0}-z       \big)^{-1}(\textup{I}-\mathbf{P})\big\|\leq   \big\|  \big(\mathcal{L}_{0}-z       \big)^{-1}\big\|+z^{-1}\| \mathbf{P} \|  \leq  2z^{-1}.
\end{align}
On the other hand, the operators $ \big(\mathcal{L}_{0}-z       \big)^{-1}(\textup{I}-\mathbf{P})$ have finite norm for $z\in (-\frak{g},\infty)$, and the norm depends continuously on the parameter $z$.  The continuity gives   
$$\max_{-\frak{g}+\delta \leq z\leq  1   } \big\|  \big(\mathcal{L}_{0}-z       \big)^{-1}(\textup{I}-\mathbf{P})\big\|<\infty,$$
and by combining this with~(\ref{ShnizzleShnazzle}), I can find a $C_{\delta}$ so~(\ref{SnapOn}) holds.

\vspace{.5cm}

\noindent Part (4):  \\

This follows by Part (3), the fact $\|\mathbf{P}\|=1$, and the formula $$\big(\mathcal{L}_{0}-z\big)^{-1}(\textup{I}-\mathbf{P})=-\int_{0}^{\infty}dt e^{-tz}\widetilde{\Phi}_{t}^{(0)}  (\textup{I}-\mathbf{P})$$
for $\textup{Re}(z)>\delta-\frak{g}$.

\vspace{.5cm}

\noindent Part (5):  \\

To bound $\|\cdot\|_{\mathcal{L}_{0}}$ using $\|\cdot\|_{\frak{M}}$, I observe 
\beq \label{Juxt0}
\| \mathcal{L}_{0}y  \|_{1}\leq 2\|\mathcal{E}y\|_{1}\leq 2\big(\overline{\mathcal{E}}\|y\|_{1}+\overline{\mathcal{F}}\|\,|\vec{p}|y\|_{1}\big)\leq 2( \overline{\mathcal{E}}\vee \overline{\mathcal{F}} ) \|y\|_{\frak{M}},  \eeq
where the first inequality uses $\|\mathcal{J}y\|_{1}\leq \|\mathcal{E}y\|_{1}$ (with equality when $y\geq 0$), and the second uses Lem.~\ref{Horses}.  Hence, $\|\mathcal{L}_{0}y\|_{1}$ is smaller than a constant multiple $c=2( \overline{\mathcal{E}}\vee \overline{\mathcal{F}} )$ of $\|y\|_{\frak{M}}$.  To extend to $n>1$, I will use that there are $C_{j}$'s such that for all $y\in L^{1}(\R^{3})$
\beq \label{Juxt1}
\| [\frak{M}^{j},\mathcal{L}_{0}]y  \|_{1} \leq C_{j}\| y\|_{\frak{M}^{j+1}}.   \eeq
 This can be seen by the inequalities
\begin{align*}
\| & [\frak{M}^{j},\mathcal{L}_{0}]y   \|_{1}\\  &\leq  \Big|\int_{\R^{3}}d\vec{p}y(\vec{p})\,\int_{\R^{3}}dq \mathcal{J}(\vec{p}+\vec{q},\vec{p})\big(|\vec{p}+\vec{q}|^{j}-|\vec{p}|^{j}   \big)\Big|  \\ &= \frac{\eta}{m_{*}}\Big|\int_{\R^{3}}d\vec{p}y(\vec{p})\, \int_{\R^{3}}d\vec{p}_{\textup{rel}}\,|\vec{p}_{\textup{rel}}|\int_{\Omega}  \,\big(\big|\vec{p}+  \vec{p}_{\textup{rel}}-\hat{\theta} |\vec{p}_{\textup{rel}}|\big|^{j}-|\vec{p}|^{j}  \big)r\big(\frac{m}{m_{*}}\vec{p}_{\textup{rel}}+\frac{m}{M}\vec{p}   \big) \frac{d\sigma}{d\Omega} (|\vec{p}_{\textup{rel}}|,\theta)\Big|\\ &\leq 
2^{j}\frac{\eta}{m_{*}}\sup_{\mathbf{p}\in \R_{+}}\sigma_{\textup{tot}}(\mathbf{p})\int_{\R^{3}}d\vec{p}|y(\vec{p})|\, \int_{\R^{3}}d\vec{p}_{\textup{rel}}\,|\vec{p}_{\textup{rel}}| ( |\vec{p}|^{j}+2^{j}|\vec{p}_{\textup{rel}}|^{j})r\big(\frac{m}{m_{*}}\vec{p}_{\textup{rel}}+\frac{m}{M}\vec{p}   \big),
\end{align*}
where the equality uses the integral definition~(\ref{JumpRates}) of the jump rates $\mathcal{J}(\vec{p}_{2},\vec{p}_{1})$ and a change of integration from $\vec{q},\vec{v}$ to variables $\hat{\theta}\in \Omega,\,\vec{p}_{\textup{rel}}\in \R^{3}$, as in the proof of Lem.~\ref{Horses}, where $\theta$ is angle between $\hat{\theta}$ and $\vec{p}_{\textup{rel}}$.  The second inequality simply uses $(a+b)^{j}\leq 2^{j}(a^{j}+b^{j})$.  By performing the Gaussian integration, the above is bounded by a linear combination of $\|\frak{M}^{i}y\|_{1}$ for $i\leq j+1$.  By interpolation $\| [\frak{M}^{j},\mathcal{L}_{0}]y   \|_{1}$ is bounded by a constant multiple $C_{j}$ of $\|y\|_{\frak{M}^{j+1}}$.
          
Using~(\ref{Juxt0}) and~(\ref{Juxt1}), $\|\mathcal{L}_{0}^{n}y\|_{1}$ is bounded by
\begin{align}
\|\mathcal{L}_{0}^{n}y\|_{1} & \leq c\|\mathcal{L}_{0}^{n-1}y\|_{\frak{M}} \nonumber \\ &\leq  c \|\mathcal{L}_{0}^{n-1}y\|_{1}+c \|\mathcal{L}_{0}\,\frak{M}\mathcal{L}_{0}^{n-2}y\|_{1}+c \|[\mathcal{L}_{0},\,\frak{M}]\mathcal{L}_{0}^{n-2}y\|_{1} \nonumber 
\\ & \leq c\|\mathcal{L}_{0}^{n-2}y\|_{\frak{M}} +  (c^{2} +cC_{1} )\|\mathcal{L}_{0}^{n-2}y\|_{\frak{M}^{2}}.
\end{align}
With interpolation to bound $\|\cdot\|_{\frak{M}}$ by $\|\cdot\|_{\frak{M}^{2}}$, I can then proceed by induction until the right side is $\|y\|_{\frak{M}^{n}}$.

 Now I must to show $\|\cdot\|_{\frak{M}^{n} }$ is bounded by a constant multiple of $\|\cdot\|_{\mathcal{L}_{0}^{n} }$.  The result will follow by an induction argument  if I can show  
 \begin{align} \label{Beck}
  \|\,|\vec{p}|^{n}y\|_{1}& \leq C_{n}^{\prime}\|\, |\vec{p}|^{n-1} \mcg_{r} y \|_{1}\nonumber \\ &\leq C_{n}^{\prime}(\|\, |\vec{p}|^{n-1}\mcj \textup{I}_{r}\|\|y\|_{1}  +   \|\,|\vec{p}|^{n-1} \mcl_{0} y \|_{1}), 
  \end{align}
    where $\mathcal{G}_{r}$ and $\mcj \textup{I}_{r}$ are defined as in Part (2), and the first inequality holds for some $C_{n}^{\prime}>0$ for large enough $r$.  The second inequality is the triangle inequality using $\mathcal{G}_{r}=\mathcal{L}_{0}- \mcj \textup{I}_{r}$, and the operator norm    
$ \|\, |\vec{p}|^{n-1}\mcj \textup{I}_{r}\|$ is finite by the calculation below.  Since $|\vec{p}|^{n}\mcj \textup{I}_{r}$ has a positive integral kernel, its operator norm as a map on $L^{1}(\R^{3})$ is bounded by the following
\begin{align*}
  \|\, |\vec{p}|^{n}\mcj \textup{I}_{r}\|&= \sup_{|\vec{p}|\leq r} \int_{\R^{3}}d\vec{p_{0}}\,|\vec{p_{0}}|^{n}\mathcal{J}(\vec{p}_{0},\vec{p})\\ &= \frac{\eta}{m_{*}} \sup_{|\vec{p}|\leq r}\int_{\R^{3}}d\vec{p}_{\textup{rel}}\int_{\Omega}  \,\big|\vec{p}+   \vec{p}_{\textup{rel}}-\hat{\theta} |\vec{p}_{\textup{rel}}| \big|^{n}r\big(\frac{m}{m_{*}}\vec{p}_{\textup{rel}}+\frac{m}{M}\vec{p}   \big) \frac{d\sigma}{d\Omega} (|\vec{p}_{\textup{rel}}|,\theta)\\ &\leq  2^{n}\frac{\eta}{m_{*}}\sup_{\mathbf{p}\in \R_{+}}\sigma_{\textup{tot}}(\mathbf{p}) \sup_{|\vec{p}|\leq r}\int_{\R^{3}}d\vec{p}_{\textup{rel}}  \,\big(|\vec{p}|^{n}+ 2^{n}|\vec{p}_{\textup{rel}}|^{n} \big) r\big(\frac{m}{m_{*}}\vec{p}_{\textup{rel}}+\frac{m}{M}\vec{p}   \big)<\infty, 
\end{align*}
where the second equality is the standard change of variables.  The inequality is from $|a+b|^{n}\leq 2^{n}(|a|^{n}+|b|^{n})$ for $a=\vec{p}$ and $ b =\hat{\theta} |\vec{p}_{\textup{rel}}|- \vec{p}_{\textup{rel}} $ along with the fact $|\hat{\theta}|=1$.  The right side is finite since $|\vec{p}|\leq r$ and the integrand is a Gaussian against a polynomial.

 To show~(\ref{Beck}), I begin with an application of Lem.~\ref{Horses} to get the first inequality below
\begin{align}\label{String}
 \| |\vec{p}|^{n} y \|_{1} & \leq \underline{\mathcal{F}}^{-1}\|\,|\vec{p}|^{n-1} \mce y\|_{1}\leq \underline{\mathcal{F}}^{-1}\| \,|\vec{p}|^{n-1}\mce y\|_{w}\nonumber  \\ &\leq (h_{n}\underline{\mathcal{F}})^{-1}\|\,|\vec{p}|^{n-1} \mcg_{r} y \|_{w} \leq 2(h_{n}\underline{\mathcal{F}})^{-1}\|\,|\vec{p}|^{n-1} \mcg_{r} y \|_{1}.   
 \end{align}
 The second and fourth inequalities follow from $\|\cdot\|_{1}\leq \|\cdot \|_{w}\leq 2\|\cdot\|_{1}$.  The critical input in~(\ref{String}) is the third inequality, which follows from the triangle inequality and similar reasoning as for the proof of~(\ref{Gipper}) to get the first and second inequalities below respectively,
\begin{align} \label{Critical}
\| \,|\vec{p}|^{n-1}\mcg_{r} y \|_{w} &\geq \|\,|\vec{p}|^{n-1} \mce y \|_{w}- \| \,|\vec{p}|^{n-1}\mcj(\textup{I}- \textup{I}_{r})y \|_{w}\nonumber \\ &\geq h_{n}\| \,|\vec{p}|^{n-1}\mce y\|_{w},  
\end{align}
for some $h_{n}>0$.   I then set $C_{n}^{\prime}:=2(h_{n}\underline{\mathcal{F}})^{-1}$.  The use of the norm $\|\cdot\|_{w}$ is only required for the $n=1$ case to get the second inequality of~(\ref{Critical}), and the physical motivation for the second inequality is the same as for~(\ref{Gipper}).

 Using~(\ref{Beck}) inductively, then $\|y\|_{\frak{M}^{n}}$ will be bounded by  a multiple of $\|y\|_{ \mathcal{L}_{0}^{n}}$.

\vspace{.5cm}

\noindent Part (6):\\

To show  $\|\cdot  \|_{\mathcal{L}_{\vec{k}}}$ is smaller than a constant multiple of $\|\cdot\|_{\mathcal{L}_{0}}$,  by Part (3) it is sufficient to prove  $\|\mathcal{L}_{\vec{k} }y\|_{1}\leq c\|y\|_{\frak{M}}$ for some $c>0$. By~(\ref{FiberGen0}),  $\mathcal{L}_{\vec{k}}$ is sum of a Hamiltonian term $-\frac{\ii}{\hbar}h_{\vec{k}}$, a gain term $\mathcal{J}_{\vec{k}}$, and  a loss term $\mathcal{E}_{\vec{k}}$.  The Hamiltonian term is multiplication by     
$$h_{\vec{k}}(\vec{p})=  -\frac{1}{M}\vec{k}\cdot\vec{p} +\big(H_{f}(\vec{p}-2^{-1}\hbar\vec{k})- H_{f}(\vec{p}+2^{-1}\hbar\vec{k})       \big).$$     
The first term is linear in $\vec{p}$ and   $H_{f}(\vec{p})$ has a linear bound in $|\vec{p}|$ by Lem.~\ref{Leftover}.  The loss term $\mathcal{E}_{\vec{p}}$ is multiplication by  
 $$2^{-1}\mce(\vec{p}-2^{-1}\hbar\vec{k})+2^{-1}\mce(\vec{p}+2^{-1}\hbar\vec{k}), $$
which is linearly bounded in $|\vec{p}|$ by Lem.~\ref{Horses}.  For the gain term $\mathcal{J}_{\vec{k}}$, the values $\mcj_{\vec{k}}(\vec{p}+\vec{q}  , \vec{p} )$ are bounded by
\begin{align*}
 \big|\mcj_{\vec{k}}(\vec{p}+\vec{q}  , \vec{p} )\big| &\leq  \int_{(\vec{q})_{\perp}}d\vec{v}\,\big|L_{\vec{q},\vec{v}}(\vec{p}-2^{-1}\hbar\vec{k})\,\overline{L}_{\vec{q},\vec{v}}(\vec{p}+2^{-1}\hbar\vec{k})\big| \\ &\leq 2^{-1}\int_{(\vec{q})_{\perp}}d\vec{v}\,\big(\big|L_{\vec{q},\vec{v}}(\vec{p}-2^{-1}\hbar\vec{k})\big|^{2}+\big|L_{\vec{q},\vec{v}}(\vec{p}+2^{-1}\hbar\vec{k})\big|^{2}\big)  \\ &=2^{-1}\big( \mcj(\vec{p}-2^{-1}\hbar\vec{k}+\vec{q}  , \vec{p}-2^{-1}\hbar\vec{k} )+ \mcj(\vec{p}+2^{-1}\hbar\vec{k}+\vec{q}  , \vec{p}+2^{-1}\hbar\vec{k} )\big).
 \end{align*}
The above implies 
\begin{align*}
\|\mathcal{J}_{\vec{k}}y\|_{1}&=\int_{\R^{3}}d\vec{p}\int_{\R^{3}}d\vec{q}\big|\mathcal{J}(\vec{p}+\vec{q},\vec{p})\big|\,\big|y(\vec{p})\big|   \\ &   \leq 2^{-1} \int_{\R^{3}}d\vec{p} \,\big( \mathcal{E}(\vec{p}-2^{-1}\hbar \vec{k})+  \mathcal{E}(\vec{p}+2^{-1}\hbar \vec{k})\big)|y(\vec{p})| \\ &  \leq \int_{\R^{3}}d\vec{p}\,\big(\overline{\mathcal{E}}+2^{-1}\overline{\mathcal{F}}|\vec{p}-2^{-1}\hbar\vec{k}|+2^{-1}\overline{\mathcal{F}}|\vec{p}+2^{-1}\hbar\vec{k}|     \big)|y(\vec{p})|\\  &\leq (\overline{\mathcal{E}}+2^{-1}\overline{\mathcal{F}}|\vec{k}|)\|y\|_{1}+\overline{\mathcal{F}}\|\frak{M}y\|_{1} ,   
 \end{align*}
 and the right side is smaller than a multiple of $\|y\|_{\frak{M}}$.


\end{proof}

\section{Smoothness of the fiber generators $\mathcal{L}_{\vec{k}}$ }\label{SecDiff}

The main goal of this section is to show the operators $\mathcal{L}_{\vec{k}}$ have three derivatives in $\vec{k}$ that are relatively bounded to $\mathcal{L}_{0}$.  Showing this will require bounding derivatives of the scattering amplitude $(\partial_{\mathbf{z}}^{N} \mathbf{f})(\mathbf{p},\theta)$ for $N\leq 3$ for large and small $\mathbf{p}$.  By the discussion in Sect.~\ref{SecDis},   $\mathbf{f}(\mathbf{p},\theta)$ has a peak  with height proportional to $\mathbf{p}$ occurring in the forward direction $\theta\ll 1$  for $\mathbf{p}\gg \frac{\hbar}{a}$.  Hence, it would not be expected that the supremum $\sup_{\theta\in [0,\pi]} \big|(\partial_{\mathbf{z}}^{N}\mathbf{f})(\mathbf{p},\theta)\big|$ is well-behaved for large $\mathbf{p}$.  However, controllable behavior can be found by integrating the square modulus over the angular variables ($d\Omega=\sin(\theta)d\theta d\phi$)
$$\int_{\Omega} \big| ( \partial_{\mathbf{z}}^{N} \mathbf{f})(\mathbf{p},\theta)\big|^{2}=2\pi\int_{0}^{\pi}d\theta \sin(\theta)\big| ( \partial_{\mathbf{z}}^{N} \mathbf{f})(\mathbf{p},\theta)\big|^{2}. $$ 
The analysis in the proof of the lemma below  boils down to standard relations for Bessel functions and  Legendre polynomials.

\begin{lemma}\label{ScatAmpl}\text{  }
 There is a unitless constant $C>0$ such that for $N=1,\,2,\,3$,
\begin{enumerate} 
\item 
 $$ \sup_{\mathbf{p}\geq \frac{\hbar}{a} }  \int_{\Omega} \big| (\partial_{\mathbf{z}}^{N} \mathbf{f})(\mathbf{p},\theta)\big|^{2}\leq a^{2} (\frac{a}{\hbar})^{2N}C, $$
 
\item 
$$ \sup_{0\leq\mathbf{p}\leq \frac{\hbar}{a} }  \int_{\Omega} \big| (\partial_{\mathbf{z}}^{N} \mathbf{f})(\mathbf{p},\theta)\big|^{2}\leq |\frac{\hbar}{a\mathbf{p}}|^{2\delta_{N,3}}  a^{2} (\frac{a}{\hbar})^{2N}C.  $$

\end{enumerate}

\end{lemma}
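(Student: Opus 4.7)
The plan is to combine the partial-wave expansion~(\ref{PartWave}) with Parseval's identity for Legendre polynomials, which turns any angular $L^{2}$-integral into a sum over the angular-momentum index $\ell$, and then to exploit the effective cutoff $\ell\sim\kappa := \tfrac{a}{\hbar}\mathbf{p}$ inherent in hard-sphere scattering. The basic identity is
\begin{equation*}
\int_{\Omega}\Big|\tfrac{\hbar}{2\ii\mathbf{p}}\sum_{\ell}(2\ell+1)c_{\ell}(\mathbf{p})\frak{L}_{\ell}(\cos\theta)\Big|^{2} = \tfrac{\pi\hbar^{2}}{\mathbf{p}^{2}}\sum_{\ell}(2\ell+1)|c_{\ell}(\mathbf{p})|^{2},
\end{equation*}
so the task reduces to writing $\partial_{\mathbf{z}}^{N}\mathbf{f}$ as such a Legendre series and bounding its coefficients.

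The first step is to propagate $\partial_{\mathbf{z}} = \cos(\theta/2)\partial_{\mathbf{p}} - 2\mathbf{p}^{-1}\sin(\theta/2)\partial_{\theta}$ through~(\ref{PartWave}). The radial piece replaces $S_{\ell}(\kappa)-1$ by $\tfrac{a}{\hbar}\partial_{\kappa}S_{\ell}(\kappa)$; the angular piece is eliminated using $\partial_{\theta}\frak{L}_{\ell}(\cos\theta) = -\sin\theta\,\frak{L}_{\ell}'(\cos\theta)$ together with the standard Legendre recursions $(1-x^{2})\frak{L}_{\ell}'(x) = \ell\frak{L}_{\ell-1}(x)-\ell x\frak{L}_{\ell}(x)$ and $(2\ell+1)x\frak{L}_{\ell}(x) = (\ell+1)\frak{L}_{\ell+1}(x)+\ell\frak{L}_{\ell-1}(x)$; the non-polynomial factors $\cos(\theta/2)$ and $\sin(\theta/2)$ themselves admit Legendre expansions whose coefficients decay as $\ell^{-2}$. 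Iterating at most three times, $\partial_{\mathbf{z}}^{N}\mathbf{f}$ becomes a Legendre series whose $\ell$-th coefficient is bounded, up to a universal constant, by $\bigl(\tfrac{a}{\hbar}\bigr)^{N}\max_{j\leq N,\,|\ell'-\ell|\leq N}\bigl|\partial_{\kappa}^{\,j}S_{\ell'}(\kappa)\bigr|$. For the hard sphere $S_{\ell}(\kappa) = -h_{\ell}^{(2)}(\kappa)/h_{\ell}^{(1)}(\kappa)$, the Wronskian identity $h_{\ell}^{(1)}(h_{\ell}^{(2)})'-h_{\ell}^{(2)}(h_{\ell}^{(1)})' = -2\ii/\kappa^{2}$ gives $\partial_{\kappa}S_{\ell}(\kappa) = 2\ii/\bigl(\kappa^{2}h_{\ell}^{(1)}(\kappa)^{2}\bigr)$, and higher derivatives follow inductively from the spherical-Bessel recursions.

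It then remains to bound $\sum_{\ell}(2\ell+1)|c_{\ell,N}(\mathbf{p})|^{2}$. For $\ell\gtrsim\kappa$ the super-geometric growth of $|h_{\ell}^{(1)}(\kappa)|$ in $\ell$ makes the tail negligible, while for $\ell\lesssim\kappa$ the unitarity bound $|S_{\ell}|=1$ combined with $|h_{\ell}^{(1)}(\kappa)|\geq |h_{0}^{(1)}(\kappa)| = \kappa^{-1}$ caps each of the $O(\kappa)$ surviving terms by $O(1)$. The whole sum is then of order $\kappa^{2}(a/\hbar)^{2N}$, and multiplying by the Parseval prefactor $\pi\hbar^{2}/\mathbf{p}^{2}$ recovers the Part~(1) bound $a^{2}(a/\hbar)^{2N}C$. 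For Part~(2), the regime $\kappa\leq 1$ forces $|S_{\ell}(\kappa)-1| = O(\kappa^{2\ell+1})$ for $\ell\geq 1$, leaving only finitely many non-negligible terms; the cancellations produced by expanding $\cos(\theta/2)$, $\sin(\theta/2)$ in Legendre polynomials absorb the $s$-wave contribution for $N\leq 2$, whereas for $N=3$ one derivative of $S_{0}(\kappa) = e^{-2\ii\kappa}$ remains uncompensated, producing the announced singular prefactor $|\hbar/(a\mathbf{p})|^{2}$.

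The main obstacle I foresee is the bookkeeping of the convolution in $\ell$ between the Legendre coefficients of $\cos(\theta/2)$, $\sin(\theta/2)$ and the $S_{\ell}$ coefficients, particularly in the transitional window $\ell\sim\kappa$ where neither small- nor large-argument Bessel asymptotics apply cleanly and where the forward-diffraction peak of $\mathbf{f}$ lives. Since only angular $L^{2}$-norms are required, however, one can sidestep the sharp Watson--Nussenzveig analysis of~(\ref{PartWaveInt}) and proceed via crude geometric-series estimates based on the uniform lower bound $|h_{\ell}^{(1)}(\kappa)|\geq \kappa^{-1}$.
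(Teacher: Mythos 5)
Your sketch runs on the same engine as the paper's proof: the partial-wave expansion, a Parseval-type identity in the angular variable, the Wronskian identity $\partial_{\kappa}S_{\ell}(\kappa)=-2\ii\kappa^{-2}/\big(h_{\ell}^{(1)}(\kappa)\big)^{2}$, the lower bound $|h_{\ell}^{(1)}(\kappa)|\geq \kappa^{-1}$ for the $O(\kappa)$ bulk terms, decay of $|S_{\ell}-1|$ for $\ell\gtrsim\kappa$, and the small-$\kappa$ power series for Part (2). The genuine gap is in your reduction step: the claim that $\partial_{\mathbf{z}}^{N}\mathbf{f}$ can be written as a single Legendre series whose $\ell$-th coefficient is bounded by $(\tfrac{a}{\hbar})^{N}\max_{j\leq N,\,|\ell'-\ell|\leq N}|\partial_{\kappa}^{j}S_{\ell'}(\kappa)|$. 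No such banded structure exists. The factors $\cos(\theta/2)$ and $\sin(\theta/2)$ are not polynomials in $\cos\theta$; as you yourself note, their Legendre coefficients decay only like $\ell^{-2}$, so multiplying a partial-wave series by them is a full convolution in $\ell$, not one of width $N$. In particular, a coefficient at $\ell\gg\kappa$ receives contributions of polynomial size (roughly $(\ell-\kappa)^{-2}$) from the $O(1)$ coefficients at $\ell'\lesssim\kappa$, rather than the super-exponentially small quantities your local bound would predict, so the tail estimate built on that bound does not go through as written. Likewise $\partial_{\theta}\frak{L}_{\ell}(\cos\theta)=-\sin\theta\,\frak{L}_{\ell}'(\cos\theta)$ is an associated Legendre function, not a finite combination of the $\frak{L}_{\ell'}(\cos\theta)$, so the recursions you cite do not keep you inside the plain Legendre system.

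The paper sidesteps both issues: since $\cos(\theta/2)$, $\sin(\theta/2)$ and all their derivatives are bounded, it pulls them out in sup norm and reduces to bounding $\int_{\Omega}|\mathbf{p}^{-u}\partial_{\theta}^{v}\partial_{\mathbf{p}}^{w}\mathbf{f}|^{2}$ as in~(\ref{Mixed}); angular derivatives are then expanded in associated Legendre functions $\frak{L}_{\ell}^{m}$ at fixed $m$, whose exact orthogonality~(\ref{LegendOrtho}) together with the recursion~(\ref{Incurs}) replaces your convolution bookkeeping. If you adopt that reduction, the rest of your outline essentially reproduces the paper's argument. Two further corrections: the tail $\ell\gtrsim\kappa$ genuinely needs quantitative large-order Bessel asymptotics, as in~(\ref{Asy}) and~(\ref{Asy2}) — the lower bound $|h_{\ell}^{(1)}(\kappa)|\geq\kappa^{-1}$ alone only gives $|S_{\ell}-1|\leq 2$ and $|\partial_{\kappa}S_{\ell}|\leq 2$, which is not summable against the weights $(2\ell+1)$ or $\ell^{3}$ (this is still far short of Watson's transformation, so your wish to avoid the sharp uniform analysis is realized in the paper as well). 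Finally, in Part (2) the singular factor at $N=3$ arises in the paper's bookkeeping from the $u=3$ term $\mathbf{p}^{-3}\partial_{\theta}\mathbf{f}$ acting on the $\ell\geq 1$ waves, cf.~(\ref{Trisket}), not from an uncompensated derivative of $S_{0}$: the $s$-wave has no angular dependence and is annihilated by $\partial_{\theta}$.
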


\begin{proof}
The partial-wave expansion for $\mathbf{f}(\mathbf{p},\theta)$ is given by
\beq \label{PartialWave}
\mathbf{f}(\mathbf{p},\theta)=\frac{a}{2\ii\kappa}\sum_{\ell =0}^{\infty}(2\ell+1)\big( S_{\ell}(\kappa)  -1\big)\frak{L}_{\ell}(\cos(\theta)),  
\eeq
where $a$ is the radius of the sphere, $\kappa=\frac{a}{\hbar}\mathbf{p}$, $\frak{L}_{\ell}$ are the Legendre polynomials, and the values $ S_{\ell}(\kappa)\in \C$ have modulus one and are equal to 
\beq \label{Trans}
\hspace{2cm}S_{\ell}(\kappa)=- \frac{h_{\ell}^{(2)}(\kappa)}{h_{\ell}^{(1)}(\kappa)  }= \frac{y_{\ell}(\kappa)+j_{\ell}(\kappa)\textup{i}}{y_{\ell}(\kappa)-j_{\ell}(\kappa)\textup{i} }, \quad \quad \hspace{.5cm} \kappa \in \R_{+}, 
\eeq
for spherical Bessel functions $j_{\ell}$, $y_{\ell}$ of the first and second kind respectively. Since the values $S_{\ell}(\kappa)-1$ decay super-exponentially in $\ell$ and  $|\frak{L}_{\ell}(x)|\leq 1$ for $-1\leq x\leq 1$, the series~(\ref{PartialWave}) is uniformly convergent.  It follows the series are also convergent in the norm   $L^{2}\big([0,\pi];\,d\theta \sin(\theta)\big)$, since the domain $[0,\pi]$ is compact. In the consideration of the derivatives of $\mathbf{f}(\mathbf{p},\theta)$, the sums involved will also be uniformly convergent for $\theta\in [0,1]$, since $\partial_{\kappa}^{n_{1}}\big(\frac{S_{\ell}(\kappa)-1}{\kappa}\big)$ decays super-exponentially in $\ell$ and $\sup_{-1\leq x\leq 1}\big|\partial_{x}^{n_{2}}\frak{L}_{\ell}(x)\big|$ increases polynomially for $n_{1},n_{2}\in \mathbb{N}$.

 The identities used in the remainder of the proof related to the special functions $j_{\ell}(\kappa)$, $y_{\ell}(\kappa)$ and $\frak{L}_{\ell}$ can all be found in~\cite{Handbook}.  Some of the facts on the Bessel functions are stated in~\cite{Handbook} for the cylindrical Bessel functions $J_{\ell+\frac{1}{2}}(\kappa)=\big(\frac{2\kappa}{\pi}\big)^{\frac{1}{2}}  j_{\ell}(\kappa)$ and  $Y_{\ell+\frac{1}{2}}(\kappa)=\big(\frac{2\kappa}{\pi}\big)^{\frac{1}{2}}  y_{\ell}(\kappa)$.

\vspace{.3cm}

\noindent Part (1):\\

  Since $\partial_{\mathbf{z}}=\cos(2^{-1}\theta)\partial_{\mathbf{p}}-2\mathbf{p}^{-1}\sin(2^{-1}\theta)\partial_{\theta}$ 
 and the factors $\cos(2^{-1}\theta)$, $\sin(2^{-1}\theta)$ and their derivatives are bounded, I can simply bound 
\beq \label{Mixed} \int_{\Omega}\big| \mathbf{p}^{-u}\partial_{\theta}^{v}\partial_{\mathbf{p}}^{w}\mathbf{f}(\mathbf{p},\theta)\big|^{2}\quad \text{for }\quad  v+w\leq N, \quad  u+w=N,
\eeq
where $u$ can only be non-zero if $v$ is also non-zero.  Since I am considering $\frac{a}{\hbar}\mathbf{p}>1$, I can  take $u=v$.     

  The basic pattern of the analysis will be clear from a treatment of the first derivatives  $  \partial_{\mathbf{p}} \mathbf{f}(\mathbf{p},\theta) $ and $ \mathbf{p}^{-1}\,\partial_{\theta} \mathbf{f}(\mathbf{p},\theta)$ along with a few extra comments on extending the bounds to higher derivatives.  I will focus on the angular derivative first:     
$$ \mathbf{p}^{-1}\,\partial_{\theta} \mathbf{f}(\mathbf{p},\theta)=-\frac{a^{2}}{2\ii\hbar\kappa^{2} }\sum_{\ell=0}^{\infty}(2\ell+1)\big(S_{\ell}(\kappa)-1\big)\frak{L}_{\ell}^\prime\big(\cos(\theta)\big)\sin(\theta).    $$
I have that  $\frak{L}_{\ell}^\prime(\cos(\theta))\sin(\theta)=\frak{L}_{\ell}^{m}(\cos(\theta))$ for $m=1$, where $\frak{L}_{\ell}^{m}$ are the associated Legendre functions.  The family $\frak{L}_{\ell}^{m}$, $\ell \geq m$ is orthogonal for fixed $m$ and have norms given by
\begin{align}\label{LegendOrtho}
\int_{-1}^{1}dy\, |\frak{L}_{\ell}^{m}(y)|^{2}=\frac{2}{2\ell +1}\frac{(\ell +m)!   }{(\ell-m)!  }.    
\end{align}
Thus, 
\beq \label{AngleDer}
 \frac{1}{\mathbf{p}^{2} } \int_{\Omega} \big| \partial_{\theta} \mathbf{f}(\mathbf{p},\theta)\big|^{2}= \frac{\pi a^{4}}{\hbar^{2}\kappa^{4} }\sum_{\ell=1}^{\infty}\ell(\ell+1)(2\ell+1)  |S_{\ell}(\kappa)-1|^{2}\, 
 \eeq
For the terms in the sum with $\ell\leq \lfloor 2\kappa \rfloor $, I use the crude upper bound
\begin{align*} \frac{\pi a^{4}}{\hbar^{2}\kappa^{4} }\sum_{\ell=1}^{\lfloor 2\kappa  \rfloor}\ell(\ell+1)(2\ell+1)  |S_{\ell}(\kappa)-1|^{2} & \leq    \frac{4\pi a^{4}}{\hbar^{2}\kappa^{4} }\sum_{\ell=1}^{\lfloor 2\kappa \rfloor}\ell(\ell+1)(2\ell+1)\\ &< \frac{12\pi a^{4}}{\hbar^{2}}\int_{0}^{2}dy\, y^{3} = \frac{48\pi a^{4}}{\hbar^{2}},   \end{align*} 
where the second inequality uses a Riemann approximation. 

For the sum of terms with $\ell> \lfloor 2\kappa \rfloor $, I require a study of equation~(\ref{Trans}).  From~\cite{Handbook}, I have the asymptotic relations 
\begin{eqnarray}
\hspace{1cm} j_{\ell}\big((\ell+2^{-1})\,\textup{sech}(\alpha)\big) &=& \frac{e^{(\ell+\frac{1}{2})(\tanh(\alpha)-\alpha)}}{2 (\ell+\frac{1}{2}) \big(\textup{sech}(\alpha)\tanh(\alpha)\big)^{\frac{1}{2}}}\Big(1+\mathit{O}(\frac{\log (\ell)^{3}}{\ell})\Big) \quad \quad \alpha>0,\,\ell\gg 1 \nonumber \\ 
y_{\ell}\big((\ell+2^{-1})\,\textup{sech}(\alpha)\big)&=& \frac{e^{(\ell+\frac{1}{2})(\alpha- \tanh(\alpha))}}{ (\ell+\frac{1}{2})\big(\textup{sech}(\alpha)\tanh(\alpha)\big)^{\frac{1}{2}}}\Big(1+\mathit{O}(\frac{\log (\ell)^{3}}{\ell})\Big) \label{Asy}.
\end{eqnarray}

For $\ell > \lfloor 2\kappa  \rfloor $ and $\kappa\gg 1$,    
$$  |S_{\ell}(\kappa)-1|=   2\Big|\frac{j_{\ell}(\kappa)}{y_{\ell}(\kappa)-j_{\ell}(\kappa)\ii }\Big|<  2e^{(2\ell+1)(\tanh(\alpha)-\alpha)}\leq  2e^{-(\ell+\frac{1}{2})  } ,            $$
for $(\ell+2^{-1})\textup{sech}(\alpha)=\kappa$.  In the first inequality above, I introduced an extra factor of $2$ to cover the errors from~(\ref{Asy}), and the second  uses  $\tanh(\alpha)-\alpha<-\frac{1}{2}$ for $\cosh(\alpha)\geq 2$. When $\kappa \gg  1$ and $\ell \gg \kappa=(\ell+\frac{1}{2})\textup{sech}(\alpha)$, then~(\ref{Asy}) becomes
\beq \label{Asy2}
 j_{\ell}(\kappa)\approx 2^{-\frac{1}{2}}\kappa^{-\frac{1}{2}}(2\ell+1)^{-\frac{1}{2}}   \big( \frac{e \kappa}{2\ell+1}   \big)^{\ell+\frac{1}{2}}\quad \text{and}\quad  y_{\ell}(\kappa)\approx  2^{\frac{1}{2}}\kappa^{-\frac{1}{2}}(2\ell+1)^{-\frac{1}{2}} \big( \frac{e \kappa}{2\ell+1}   \big)^{-\ell-\frac{1}{2}}.
 \eeq
Hence, for $\ell > \lfloor 2\kappa \rfloor^{2} $,  I have a sharper upper bound than the one above that is given by
$$  |S_{\ell}(\kappa)-1|\leq  2\big(\frac{e\kappa }{2\ell+1 }   \big)^{2\ell+1}.$$

Continuing the analysis for the sum of terms with $\ell > \lfloor 2\kappa  \rfloor $, then with the estimates above
\begin{multline*}
\frac{\pi a^{4}}{\hbar^{2} \kappa^{4} }\sum_{\ell= \lfloor \kappa \rfloor+1}^{\infty}\ell(\ell+1)(2\ell+1)  |S_{\ell}(\kappa)-1|^{2} \\ \leq \frac{2\pi a^{4}}{\hbar^{2} \kappa^{4} }\sum_{\ell > \lfloor 2\kappa  \rfloor+1}^{\lfloor 2\kappa \rfloor^{2}}\ell(\ell+1)(2\ell+1) e^{-(\ell+\frac{1}{2})  }  +\frac{2\pi a^{4}}{\hbar^{2} \kappa^{4} }\sum_{\ell > \lfloor 2\kappa  \rfloor^{2}+1}^{\infty}\ell(\ell+1)(2\ell+1)  \big(\frac{e\kappa }{2\ell+1   }   \big)^{4\ell+2}.
\end{multline*}
However, the first term will decay exponentially for large $\kappa$, and the second term will decay super-exponentially.  

For the higher-order angular derivatives, $(\mathbf{p}^{-1}\partial_{\theta})^{j}\mathbf{f}(\mathbf{p},\theta)$ for $j=2,3$,  I can follow the same routine except I will need to use a few extra identities for Legendre polynomials at the outset.  For the second derivative, I use that $\partial_{\theta}^{2}$ of $\frak{L}_{\ell}\big(\cos(\theta)\big)$, $\ell\geq 2$ can be expressed as 
\begin{align}\label{SecOrder}
     \partial_{\theta}^{2}\frak{L}_{\ell}\big(\cos(\theta)\big)= \sin(\theta) \frak{L}_{\ell}^{1\,\prime }\big(\cos(\theta)\big)   = 2^{-1}(\ell+1)\ell \frak{L}_{\ell}\big(\cos(\theta)\big)-2^{-1} \frak{L}_{\ell}^{2}\big(\cos(\theta)\big),
\end{align}
where $\frak{L}_{\ell}^{m\,\prime }$ is the derivative of the Legendre polynomial $\frak{L}_{\ell}^{m}$.  In the above, I have used the definition of $\frak{L}_{\ell}^{1\,\prime }$ and the identity 
\begin{align}\label{Incurs}
(1-x^{2})^{\frac{1}{2}}\frak{L}_{\ell}^{m\,\prime}(x)=2^{-1}(\ell+1)(\ell-m+1)\frak{L}_{\ell}^{m-1}(x)-2^{-1}\frak{L}_{\ell}^{m+1}(x), \quad \quad   m+1\geq \ell . 
\end{align}

By~(\ref{SecOrder}) and  the inequality $(\sum_{j=1}^{3}x_{j})^{2}\leq 3\sum_{j=1}^{3}x_{j}^{2} $,
\begin{align}\label{Tri}
 \int_{\Omega} \Big| &(\mathbf{p}^{-1} \partial_{\theta})^{2}\mathbf{f}(\mathbf{p},\theta)\Big|^{2}\nonumber \\ \leq &\frac{3\pi a^{6}}{8\hbar^{4}\kappa^{6}}\int_{0}^{\pi}d\theta \sin(\theta)\Big|\sum_{\ell=2}^{\infty}\ell(\ell+1)(2\ell+1)\big(S_{\ell}(\kappa)-1\big) \frak{L}_{\ell}\big(\cos(\theta)\big) \Big|^{2}\nonumber \\ &+\frac{3\pi a^{6}}{8\hbar^{4}\kappa^{6}}\int_{0}^{\pi}d\theta  \sin(\theta) \Big|\sum_{\ell=2}^{\infty}(2\ell+1)\big(S_{\ell}(\kappa)-1\big) \frak{L}_{\ell}^{2}\big(\cos(\theta)\big) \Big|^{2}+\frac{3\pi^{2} a^{6}}{8\hbar^{4}\kappa^{6}}\big|\big(S_{1}(\kappa)-1\big)\big|^{2}
\end{align}
where the last term of the right side is the $\ell=1$ term from the partial-wave expansion.   To simplify the expressions in~(\ref{Tri}), I can then use that  the  $\frak{L}_{\ell}^{m}\big(\cos(\theta)\big)$ are orthogonal for fixed $m$ and have normalization~(\ref{LegendOrtho}).  For the third term on the right-side of~(\ref{Tri}),     
\begin{align*}
\frac{3\pi a^{6}}{8\hbar^{4}\kappa^{6}}\int_{0}^{\pi}d\theta \sin(&\theta) \Big|\sum_{\ell=2}^{\infty}(2\ell+1)\big(S_{\ell}(\kappa)-1\big) \frak{L}_{\ell}^{2}\big(\cos(\theta)\big) \Big|^{2}\\ &=\frac{3\pi a^{6}}{4\hbar^{4}\kappa^{6}}\sum_{\ell=2}^{\infty}(2\ell+1)(\ell+2)(\ell+1)\ell(\ell-1)  \big|S_{\ell}(\kappa)-1\big|^{2} \\ &\leq \frac{3\pi a^{6}}{\hbar^{4}\kappa^{6}}\sum_{\ell=2}^{\lfloor 2\kappa  \rfloor}(\ell+2)^{5} +\frac{3\pi a^{6}}{2\hbar^{4}\kappa^{6}}\sum_{\ell=\lfloor 2\kappa \rfloor+1}^{\infty}(\ell+2)^{5}  \big|S_{\ell}(\kappa)-1\big|^{2} . 
\end{align*}
I can apply the same analysis as for the first order derivative by giving a separate treatment of the terms with $ \ell\leq \lfloor 2\kappa  \rfloor$, $ \lfloor 2\kappa  \rfloor<\ell \leq  \lfloor 2\kappa  \rfloor^{2} $, and $  \lfloor 2\kappa  \rfloor^{2} <\ell $.  It is crucial the degree of the polynomial in $\ell$ is strictly smaller than the power of $\kappa^{-1}$ appearing in the prefactor. 

 The first term in~(\ref{Tri}) is handled similarly.  The third derivative requires two applications of~(\ref{Incurs}) to write $  (\mathbf{p}^{-1} \partial_{\theta})^{2}\frak{L}_{\ell}\big(\cos(\theta)\big)$ in terms of $\frak{L}_{\ell}^{u}\big(\cos(\theta)   \big)$ for $0\leq u\leq 3$ and is otherwise similar.

Now I bound the first-order radial derivative $  \partial_{\mathbf{p}} \mathbf{f}(\mathbf{p},\theta) $.  Differentiating term by term
\beq
\partial_{\mathbf{p}} \mathbf{f}(\mathbf{p},\theta)=\frac{a^{2}}{2\ii\hbar }\sum_{\ell=0}^{\infty}(2\ell+1)\partial_{\kappa}\big(\frac{S_{\ell}(\kappa)-1}{\kappa}\big)  \frak{L}_{\ell}(\cos(\theta)).  
\eeq
By the orthogonality of the functions $\frak{L}_{\ell}(\cos(\theta))$ and the inequality $(x+y)^{2}\leq 2(x^{2}+y^{2})$, 
\begin{align}\label{Gaza}
 \int_{\Omega} \big| \partial_{\mathbf{p}} \mathbf{f}(\mathbf{p},\theta)\big|^{2}&=  \frac{\pi a^{4}}{\hbar^{2}\kappa^{2}}\sum_{\ell=0}^{\infty} (2\ell+1) \big|-\frac{1}{\kappa}\big(S_{\ell}\big(\kappa\big)-1\big)+\partial_{\kappa}S_{\ell}\big(\kappa\big)    \big|^{2}\nonumber  \\ &\leq \frac{2\pi a^{4}}{\hbar^{2}\kappa^{4}}\sum_{\ell=0}^{\infty} (2\ell+1) \big|S_{\ell}\big(\kappa\big)-1\big|^{2}+\frac{2\pi a^{4}}{\hbar^{2}\kappa^{2}}\sum_{\ell=0}^{\infty} (2\ell+1)\big|\partial_{\kappa} S_{\ell}(\kappa) \big|^{2}. 
 \end{align}
   The first term on the right side is $\frac{2\hbar^{2}}{a^{2}\mathbf{p}^{4}}\sigma_{\textup{tot}}(\mathbf{p})$, which is bounded over the domain $\mathbf{p}\geq 1$,  since $    \sigma_{\textup{tot}}(\mathbf{p})$ is bounded.  For the second term on the right of~(\ref{Gaza}), I can obtain an expression for $\partial_{\kappa}S_{\ell}(\kappa)$ as 
\begin{align} \label{Hankel}
\partial_{\kappa}S_{\ell}(\kappa)& =\frac{2\ii}{\big(y_{\ell}(\kappa)- j_{\ell}(\kappa)\ii    \big)^{2}} \big(j_{\ell}^\prime(\kappa)y_{\ell}(\kappa)-j_{\ell}(\kappa)y_{\ell}^{\prime}(\kappa)  \big)\nonumber \\ &= \frac{-2\ii\kappa^{-2}}{\big(h_{\ell}^{(1)}(\kappa)  \big)^{2}}, 
\end{align}
where $h_{\ell}^{(1)}(\kappa)=  j_{\ell}(\kappa)+y_{\ell}(\kappa)\textup{i} $  and I have used the Wronskian identity $W\big[j_{\ell}(\kappa),\,y_{\ell}(\kappa)\big]=\kappa^{-2}$.   Thus, 
\begin{align}
\frac{2\pi a^{4}}{\hbar^{2} \kappa^{2}}\sum_{\ell=0}^{\infty}(2\ell+1) \big|\partial_{\kappa}S_{\ell}\big(\kappa  \big) \big|^{2}=  \frac{8\pi a^{4}}{\hbar^{2}\kappa^{2} }\sum_{\ell=0}^{\infty}(2\ell+1)\frac{\kappa^{-4} }{m_{\ell }^{4}(\kappa) }, 
\end{align}
where $m_{v}(\kappa)=\big(y_{v}^{2}(\kappa )+j_{v}^{2}(\kappa )\big)^{\frac{1}{2}}$.  Again I give a separate analysis for the terms $\ell> \lfloor 2\kappa \rfloor $ and $\ell\leq \lfloor 2\kappa  \rfloor $.   For  the sum  $\ell\leq \lfloor 2\kappa  \rfloor $, we can use  $m_{\ell}(\kappa)\geq \kappa^{-1} $ for all $\ell \in\mathbb{N}$ and $\kappa\in \R_{+}$  
$$ 
 \frac{8\pi a^{4}}{\hbar^{2}\kappa^{2}}\sum_{\ell=0}^{ \lfloor 2\kappa  \rfloor}(2\ell+1)\frac{\kappa^{-4}}{ m_{\ell}^{4}(\kappa)}    \leq  \frac{8\pi a^{4}}{\hbar^{2}\kappa^{2}}\sum_{\ell=0}^{ \lfloor 2\kappa \rfloor}(2\ell+1)\leq  \frac{32\pi a^{4}}{\hbar^{2}}.  $$
The sum of terms with $\ell> \lfloor 2\kappa  \rfloor $ can be treated with the estimates~(\ref{Asy}) and~(\ref{Asy2}), which I used for the angular derivative $\partial_{\theta} \mathbf{f}(\mathbf{p},\theta)$:  
\begin{multline*}
\frac{8\pi a^{4}}{\hbar^{2}\kappa^{2}}\sum_{\ell> \lfloor 2\kappa \rfloor }^{\infty}(2\ell+1)\frac{\kappa^{-4}}{m_{\ell }^{4}(\kappa)  } \leq  \frac{8\pi a^{4}}{\hbar^{2}\kappa^{6}}\sum_{\ell> \lfloor 2 \kappa \rfloor }^{\infty}\frac{2\ell+1}{y_{\ell }^{4}(\kappa) }\\ \leq \frac{8 \pi a^{4} }{\hbar^{2}\kappa^{2}}\sum_{\ell> \lfloor 2\kappa \rfloor }^{ \lfloor 2\kappa \rfloor^{2} }(2\ell+1)e^{-\frac{1}{2}(\ell+\frac{1}{2})} + \frac{4\pi a^{4} e^{2}}{\hbar^{2}\kappa^{2}}\sum_{\ell> \lfloor 2\kappa \rfloor^{2} }^{\infty} (2\ell+1)\big(\frac{ e \kappa   }{2\ell+1 }    \big)^{4\ell},  
\end{multline*}
where I have doubled the coefficients to cover the error for the asymptotics. The right side is exponentially small for large $\kappa$.

For the higher derivatives $\partial_{p}^{w}\mathbf{f}(\mathbf{p},\theta)$ with $w=2,3$, the coefficients in the partial-wave expansion include $\partial_{\kappa}^{w}\big(\frac{S(\kappa)-1}{\kappa}\big)   $, and it is necessary to find convenient expressions for the derivatives $\partial_{\kappa}^{w}S(\kappa)$.  Using the relation
 \beq \label{HankelDir} 
  \partial_{\kappa}h_{\ell}^{(1)}(\kappa)= \frac{\ell}{\kappa}h_{\ell}^{(1)}(\kappa)-h_{\ell+1}^{(1)}(\kappa) 
 \eeq 
 and~(\ref{Hankel}), then  
\beq \label{SecEss}
 \partial_{\kappa}^{2}S_{\ell}(\kappa)=\partial_{\kappa}^{1}\Big( \frac{-2 \ii }{\kappa^{2}\big(h_{\ell}^{(1)}(\kappa)\big)^{2}} \Big)= \frac{4(\ell+1)\ii }{\kappa^{3}\big(h_{\ell}^{(1)}(\kappa)\big)^{2}}-\frac{4h_{\ell+1}^{(1)}(\kappa)\ii }{\kappa^{2}\big(h_{\ell}^{(1)}(\kappa)\big)^{3}}.  
 \eeq
The absolute value is less than
$$ \big|\partial_{\kappa}^{2}S(\kappa)\big|\leq  \frac{4(\ell+1) }{\kappa^{3} m_{\ell}^{2}(\kappa)}+\frac{4m_{\ell+1}(\kappa) }{\kappa^{2}m_{\ell}^{3}(\kappa)}.   $$
Hence, I have the upper-bound for the integral of $\big| (\partial_{\mathbf{p}}^{2} \mathbf{f})(\mathbf{p},\theta)\big|^{2}$ given by
\begin{multline}\label{NoLand}
\int_{\Omega} \big| \partial_{\mathbf{p}}^{2} \mathbf{f}(\mathbf{p},\theta)\big|^{2} \\ \leq  \frac{12\pi a^{4}}{\hbar^{2} }\sum_{\ell=0}^{\infty}(2\ell+1)\Big(\kappa^{-6} \big|S_{\ell}\big(\kappa\big)-1 \big|^{2}+ \kappa^{-8}(\ell+3)^{2} m_{\ell}^{-4}\big(\kappa\big)+\kappa^{-6}\frac{m_{\ell+1}^{2}(\kappa)}{m_{\ell}^{6}(\kappa)}   \Big), 
\end{multline}
where I used  $\big(x_{1}+x_{2}+x_{3})^{2}\leq 3\big(x_{1}^{2}+x_{2}^{2}+x_{3}^{2}\big)$.  As usual, the terms in the sum can be split  into $\ell\leq \lfloor 2\kappa\rfloor$, $ \lfloor 2\kappa\rfloor < \ell \leq  \lfloor 2\kappa\rfloor^{2} $, and $\ell>\lfloor 2\kappa\rfloor^{2}$.  For the $\ell\leq \lfloor 2\kappa\rfloor$ terms, I use  
\beq \label{EmRatio}
 \sup_{\kappa\geq 1,\,  0\leq \ell\leq \lfloor 2 \kappa\rfloor}\frac{m_{\ell+1}(\kappa)}{m_{\ell}(\kappa)}<\infty.    
 \eeq
  This can be seen by using the polynomial expression
$$ m_{\ell}(\kappa)=\frac{1}{\kappa^{2} }\sum_{n=0}^{\ell}\frac{(2\ell-n)!(2\ell-2n)!      }{n!(\ell-n)!^{2}   }(2\kappa)^{2(n-\ell)}  .$$
Hence, I can again use $m_{\ell}(\kappa)\geq \kappa^{-1}$ for the first  $\lfloor 2\kappa\rfloor$ terms in~(\ref{NoLand}).  The terms with $\ell> \lfloor 2\kappa\rfloor$ can be treated using the approximations~(\ref{Asy}) and (\ref{Asy2}).    

The third-order derivative $\partial_{p}^{3}$ can be bounded by the same techniques as for second-order.  I use~(\ref{HankelDir}) to reduce $\partial_{\kappa}^{3}S_{\ell}(\kappa)$ to an expression in terms of Hankel functions, and then  bound $|\partial_{\kappa}^{3}S_{\ell}(\kappa)|$ using the $m_{\ell}(\kappa)$'s.  An inequality of the form~(\ref{EmRatio}) will be required for the ratio $\frac{m_{\ell+2}(\kappa)}{m_{\ell}(\kappa)}$.

  Finally, the mixed derivatives $v\neq 0$, $w\neq 0$ in~(\ref{Mixed}) require no new observations from those above.

\vspace{.5cm}

\noindent Part (2):\\  

I now consider the case $\kappa=\frac{a}{\hbar}\mathbf{p}< 1$, although  my concern is for the regime $\kappa\ll 1$, since finite values of $\kappa$ are covered by the analysis in Part (1).  The factors $\mathbf{p}^{-1}$ appearing in~(\ref{Mixed}) grow as $ \mathbf{p}\rightarrow 0$, but this turns out only to prevent a constant upper bound for~(\ref{Mixed}) in the cases when $u=3$.   

The expression~(\ref{Mixed}) for $v=1$ and $w=0$ is equal to 
\begin{align} \label{Trisket}
 \int_{\Omega} \big| \mathbf{p}^{-N}\partial_{\mathbf{\theta}} \mathbf{f}(\mathbf{p},\theta)\big|^{2} &= \frac{\pi a^{2N+2}}{\hbar^{2N}\kappa^{2N} }\sum_{\ell=1}^{\infty}\ell(\ell+1)(2\ell+1)  \Big|\frac{S_{\ell}(\kappa)-1}{\kappa}\Big|^{2}\nonumber \\ &=  \frac{8\pi a^{2N+2}}{3\hbar^{2N}}\kappa^{2(2-N)}\big(1+\mathit{O}(\kappa^{2}) \big),  
 \end{align}
where the first equality is~(\ref{AngleDer}), and the approximation is for  $\kappa\ll 1$ and follows from the power expansions
\begin{eqnarray} 
j_{\ell}(\kappa)&= & \frac{\kappa^{\ell} }{(2\ell+1)!!}  \Big(1-   \frac{ 2^{-1}\kappa^{2} }{  1!(2\ell+3)}+\frac{ (2^{-1}\kappa^{2})^{2} }{  2!(2\ell+3)(2\ell+5)  }- \cdots    \Big), \quad \ell \in \mathbb{N} \nonumber \\ \label{SmallKappa} y_{\ell}(\kappa)& =& \frac{ (2\ell-1)!!}{ \kappa^{\ell+1}}  \Big(1-   \frac{ 2^{-1}\kappa^{2} }{  1!(1-2\ell)}+\frac{ (2^{-1}\kappa^{2})^{2} }{  2!(1-2\ell)(3-2\ell)   }- \cdots    \Big). 
\end{eqnarray}
With simpler arguments than for Part~(1), then~(\ref{Trisket}) has a constant upper bound for all $0\leq \mathbf{p}\leq \frac{\hbar}{a}$ when $N=1,2$ and is bounded by a constant multiple of $\mathbf{p}^{-1}$ when $N=3$.  For~(\ref{Mixed}) when $v=2,3$ and $w=0$,  I again use the identities~(\ref{LegendOrtho}) and~(\ref{Incurs})  as in Part (1) along with the approximation~(\ref{SmallKappa}).  When $w\neq 0$, then I write the derivatives  $\partial^{w}_{\kappa}\big(\frac{S_{\ell}(\kappa)-1}{\kappa}\big)$ in terms of Hankel functions as in Part (1) and use~(\ref{SmallKappa}).

\end{proof}

Part (2) of the lemma below is a technical point, which I use in the proof of the convergence of the of the diffusion constant in Thm.~\ref{MainThm}.

\begin{lemma}\label{TwoDerivatives}
Let $\vec{b},\vec{k}\in \R^{3}$.  

\begin{enumerate}
\item
 There is an asymptotic expansion 
$$\mathcal{L}_{\vec{k}+\vec{b}}=\mathcal{L}_{\vec{k}}+\sum_{j=1}^{3}b_{j}\mathcal{L}_{\vec{k}}^{(1)}(j)+\frac{1}{2}\sum_{i,j=1}^{3} b_{i}b_{j} \mathcal{L}_{\vec{k}}^{(2)}(i,j)+|\vec{b}|^{3}\frak{R}_{\vec{k}}(\vec{b}),     $$
for some operators $\mathcal{L}_{\vec{k}}^{(1)}(j)$, $\mathcal{L}_{\vec{k}}^{(2)}(i,j)=\mathcal{L}_{\vec{k}}^{(2)}(j,i)$,  and $\frak{R}_{\vec{k}}(\vec{b})$ that are relatively bounded to $\mathcal{L}_{0}$.  Moreover, the constants for the relative bounds are uniform for $\vec{k},\vec{b}$ in any bounded region.

\item  For all $\vec{b},\vec{k}$ in a bounded region and any $G=\mathcal{L}_{\vec{k}}^{(1)}(j)$, $\mathcal{L}_{\vec{k}}^{(2)}(i,j)$,  or $\frak{R}_{\vec{k}}(\vec{b})$, there is a $C_{n}>0$ such that 
$$\|G  y   \|_{\mathcal{L}_{0}^{n} }\leq C_{n}\| y\|_{\mathcal{L}_{0}^{n+1} }.      $$

\end{enumerate}

\end{lemma}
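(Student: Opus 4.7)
The plan is to expand each of the three constituents in the decomposition $\mathcal{L}_{\vec{k}} = -\tfrac{\ii}{\hbar} h_{\vec{k}} + \mathcal{J}_{\vec{k}} - \mathcal{E}_{\vec{k}}$ from~(\ref{FiberGen0}) as a Taylor polynomial in $\vec{b}$ around the base point $\vec{k}$, and then verify every coefficient operator (and the remainder) is bounded by a multiple of $\|\cdot\|_{\mathfrak{M}}$. By Part~(5) of Prop.~\ref{RelBnds}, this gives relative boundedness to $\mathcal{L}_{0}$. Part~(2) will then follow by an induction on $n$ that mimics the commutator argument in that same Part~(5).

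I begin with the Hamiltonian piece. The kinetic contribution to $h_{\vec{k}+\vec{b}}(\vec{p}) = H(\vec{p}-\tfrac{\hbar}{2}(\vec{k}+\vec{b})) - H(\vec{p}+\tfrac{\hbar}{2}(\vec{k}+\vec{b}))$ is the exact linear function $-\tfrac{1}{M}(\vec{k}+\vec{b})\cdot\vec{p}$; its $\vec{b}$-expansion has one linear term and no higher derivatives. The forward-scattering shift $H_{f}$ defined in~(\ref{AltDisp}) is Taylor-expanded with integral remainder; by differentiating under the integral and using the bounds of Lem.~\ref{ScatAmpl} on $\partial_{\mathbf{z}}^{N}\mathbf{f}$ (after rewriting $\partial_{p_{i}}$ on the integrand in terms of $\partial_{\mathbf{z}}$), together with the Gaussian weight $r(\vec{q})$, I obtain that $\nabla^{\otimes N}H_{f}(\vec{p})$ is bounded by a multiple of $1+|\vec{p}|$ for $N \leq 3$, hence the corresponding multiplication operators are bounded by $\|\cdot\|_{\mathfrak{M}}$.

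For the escape term $\mathcal{E}_{\vec{k}}$ I Taylor-expand the multiplier $\tfrac{1}{2}(\mathcal{E}(\vec{p}-\tfrac{\hbar}{2}(\vec{k}+\vec{b})) + \mathcal{E}(\vec{p}+\tfrac{\hbar}{2}(\vec{k}+\vec{b})))$ in $\vec{b}$. Using the integral representation~(\ref{HereToFore}) of $\mathcal{E}$ and applying Lem.~\ref{ScatAmpl} after Cauchy--Schwarz in the $\hat{\theta}$ integral controls the derivatives $\nabla^{\otimes N}\mathcal{E}$ by expressions at most linear in $|\vec{p}|$, reproducing the structure already visible in Lem.~\ref{Horses}. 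The hardest piece is $\mathcal{J}_{\vec{k}+\vec{b}}$, whose kernel~(\ref{OffRates}) depends on $\vec{k}+\vec{b}$ through the arguments $\mathbf{Q}_{\pm}, \Theta_{\pm}, \vec{z}_{\pm}$. Differentiating under the integral in $\vec{v}$, the $\vec{b}$-derivatives land either on the Gaussian factor $\exp(-\tfrac{\beta m \hbar^{2}}{8 M^{2}}\vec{k}_{\parallel \vec{q}}^{2})$ and the factor $r(\cdot)$ (producing polynomial factors absorbed by the Gaussians), or on the product $\mathbf{f}(\mathbf{Q}_{-},\Theta_{-})\overline{\mathbf{f}}(\mathbf{Q}_{+},\Theta_{+})$; the chain rule identifies these latter as $\partial_{\mathbf{z}}^{N_{1}}\mathbf{f} \cdot \partial_{\mathbf{z}}^{N_{2}}\overline{\mathbf{f}}$ with $N_{1}+N_{2}\leq 3$. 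After applying Cauchy--Schwarz in $\vec{v}$ and converting the $(\vec{q},\vec{v})$-integration to $(\hat{\theta},\vec{p}_{\mathrm{rel}})$ as in the proof of Lem.~\ref{Horses}, the $\Omega$-integrals are exactly those estimated by Lem.~\ref{ScatAmpl}, which yields the kernel bound $\int d\vec{q}\, |(\partial_{\vec{k}}^{\otimes N}\mathcal{J}_{\vec{k}})(\vec{p}+\vec{q},\vec{p})| \leq C(1+|\vec{p}|)$. The remainder $|\vec{b}|^{3}\mathfrak{R}_{\vec{k}}(\vec{b})$ is handled with the integral form of Taylor's theorem applied at shifted points $\vec{k}+s\vec{b}$, $s \in [0,1]$, where the same estimates apply uniformly.

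For Part~(2), I argue by induction on $n$: writing $\mathcal{L}_{0}^{n}G y = G \mathcal{L}_{0}^{n} y + [\mathcal{L}_{0}^{n},G] y$, the first term is controlled by Part~(1) combined with the equivalence $\|\cdot\|_{\mathcal{L}_{0}^{n+1}} \sim \|\cdot\|_{\mathfrak{M}^{n+1}}$, while the commutator unpacks inductively via $[\mathfrak{M}^{j},G]$ in the same spirit as the bound~(\ref{Juxt1}) used in the proof of Prop.~\ref{RelBnds}(5); computing $[\mathfrak{M}^{j},G]$ from the explicit formulas above and invoking Lem.~\ref{ScatAmpl} once more bounds it by $C_{j,G}\|\cdot\|_{\mathfrak{M}^{j+1}}$. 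The main obstacle is the bookkeeping for $\mathcal{J}_{\vec{k}}$: tracking which of the three derivatives falls on $\mathbf{Q}_{\pm}, \Theta_{\pm}, \vec{z}_{\pm}$, writing everything in terms of the mixed derivative $\partial_{\mathbf{z}}$ so that Lem.~\ref{ScatAmpl} applies directly, and confirming the change-of-variables and Cauchy--Schwarz keep all resulting integrals bounded in $\vec{p}$ uniformly in $\vec{k},\vec{b}$ ranging over any bounded region.
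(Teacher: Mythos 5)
Your route is essentially the paper's: decompose $\mathcal{L}_{\vec{k}}$ via~(\ref{FiberGen0}) into Hamiltonian, gain and loss parts, Taylor-expand each in the fiber variable, show the derivative operators up to third order are bounded by $\|\cdot\|_{\frak{M}}$, convert to relative $\mathcal{L}_{0}$-boundedness through Part (5) of Prop.~\ref{RelBnds}, and prove Part (2) by commuting the gain part with $\frak{M}^{n}$ in the spirit of~(\ref{Juxt1}). Your treatment of $\mathcal{J}_{\vec{k}}$ (derivatives on the Gaussian prefactors versus on $\mathbf{f}(\mathbf{Q}_{-},\Theta_{-})\overline{\mathbf{f}}(\mathbf{Q}_{+},\Theta_{+})$, the splitting of the cross terms, the change of variables to $(\hat{\theta},\vec{p}_{\textup{rel}})$, and Lem.~\ref{ScatAmpl}) and of the second-order remainder and Part (2) matches the paper's proof.

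The step that fails as written is your bound on $\nabla^{\otimes^{N}}H_{f}$. In~(\ref{AltDisp}) the test-particle momentum enters only through the argument of the forward amplitude $\mathbf{f}(\cdot,0)$, so differentiating under the integral puts all derivatives on $\mathbf{f}$ evaluated exactly at $\theta=0$, and there is no angular variable left to average over: the $\vec{q}$-integration only varies the radial argument. Lemma~\ref{ScatAmpl} controls the angular averages $\int_{\Omega}|\partial_{\mathbf{z}}^{N}\mathbf{f}(\mathbf{p},\theta)|^{2}$ and gives no pointwise information at the forward peak, which is precisely where $\mathbf{f}$ is largest; hence the Cauchy--Schwarz-plus-Lem.~\ref{ScatAmpl} device you use elsewhere is not available here, and a pointwise bound on $\partial_{\mathbf{z}}^{N}\mathbf{f}(\mathbf{p},0)$ is nowhere established in the paper. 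The paper avoids differentiating $\mathbf{f}$ altogether: after the change of variables $\vec{p}_{\textup{rel}}=\frac{m_{*}}{m}\vec{q}-\frac{m_{*}}{M}\vec{p}$, the $\vec{p}$-dependence sits only in the Gaussian $r\big(\frac{m}{m_{*}}\vec{p}_{\textup{rel}}+\frac{m}{M}\vec{p}\big)$, the derivatives produce Hermite polynomials against that Gaussian, and the remaining factor $\mathbf{f}(|\vec{p}_{\textup{rel}}|,0)$ is estimated pointwise by $|\mathbf{f}(\mathbf{p},0)|\leq c(a^{-1}\hbar+\mathbf{p})$ from Lem.~\ref{Leftover}. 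The same remark simplifies your loss term: after that change of variables the amplitude enters $\mathcal{E}$ only through the bounded quantity $\sigma_{\textup{tot}}(|\vec{p}_{\textup{rel}}|)$, so no Cauchy--Schwarz or Lem.~\ref{ScatAmpl} is needed there either; the only place where derivatives genuinely hit the amplitude, and where Lem.~\ref{ScatAmpl} is indispensable, is the gain kernel~(\ref{OffRates}), because the shifts $\pm 2^{-1}\hbar\vec{k}$ inside $\mathbf{Q}_{\pm},\Theta_{\pm}$ cannot be removed by a change of variables.
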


\begin{proof} \text{   }\\
\noindent Part (1):\\

By Part (5) of Lem.~\ref{RelBnds}, the graph norm of $\mathcal{L}_{0}$ is equivalent to the weighted norm $\|\cdot\|_{\frak{M}}$.  It is hence sufficient to show  the operator valued derivatives of $\mathcal{L}_{\vec{k}}$ up to third order are relatively bounded to multiplication by $|\vec{p}|$ with uniform constants in a bounded region of $\vec{k}$.  The generator $\mathcal{L}_{\vec{k}}$ splits into three terms
$$ \mathcal{L}_{\vec{k}}= -\frac{\ii}{\hbar}h_{\vec{k}}+        \mathcal{J}_{\vec{k}}-\mathcal{E}_{\vec{k}},        $$
which  operate as in~(\ref{FiberGen1}).  These terms will be treated
in Parts (i), (ii), (iii) below for 
$$ (\textup{i}).\,\, h_{\vec{k}}, \quad \quad \quad  (\textup{ii}).\,\,  \mathcal{J}_{\vec{k}}, \quad \quad  \quad (\textup{iii}).\,\,  \mathcal{E}_{\vec{k}}.        $$ 
These operators have domains including $\mathcal{T}_{1}$ by the proof of Part (6) of Prop.~\ref{RelBnds}.  An implication of the analysis here is that the operator derivatives can also be regarded as maps $(\nabla^{\otimes^{N}}_{\vec{k}}\mathcal{L}_{\vec{k}}):\mathcal{T}_{1}\rightarrow L^{1}(\R^{3},\C^{3^{N}}) $.

\vspace{.5cm}

\noindent (i).\hspace{.25cm} 
The operator $\mathcal{E}_{\vec{k}}$ acts as multiplication  with function  
$$\mathcal{E}_{\vec{k}}(\vec{p})=\frac{1}{2}\big(\mathcal{E}(\vec{p}-2^{-1}\hbar \vec{k})+\mathcal{E}(\vec{p}+2^{-1}\hbar \vec{k})\big).$$
The map $\nabla_{\vec{k}}^{\otimes^N}\mathcal{E}_{\vec{k}}:\mathcal{T}_{1}\rightarrow L^{1}(\R^{3},\, \C^{3^{N}})$ acts as multiplication by the vector-valued function
\beq \label{MultFun}
\big(\nabla_{\vec{k}}^{\otimes^N}\mathcal{E}_{\vec{k}}\big)(\vec{p})= \frac{1}{2}\big(\frac{\hbar}{2}\big)^{N}\big( (-1)^{N}(\nabla^{\otimes^N}\mathcal{E})(\vec{p}-2^{-1}\hbar \vec{k})+(\nabla^{\otimes^N}\mathcal{E})(\vec{p}+2^{-1}\hbar \vec{k})\big)  \eeq
Hence, it is sufficient to bound  $|\big(\nabla^{\otimes^N}\mce\big)(\vec{p})|$.  
The vectors $\big(\nabla^{\otimes^N}\mce\big)(\vec{p})$ can be expressed as
\begin{align*}
\big(\nabla^{\otimes^N}\mce\big)(\vec{p})=&  \eta \frac{m}{m_{*}^{2}}\nabla_{\vec{p}}^{\otimes^N}\int_{\R^{3}} d\vec{q}\frac{1}{|\vec{q}|}\int_{(\vec{q})_{\perp} }d\vec{v}\, r\big(\vec{v}+2^{-1}\frac{m}{m_{*}}\vec{q}+\frac{m}{M}\vec{p}_{\parallel \vec{q}} \big)  \,\frac{d\sigma}{d\Omega}\big(|\vec{p}_{\textup{rel} }(\vec{q},\vec{v},\vec{p}_{\perp \vec{q}})|,\,\theta(\vec{q},\vec{v},\vec{p}_{\perp \vec{q}})  \big)\\ =& \frac{\eta}{m_{*}}\nabla_{\vec{p}}^{\otimes^N}\int_{\R^{3}} d\vec{p}_{\textup{rel} }\big|\vec{p}_{\textup{rel}}  \big|\,r\big(\frac{m}{m_{*}}\vec{p}_{\textup{rel}}+\frac{m}{M}\vec{p}  \big)\sigma_{\textup{tot}}(|\vec{p}_{\textup{rel}}|)   \\ =&\frac{\eta}{m_{*}}\Big(\sqrt{\frac{\beta m}{M^{2}}}\Big)^N\int_{\R^{3}} d\vec{p}_{\textup{rel} }\big|\vec{p}_{\textup{rel}}  \big|\,\Big(\frac{  \frac{m}{m_{*}}\vec{p}_{\textup{rel}}+\frac{m}{M}\vec{p}   }{\big|  \frac{m}{m_{*}}\vec{p}_{\textup{rel}}+\frac{m}{M}\vec{p}   \big|   } \Big)^{\otimes^{N}}\\ &\times \frak{H}_{N}\Big(\sqrt{\frac{\beta}{m}}\Big| \frac{m}{m_{*}}\vec{p}_{\textup{rel}}+\frac{m}{M}\vec{p} \Big|   \Big)\,r\big(\frac{m}{m_{*}}\vec{p}_{\textup{rel}}+\frac{m}{M}\vec{p}  \big)\sigma_{\textup{tot}}(|\vec{p}_{\textup{rel}}|) ,
\end{align*}
where $\frak{H}_{N}$ is the $N$th  Hermite polynomial.  The second equality is the same change of integration variables as in the proof of Lem.~\ref{Horses}.

Using the above integral expression, $|\nabla^{\otimes^N}\mce(\vec{p})|$ is bounded by
\begin{multline*}
  \frac{\eta}{m_{*}}\big(\sup_{\mathbf{p}\geq 0}\sigma_{\textup{tot}}(\mathbf{p})\big) \Big(\sqrt{\frac{\beta m}{M^{2}}    }  \Big)^{N} \int_{\R^{3}} d\vec{p}_{\textup{rel}}  \big|\vec{p}_{\textup{rel}}  \big|\,r\big(\frac{m}{m_{*}}\vec{p}_{\textup{rel}}+\frac{m}{M}\vec{p}   \big)\, \Big| \frak{H}_{N}\Big(\sqrt{\frac{\beta}{m}}\Big| \frac{m}{m_{*}}\vec{p}_{\textup{rel}}+\frac{m}{M}\vec{p}  \Big|  \Big)\Big|
\\ \leq \eta\, \frac{m_{*}^{3}}{m^{4}} \, (\frac{m}{\beta})^{\frac{1}{2}}  \big(\sup_{\mathbf{p}\geq 0}\sigma_{\textup{tot}}(\mathbf{p})\big) \Big(\sqrt{\frac{\beta m}{M^{2}}    }  \Big)^{N}\int_{\R^{3}} d\vec{q}\, \big|\vec{q} -(\frac{\beta m}{ M^{2} })^{\frac{1}{2}}\vec{p}    \big|\,  \, \big| \frak{H}_{N}(|\vec{q}|)\big|\,\frac{ e^{-\frac{1}{2}\vec{q}^{2}} }{(2\pi)^{\frac{3}{2}} }, 
\end{multline*}
where I have made a change of integration  $\vec{p}_{\textup{rel}}\rightarrow \vec{q}=\sqrt{\frac{\beta}{m}}\big(\frac{m}{m_{*}}\vec{p}_{\textup{rel}}+\frac{m}{M}\vec{p}    \big)$ for the second inequality. 

 By the triangle inequality, the above is bounded by 
$$  \eta\, \frac{m_{*}^{3}}{m^{4}} \, (\frac{m}{\beta})^{\frac{1}{2}} \big(\sup_{\mathbf{p}\geq 0}\sigma_{\textup{tot}}(\mathbf{p})\big) \Big(\sqrt{\frac{\beta m}{M^{2}}    }  \Big)^{N}\Big( \big\| \,  |\vec{q}|  \frak{H}_{N}(|\vec{q}|)\, e^{-\frac{1}{2}\vec{q}^{2}} \big\|_{1}+|\vec{p}| \sqrt{\frac{\beta m}{M^{2}} }  \big\|  \frak{H}_{N}(|\vec{q}|)\, e^{-\frac{1}{2}\vec{q}^{2}}  \big\|_{1}   \Big),  $$
which gives a linear bound in $|\vec{p}|$ for~(\ref{MultFun}).  It follows that $ \nabla_{\vec{k}}^{\otimes^n}\mathcal{E}_{\vec{k}}$ is relatively bounded to the multiplication operator $|\vec{p}|$.  

\vspace{.5cm}

\noindent (ii).\hspace{.2cm} By the form~(\ref{OffRates}) for $\mathcal{J}_{\vec{k}}:\mathcal{T}_{1}\rightarrow L^{1}(\R^{3})$,
\begin{align}\label{Personne0}
\nabla_{\vec{k}}^{\otimes^N}\mathcal{J}_{\vec{k}}= 4^{-1}\hbar^{2}\int_{\R^{3}} d\vec{q}\,\tau_{\vec{q}}\,w_{\vec{q}}^{(N)},\quad  w_{\vec{q}}^{(N)}(\vec{p})= (\nabla_{\vec{k}}^{\otimes^N}\mathcal{J}_{\vec{k}})(\vec{p}+\vec{q},\vec{p}) , 
\end{align}
where  $\tau_{\vec{q}}$ is the shift operator $(\tau_{\vec{q}}y)(\vec{p})=y(\vec{p}-\vec{q})$ for $y\in L^{1}(\R^{3})$, and $w_{\vec{q}}^{(N)}$ acts as multiplication as above.   
The norm of $\nabla_{\vec{k}}^{\otimes^N}\mathcal{J}_{\vec{k}}$ operating on an integrable function $f$ is smaller than
$$\|\nabla_{\vec{k}}^{\otimes^N}\mathcal{J}_{\vec{k}}y    \|_{1}\leq \int_{\R^{3}}d\vec{p}\,|y(p)|\,\Big(  \int_{\R^{3}}d\vec{q}\,\big| (\nabla_{\vec{k}}^{\otimes^N}\mathcal{J}_{\vec{k}})(\vec{p}+\vec{q},\vec{p})\big|    \Big).     $$
My aim is to show  $\int_{\R^{3}}d\vec{q}\,\big| (\nabla_{\vec{k}}^{\otimes^N}\mathcal{J}_{\vec{k}})(\vec{p}+\vec{q},\vec{p})\big|$ is linearly bounded by $|\vec{p}|$.
     
\begin{align}\label{NthDer}
 \int_{\R^{3}}d\vec{q}\,\big| (\nabla_{\vec{k}}^{\otimes^N}\mathcal{J}_{\vec{k}})(\vec{p}+\vec{q},\vec{p})\big| \leq &\eta\frac{m }{m_{*}^{2}} \int_{\R^{3}} d\vec{q}\frac{1}{|\vec{q}|} \int_{\R^{2}}d\vec{v}\,r\big(\vec{v}+2^{-1}\frac{m}{m_{*}}\vec{q}+\frac{m}{M}\vec{p}_{\parallel \vec{q}}    \big)\,\nonumber \\ &\times \Big|\nabla_{\vec{k}}^{\otimes^N}\Big[ e^{-\frac{\beta m \hbar^{2}}{8M^{2}}\vec{k}_{\parallel \vec{q}}^{2}   }                  \mathbf{f}\big(\mathbf{Q}_{-},\Theta_{-}\big)\, \overline{\mathbf{f}}\big(\mathbf{Q}_{+},\Theta_{+}\big)    \Big] \Big|.
   \end{align}
I will take the gradient $\nabla_{\vec{k}}$ to have components in a basis with directions $\vec{q}$ and two orthogonal vectors.  By the product rule,
\begin{align*}
\nabla_{\vec{k}}^{\otimes^N}\Big[& e^{-\frac{\beta m \hbar^{2} }{8M^{2}}\vec{k}_{\parallel \vec{q}}^{2}   }                 \mathbf{f}\big(\mathbf{Q}_{-},\Theta_{-}\big)\, \overline{\mathbf{f}}\big(\mathbf{Q}_{+},\Theta_{+}\big)    \Big] \\ =& e^{-\frac{\beta m \hbar^{2} }{8M^{2} }\vec{k}_{\parallel \vec{q}}^{2}   }\sum_{N_{1}+N_{2}+N_{3}=N}\frac{N!}{N_{1}!N_{2}!N_{3}!}\Big(\sqrt{\frac{\beta m \hbar^{2}}{4M^{2} }}\Big)^{N_{1}}\frak{H}_{N_{1}}\Big(\sqrt{\frac{\beta m \hbar^{2}}{4M^{2} }}k_{\parallel \vec{q}} \Big) \big(\frac{\vec{q}}{|\vec{q}|}\big)^{\otimes^{N_{1}} }\\  & \times  (-1)^{N_{3}} (\nabla_{\vec{k}}^{\otimes^{N_{2}} }\mathbf{f})\big(\mathbf{Q}_{-},\Theta_{-}\big)\, (\nabla_{\vec{k}}^{\otimes^{N_{3}}} \overline{\mathbf{f}})\big(\mathbf{Q}_{+},\Theta_{+}\big).
\end{align*}

Using the identity~(\ref{ARelate}) and the inequality $2|xy|\leq x^{2}+y^{2}$, 
\begin{align*}
r\big(\vec{v}+\frac{1}{2}\frac{m}{m_{*}}&\vec{q}+\frac{m}{M}\vec{p}_{\parallel \vec{q}}    \big)\big|\nabla_{\vec{k}}^{N}\big[ e^{-\frac{\beta m \hbar^{2}}{8M^{2}}\vec{k}_{\parallel \vec{q}}^{2}   }                  \mathbf{f}\big(\mathbf{Q}_{-},\Theta_{-}\big)\, \overline{\mathbf{f}}\big(\mathbf{Q}_{+},\Theta_{+}\big)    \big]\big|\\  \leq & 2^{-1}e^{-\frac{\beta m \hbar^{2}}{8M^{2}}\vec{k}_{\parallel \vec{q}}^{2}   }                 \sum_{N_{1}+N_{2}+N_{3}=N}\frac{N!}{N_{1}!N_{2}!N_{3}!}\Big(\sqrt{\frac{\beta m \hbar^{2}}{4M^{2}}}\Big)^{N_{1}} \Big|\frak{H}_{N_{1}}\Big(\sqrt{\frac{\beta m \hbar^{2}}{4M^{2}}}\vec{k}_{\parallel \vec{q}} \Big) \Big|\\ &\times \Big( a^{N_{3}-N_{2}}   r\big(\vec{z}_{-} \big)\big|\nabla_{\vec{k}}^{\otimes^{N_{2}}}\mathbf{f}\big(\mathbf{Q}_{-},\Theta_{f}\big) \big|^{2}  + a^{N_{2}-N_{3}} r\big(\vec{z}_{+} \big)\big|\nabla_{\vec{k}}^{\otimes^{N_{3}}}\mathbf{f}\big(\mathbf{Q}_{+},\Theta_{+}\big) \big|^{2}\Big).
\end{align*}
In applying the inequality $2|xy|\leq x^{2}+y^{2}$, I have multiplied and divided by a factor of $a^{\frac{1}{2}(N_{2}-N_{3})}$ to keep the units consistent.   Returning to the integral~(\ref{NthDer}), a single term from the sum over $N_{1}+N_{2}+N_{3}=N$ can be bounded by
\begin{multline}\label{FixedNs}
\frac{\eta}{2}\frac{N!}{N_{1}!N_{2}!N_{3}!}\Big(\sqrt{\frac{\beta m \hbar^{2} }{4M^{2}}}\Big)^{N_{1}} \Big( \sup_{\gamma \in \R_{+}} e^{-\frac{1}{2}\gamma^{2}   }    \frak{H}_{N_{1}}(\gamma)\Big) \frac{m }{m_{*}^{2}} \int_{\R^{3}} d\vec{q}\frac{1}{|\vec{q}|} \int_{\R^{2}}d\vec{v}\\ \times\Big( a^{N_{3}-N_{2}} r\big(\vec{z}_{-} )\big|\nabla_{\vec{k}}^{\otimes^{N_{2}}}\mathbf{f}\big(\mathbf{Q}_{-},\Theta_{-}\big) \big|^{2} +a^{N_{2}-N_{3}} r\big(\vec{z}_{+}\big)\big|\nabla_{\vec{k}}^{\otimes^{N_{3}}}\mathbf{f}\big(\mathbf{Q}_{+},\Theta_{+}\big) \big|^{2} \Big),
\end{multline}
where I have pulled out the supremum for the part of the integrand depending on $\vec{k}_{\parallel q}$.

I will focus on the term $r\big(\vec{z}_{-} )\big|\nabla_{\vec{k}}^{\otimes^{N_{2}}}\mathbf{f}\big(\mathbf{Q}_{-},\Theta_{-}\big) \big|^{2}$.  
It is convenient to switch the integration from  $\int_{\R^{3}} d\vec{q}\frac{1}{|\vec{q}|} \int_{\R^{2}}d\vec{v}$ to $\frac{m_{*} }{m}\int_{\R^{3}}d\vec{p}_{\textup{rel}}\int_{\Omega} $ where 
$$\vec{p}_{\textup{rel}}= \frac{m_{*}}{m}\vec{z}_{-} -\frac{m_{*}}{M}( \vec{p} -2^{-1}\hbar\vec{k} ),\quad \text{so that} \quad |\vec{p}_{\textup{rel}}|=\mathbf{Q}_{-},       $$
and the integration over $\Omega$ reduces to $\int_{\Omega}=2\pi\int_{[0,\pi]}d\Theta_{-}\sin(\Theta_{-})$, since the integrand~(\ref{FixedNs}) does not depend on the azimuthal angle $\phi_{-}$.  The gradient $\nabla_{\vec{k}}\mathbf{f}$ can be written
\begin{align*}
\nabla_{\vec{k}}\mathbf{f}\big(&|\vec{p}_{\textup{rel}}|, \Theta_{-}\big)\\ &=  \frac{\hbar m_{*}}{2M}  \frac{\vec{w}    }{|\vec{w} |    } \Big( \cos(2^{-1}\Theta_{-}) (\partial_{ \mathbf{p}}\mathbf{f})\big(|\vec{p}_{\textup{rel}}|, \Theta_{-}\big) -2 |\vec{p}_{\textup{rel}}|^{-1} \sin(2^{-1}\Theta_{-})(\partial_{ \theta}\mathbf{f})\big(|\vec{p}_{\textup{rel}}|, \Theta_{-}\big)   \Big) 
\end{align*}
for $\vec{w}=\frac{m_{*}}{m}\vec{v}-\frac{m_{*}}{M}(\vec{p}_{\perp q}-2^{-1}\hbar \vec{k}_{\perp \vec{q}}    )    $ and the specific dependence of $\vec{w}$ on $\vec{p}_{\textup{rel}}$, $\Theta_{-}$, $\phi_{-}$ is not important here, since only the norms of the gradients appear in~(\ref{FixedNs}).  More generally,
$$ \nabla_{\vec{k}}^{\otimes^n}\mathbf{f}\big(|\vec{p}_{\textup{rel}}|, \Theta_{-}\big)=   \big(\frac{\hbar m_{*}}{2M}\big)^{n}\Big(\frac{\vec{w}   }{|\vec{w}|    }\Big)^{\otimes^{n}} (\partial^{n}_{\mathbf{z}}     \mathbf{f})(|\vec{p}_{\textup{rel}}|,\Theta_{-}),  $$
where $\partial_{\mathbf{z}}$ is defined as in Lem.~\ref{ScatAmpl}.
 Notice   $\partial_{\mathbf{z}}$ is the derivative with respect to the variable $\mathbf{z}=\mathbf{p}\cos(2^{-1}\theta)$ while holding the variable $\mathbf{y}=\mathbf{p}\sin(2^{-1}\theta) $ fixed.      
By changing integration variables as suggested above  and applying Lem.~\ref{ScatAmpl}, I have the first equality and inequality below 
\begin{align}\label{Slovakia}
\frac{m }{m_{*}^{2}} \int_{\R^{3}} d\vec{q}&\frac{1}{|\vec{q}|} \int_{\R^{2}}d\vec{v}\, r\big(\vec{z}_{-} )\big|\nabla_{\vec{k}}^{\otimes^{N_{2}}}\mathbf{f}\big(\mathbf{Q}_{-},\Theta_{-}\big) \big|^{2}\nonumber  \\ = &\frac{1}{m_{*}} \big(\frac{\hbar\, m_{*}}{2M}\big)^{2N_{2}}\int_{\R^{3}}d\vec{p}_{\textup{rel}}\,|\vec{p}_{\textup{rel}}|\, r\big( \frac{m}{m_{*}}\vec{p}_{rel}+\frac{m}{M}(\vec{p}- 2^{-1}\hbar \vec{k})  \big)\int_{\Omega}|(\partial^{N_{2}}_{\mathbf{z}}\mathbf{f})(|\vec{p}_{\textup{rel}}|,\theta)|^{2}\nonumber \\  \leq &\frac{Ca^{2}}{m_{*}} \big(\frac{a\, m_{*}}{2 M}\big)^{2N_{2}}\int_{\R^{3}}d\vec{p}_{\textup{rel}}\,|\vec{p}_{\textup{rel}}|\, r\big( \frac{m}{m_{*}}\vec{p}_{rel}+\frac{m}{M}(\vec{p}- 2^{-1}\hbar \vec{k})  \big)\Big(1+\big(\frac{\hbar}{a|\vec{p}_{\textup{rel}}| }\big)^{2} \delta_{N,3}   \Big)\nonumber  \\ =& \frac{Ca^{2}m_{*}^{3}}{m^{4}} \big(\frac{a\, m_{*}}{2 M}\big)^{2N_{2}}\int_{\R^{3}}d\vec{w}\,\big|\vec{w}-\frac{m}{M}(\vec{p}-2^{-1}\hbar\vec{k})\big|\, r\big( \vec{w})\nonumber   \\ &+\delta_{N_{2},3} \frac{C\hbar^{2}}{m_{*}} \big(\frac{a^{2} m_{*}}{2 M}\big)^{2N_{2}}\int_{\Omega}\int_{\R_{+} }d\mathbf{p}\,\mathbf{p}\, r\big( \frac{m}{m_{*}}\mathbf{p}\,\widehat{\theta}+\frac{m}{M}(\vec{p}- 2^{-1}\hbar \vec{k})  \big).
\end{align}
where $\widehat{\theta}\in \Omega$ is a unit vector.   The left term on the right side of~(\ref{Slovakia}) can be bounded by the triangle inequality and computations with the Gaussian $r(\vec{w})$ as in the proof of Lem.~\ref{Horses}
$$\frac{Ca^{2}m_{*}^{3}}{m^{4}} \big(\frac{a\, m_{*}}{2 M}\big)^{2N_{2}}\big( 2(\frac{2m}{\pi\beta})^{\frac{1}{2}} +\frac{m}{M}|\vec{p}|+\frac{\hbar m}{2M}|\vec{k}|    \big).   $$
The integrand for $\int_{\Omega}$ in the right-most term of~(\ref{Slovakia}) is maximized when the vector $\mathbf{p}\,\hat{\theta}$ points in the opposite direction as $\vec{p}- 2^{-1}\hbar \vec{k}$, so  
\begin{align*}
\int_{\Omega}\int_{\R_{+} }d\mathbf{p}\,\mathbf{p}\, r\big( \frac{m}{m_{*}}\mathbf{p}\,\hat{\theta}+\frac{m}{M}(\vec{p}- 2^{-1}\hbar \vec{k})  \big) &\leq 4\pi\int_{\R_{+} }d\mathbf{p}\,\mathbf{p}\, r\big( \frac{m}{m_{*}}\mathbf{p}-\frac{m}{M}|\vec{p}- 2^{-1}\hbar \vec{k}|   \big)\\ &\leq  4\pi\int_{\R}d\mathbf{p}\,|\mathbf{p}|\, r\big( \frac{m}{m_{*}}\mathbf{p}-\frac{m}{M}|\vec{p}- 2^{-1}\hbar \vec{k}|   \big)\\ &\leq \frac{2\beta m_{*}^{2} }{m^{3}}\big(\big(\frac{2m}{\pi\beta})^{\frac{1}{2}} +\frac{m}{M}|\vec{p}|+\frac{\hbar m}{2M}|\vec{k}|\big),
\end{align*}
where the second inequality is an extension of the integration domain, and the third inequality follows by the triangle inequality after computing the Gaussian integral.

For constants $C_{N_{1},N_{2},N_{3}}>0$  defined as
\begin{multline*}
C_{N_{1},N_{2},N_{3}}:=\frac{ \eta}{2} \frac{N!}{N_{1}!N_{2}!N_{3}!}\Big(\sqrt{\frac{\beta m \hbar^{2}}{4M^{2}}}\Big)^{N_{1}} \Big( \sup_{\gamma \in \R_{+}} e^{-\frac{1}{2}\gamma^{2}   }    \frak{H}_{N_{1}}(\gamma)\Big) \\ \times\frac{Ca^{2}}{m_{*}}\Big( (\frac{m_{*}}{m})^{4}+\frac{\beta \hbar^{2}m_{*}^{2}}{a^{2}m^{3}}    \Big)  \big( 2(\frac{2m}{\pi\beta})^{\frac{1}{2}} +\frac{m}{M}|\vec{p}|+\frac{\hbar m}{2M}|\vec{k}|    \big)   \Big(a^{N_{3}-N_{2}}\big(\frac{a m_{*}}{2 M}\big)^{2N_{2}}+ a^{N_{2}-N_{3}}\big(\frac{a m_{*}}{2 M}\big)^{2N_{3}}  \Big),
\end{multline*}
then~(\ref{NthDer}) is bounded by
\beq\label{LinBnd}
\sum_{N_{1}+N_{2}+N_{3}=N} C_{N_{1},N_{2},N_{3}}<\infty.  
\eeq

The bound~(\ref{LinBnd}) is linear in $|\vec{p}|$.  It follows  the partial derivatives of $[\Psi]_{\vec{k}}$ up to third-order are uniformly relatively bounded to $|\vec{p}|$ in a bounded region of $\vec{k}\in \R^{3}$.  

\vspace{.5cm}

\noindent (iii).\hspace{.2cm} 
The map $h_{\vec{k}}:\mathcal{T}_{1}\rightarrow L^{1}(\R^{3}) $ acts as multiplication by
$$ h_{\vec{k}}(\vec{p})= H(\vec{p}-2^{-1}\hbar\vec{k})-H(\vec{p}+2^{-1}\hbar\vec{k})      .    $$
Thus, $\nabla_{\vec{k}}^{\otimes^N} h_{\vec{k}}$ is a multiplication operator with function 
\begin{multline*}
 \big(\frac{\hbar}{2}\big)^{N}\big( (-1)^{N}(\nabla^{\otimes^N}H)(\vec{p}-2^{-1}\hbar\vec{k})-(\nabla^{\otimes^N}H)(\vec{p}+2^{-1}\hbar\vec{k})      \big)\\ = -\ii \delta_{N,1}\frac{\hbar}{M}\vec{p}+  \ii\big(\frac{\hbar}{2}\big)^{N}\big( (-1)^{N}(\nabla^{\otimes^N}H_{f})(\vec{p}-2^{-1}\hbar\vec{k})-(\nabla^{\otimes^N}H_{f})(\vec{p}+2^{-1}\hbar\vec{k})      \big),         
 \end{multline*}
since $H(\vec{p})=\frac{1}{2M}\vec{p}^{2}+H_{f}(\vec{p}) $ where $H_{f}(\vec{p})$ is defined in~(\ref{AltDisp}).  By a change of variables, I can write   
$$ H_{f}(\vec{p})= -\frac{2\pi \eta  \hbar^{2}m^{3}}{ m_{*}^{4} } \int_{\R^{3}}d\vec{p}_{\textup{rel}}\, r\big(\frac{m}{m_{*}}\vec{p}_{\textup{rel}}+\frac{m}{M}\vec{p}  \big)\,\textup{Re}\big[\mathbf{f}\big(|\vec{p}_{\textup{rel}}|,\,0 )\big].  $$
The derivative has the form 
\begin{multline*}
  (\nabla^{\otimes^N}H_{f})(\vec{p})= -\frac{2\pi \eta \hbar^{2} m^{3} }{ m_{*}^{4}}\Big(\sqrt{\frac{\beta m}{M^{2}} }\Big)^{N}  \int_{\R^{3}}d\vec{p}_{\textup{rel}} \,\Big(\frac{\frac{m }{m_{*}}\vec{p}_{\textup{rel}}+\frac{m }{M}\vec{p}}{\big|\frac{m }{m_{*}}\vec{p}_{\textup{rel}}+\frac{m }{M}\vec{p}    \big|   }\Big)^{\otimes^{N}} H_{N}\Big(\sqrt{\frac{\beta }{m}} \Big|\frac{m }{m_{*}}\vec{p}_{\textup{rel}}+\frac{m }{M}\vec{p} \Big|    \Big) \\ \times r\big(\frac{m}{m_{*}}\vec{p}_{\textup{rel}}+\frac{m}{M}\vec{p}  \big)\,\textup{Re}\big[\mathbf{f}\big(|\vec{p}_{\textup{rel}}|,\,0 ) \big].  
  \end{multline*}

By the same change of variables as for (i), the term $\big|(\nabla^{\otimes^{N}}H_{f})(\vec{p})\big|$ is bounded by
\begin{align}\label{Samples}
\frac{2\pi \eta\hbar^{2} m^{3}}{ m_{*}^{4}}&\Big(\sqrt{\frac{\beta m}{M^{2}} }\Big)^{N}  \int_{\R^{3}}d\vec{p}_{\textup{rel}} \,\Big| \frak{H}_{N}\Big(\sqrt{\frac{\beta }{m}} \Big|\frac{m }{m_{*}}\vec{p}_{\textup{rel}}+\frac{m }{M}\vec{p} \Big|    \Big)\Big| \,r\big(\frac{m}{m_{*}}\vec{p}_{\textup{rel}}+\frac{m}{M}\vec{p}  \big)\,\big|\textup{Re}\big[\mathbf{f}\big(|\vec{p}_{rel}|,\,0 ) \big] \big|\nonumber  \\ &\leq \frac{2\pi\hbar^{2}\eta}{ m_{*} }  \Big(\sqrt{\frac{\beta m}{M^{2}}    }  \Big)^{N}\int_{\R^{3}} d\vec{q}\, \big|\textup{Re}\big[\mathbf{f}\big(\big|(\frac{m}{\beta})^{\frac{1}{2}}\vec{q}-\frac{m}{M}\vec{p}     \big|,\,0 \big) \big]\big|\,\big| \frak{H}_{N}(|\vec{q}|)\big|\,\frac{ e^{-\frac{1}{2}\vec{q}^{2}} }{(2\pi)^{\frac{3}{2}} }\nonumber  \\ & \leq \frac{2\pi c \hbar^{2}\eta}{ m_{*} }  \Big(\sqrt{\frac{\beta m}{M^{2}}    }  \Big)^{N}\int_{\R^{3}} d\vec{q}\, \big(\frac{\hbar}{a}+ (\frac{m}{\beta})^{\frac{1}{2}}|\vec{q}|+\frac{m}{M}|\vec{p}| \big)\,\big| \frak{H}_{N}(|\vec{q}|)\big|\,\frac{ e^{-\frac{1}{2}\vec{q}^{2}} }{(2\pi)^{\frac{3}{2}} },
\end{align}
where the last inequality uses  $|\mathbf{f}\big(\mathbf{p},\,0 \big)|\leq c\big(a^{-1}\hbar+\mathbf{p}\big)$ for some $c>0$, which is shown in the proof of Lem.~\ref{Leftover}. However, the functions $ \big| \frak{H}_{N}(|\vec{q}|)\big|\,\frac{ e^{-\frac{1}{2}\vec{q}^{2}} }{(2\pi)^{\frac{3}{2}} }$ and $|\vec{q}| \big| \frak{H}_{N}(|\vec{q}|)\big|\,\frac{ e^{-\frac{1}{2}\vec{q}^{2}} }{(2\pi)^{\frac{3}{2}} }$ are in $L^{1}(\R^{3})$ as used in Part (i), so~(\ref{Samples}) gives a  linear upper bound for $\big|(\nabla^{\otimes^{N}}H_{f})(\vec{p})\big|$ in terms of $|\vec{p}|$.  The operator $\nabla_{\vec{k}}^{\otimes^N}h_{\vec{k}}$ is therefore  relatively bounded to multiplication by $|\vec{p}|$.  \vspace{.6cm}

\noindent Part (2):\\

By Part (4) of Prop.~\ref{RelBnds}, the norms $\|\cdot   \|_{\mathcal{L}_{0}^{n} }$ and $\| \cdot  \|_{\frak{M}^{n} }$ are equivalent, and it is sufficient to prove $\|G  y   \|_{\frak{M}^{n} }\leq C\| y\|_{\frak{M}^{n+1} } $ for $G =\nabla_{\vec{k}}^{\otimes^{N}} \mathcal{L}_{\vec{k}}$, $N=1,\,2,\,3$ and $\vec{k}$ in a bounded region. This will be convenient,  since $\mathcal{M}^{n}$ commutes with the parts of  $\nabla_{\vec{k}}^{\otimes^{N}} \mathcal{L}_{\vec{k}}$ corresponding to the loss and Hamiltonian parts of $\mathcal{L}_{\vec{k}}$ (as in (i) and (iii) of Part (1)).  By the triangle inequality,     
\begin{align*}
\|G  y   \|_{\frak{M}^{n} } & \leq  \|  Gy   \|_{1 }+\| G  \frak{M}^{n}y     \|_{1}+ \|\,[\frak{M}^{n}, G ]y   \|_{1}\\ &\leq c\|y\|_{ \frak{M}^{n+1} }+ \|\,[\frak{M}^{n}, G ]y   \|_{1}  ,
\end{align*}
where $c>0$ exists by the results of Part (1) and the equivalence of $\|\cdot\|_{\mathcal{L}_{0}}$ and $\|\cdot\|_{\frak{M}}$.

Since $\frak{M}^{n}$ commutes with multiplication operators, $[\frak{M}^{n}, \nabla_{\vec{k}}^{\otimes^{N}} \mathcal{L}_{\vec{k}}  ]=[\frak{M}^{n}, \nabla_{\vec{k}}^{\otimes^{N}} \mathcal{J}_{\vec{k}}  ]$.  By~(\ref{Personne0}), the commutation $[\frak{M}^{n}, \nabla_{\vec{k}}^{\otimes^{N}} \mathcal{J}_{\vec{k}}  ]$ can be written as   
\begin{align*}
[\frak{M}^{n}, \nabla_{\vec{k}}^{\otimes^{N}} \mathcal{J}_{\vec{k}}  ] = 4^{-1}\hbar^{2}\int_{\R^{3}} d\vec{q}\,\tau_{\vec{q}}\,z_{\vec{q}}^{(N)}, 
\end{align*}
where $z_{\vec{q}}^{(N)}$ is multiplication by  
$$z_{\vec{q}}^{(N)}(\vec{p}) =(|\vec{p}+\vec{q}|^{n}-|\vec{p}|^{n}) (\nabla_{\vec{k}}^{\otimes^N}\mathcal{J}_{\vec{k}})(\vec{p}+\vec{q},\vec{p}).$$
Again I have
$$\|[\frak{M}^{n}, \nabla_{\vec{k}}^{\otimes^{N}} \mathcal{J}_{\vec{k}}  ]  y  \|_{1} \leq  \int_{\R^{3}}d\vec{p}\,|y(\vec{p})|\int_{\R^{3}}d\vec{q}\,\big| z_{\vec{q}}^{(N)}(\vec{p})   \big|.     $$
The value $\big| z_{\vec{q}}^{(N)}(\vec{p})   \big|$ is smaller than a product of 
$\big||\vec{p}+\vec{q}|^{n}-|\vec{p}|^{n}\big|$ and $\big|\nabla_{\vec{k}}^{\otimes^{N}} \mathcal{J}_{\vec{k}}(\vec{p}+\vec{q},\vec{p})   \big|$, and by  (ii) of Part (1), 
\begin{multline*}
\big|\nabla_{\vec{k}}^{\otimes^{N}} \mathcal{J}_{\vec{k}}(\vec{p}+\vec{q},\vec{p})   \big| \leq \sum_{N_{1}+N_{2}+N_{3}=N}\frac{\eta}{2}\frac{N!}{N_{1}!N_{2}!N_{3}!}\Big(\sqrt{\frac{\beta m \hbar^{2} }{4M^{2}}}\Big)^{N_{1}} \Big( \sup_{\gamma \in \R_{+}} e^{-\frac{1}{2}\gamma^{2}   }    \frak{H}_{N_{1}}(\gamma)\Big)\\ \times \frac{m }{m_{*}^{2}}\frac{1}{|\vec{q}|} \int_{\R^{2}}d\vec{v} \Big( a^{N_{3}-N_{2}} r\big(\vec{z}_{-} )\big|\nabla_{\vec{k}}^{\otimes^{N_{2}}}\mathbf{f}\big(\mathbf{Q}_{-},\Theta_{-}\big) \big|^{2} +a^{N_{2}-N_{3}} r\big(\vec{z}_{+}\big)\big|\nabla_{\vec{k}}^{\otimes^{N_{3}}}\mathbf{f}\big(\mathbf{Q}_{+},\Theta_{+}\big) \big|^{2} \Big).  
\end{multline*}
For factor $\big||\vec{p}+\vec{q}|^{n}-|\vec{p}|^{n}\big|$ has the brute upper bound $2^{n}(|\vec{p}|^{n}+|\vec{q}|^{n})$.  Applying the standard change of coordinates from $\vec{q},\vec{v}$ to $\vec{p}_{\textup{rel}},\hat{\theta}$ and following the reasoning in (ii) of Part (1), then $\int_{\R^{3}}d\vec{q}\,\big| z_{\vec{q}}^{(N)}(\vec{p})   \big|$ can be bounded a degree $n+1$ polynomial in $|\vec{p}|$.  By interpolation  $\|[\frak{M}^{n}, \nabla_{\vec{k}}^{\otimes^{N}} \mathcal{J}_{\vec{k}}  ]y\|_{1}$ is bounded by a multiple of $\|y\|_{\mathcal{M}^{n+1}}$.

\end{proof}

\section{The diffusion constant}\label{SecDiffConstant}

The following proposition relates the diffusion constant $D=D_{\textup{kin}}+D_{\textup{jps}}$ to the second derivative for an eigenvalue for $\mathcal{L}_{\vec{k}}$.  Part (3) is another another technical point required for the proof of the convergence stated for  the diffusion constant in Thm.~\ref{MainThm}.

\begin{proposition}\label{Perturbation}

Let $\frak{g}>0$ be defined as in Prop.~\ref{RelBnds}.  There is an $r>0$ such that for all $|\vec{k}|\leq r$,  the following statements hold: 

\begin{enumerate}

\item The generator $\mathcal{L}_{\vec{k}}$ has an isolated eigenvalue $\epsilon(\vec{k})\in \R_{-}  +\ii \,\mathbb{R} $ with  $$\hspace{3cm}  \epsilon(\vec{k})= - \frac{1}{2}D|\vec{k}|^{2} +   \mathit{O}(|\vec{k}|^{3}).$$   

\item The spectral projection $ \mathbf{P}_{\vec{k} }$ corresponding to $\epsilon(\vec{k})$ is non-degenerate and has a Taylor expansion
$$ \mathbf{P}_{\vec{k}}= \mathbf{P}_{0}+\sum_{i}k_{i}(\partial_{i}  \mathbf{P}_{\vec{v}}|_{\vec{v}=0 })+ \sum_{i}k_{i}k_{j}(\partial_{i}\partial_{j}  \mathbf{P}_{\vec{v}}|_{\vec{v}=0 })+ |\vec{k}|^{3}\frak{R}^{\prime}(\vec{k}),   $$ 
where $\partial_{i}  \mathbf{P}_{\vec{k}}|_{\vec{k}=0 }$, $\partial_{i}\partial_{j}  \mathbf{P}_{\vec{k}}|_{\vec{k}=0 }$ and $\frak{R}^{\prime}(\vec{k})$ are uniformly bounded 
with respect to the operator norm  for linear maps on $L^{1}(\R^{3})$.

\item  For $n\geq 1$, there is a $C_{n}>0$ such that for any  $G=\partial_{i}  \mathbf{P}_{\vec{v}}|_{\vec{v}=0 }$, $\partial_{i}\partial_{j} \mathbf{P}_{\vec{v}}|_{\vec{v}=0 }$,  or $\frak{R}^{\prime}(\vec{k})$
$$\|G  f   \|_{\mathcal{L}_{0}^{n} }\leq C_{n}\| f\|_{\mathcal{L}_{0}^{n-1} }.      $$

\item The spectrum of $\mathcal{L}_{\vec{k}}$ satisfies $\Sigma(\mathcal{L}_{\vec{k}})-\{ \epsilon(\vec{k})  \}\subset (-\infty, -2^{-1}\frak{g}]+\ii\,\mathbb{R}   $. 

\item  The semigroup $e^{t\mathcal{L}_{\vec{k}}' }$ with generator $\mathcal{L}_{\vec{k}}'= \mathcal{L}_{\vec{k}}+ \frac{\frak{g}}{2}\big( \textup{I}-\mathbf{P}_{\vec{k}}\big)$ (having domain $\textup{D}(\mathcal{L}_{\vec{k}}') =\textup{D}(\mathcal{L}_{\vec{k}})$) satisfies
$$  \big\| e^{t\mathcal{L}_{\vec{k}}' } \big\| \leq C        $$
for some $C>0$ and all $t\in \R_{+}$.  Also,  $\big\| e^{t\mathcal{L}_{\vec{k}} } \big( \textup{I}-\mathbf{P}_{\vec{k}}\big)\big\| \leq C e^{-t\frac{\frak{g}}{2} }  $.

\end{enumerate}

\end{proposition}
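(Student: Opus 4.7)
The plan is to treat $\mathcal{L}_{\vec{k}}$ as an analytic perturbation family and apply standard Rayleigh-Schr\"odinger perturbation theory. By Proposition~\ref{RelBnds} and Lemma~\ref{TwoDerivatives}, the family shares the common domain $\mathcal{T}_1=\textup{D}(\mathcal{L}_0)$ and has three $\vec{k}$-derivatives that are $\mathcal{L}_0$-relatively bounded uniformly on bounded neighborhoods of~$\vec{k}=0$. Since $0$ is a non-degenerate isolated eigenvalue of $\mathcal{L}_0$ separated from the rest of $\Sigma(\mathcal{L}_0)$ by a gap $\frak{g}>0$, the contour $\Gamma=\{z\in\C:\,|z|=\frak{g}/2\}$ avoids $\Sigma(\mathcal{L}_0)$. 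A Neumann expansion of $(\mathcal{L}_{\vec{k}}-z)^{-1}$ around $(\mathcal{L}_0-z)^{-1}$, combined with the uniform resolvent bound from Part~(4) of Proposition~\ref{RelBnds} on $\Gamma$, shows that for $|\vec{k}|\leq r$ small enough the resolvent $(\mathcal{L}_{\vec{k}}-z)^{-1}$ is jointly analytic in $(\vec{k},z)$ on a neighborhood of $\{0\}\times\Gamma$, so the Riesz projection
\[\mathbf{P}_{\vec{k}}=-\frac{1}{2\pi\ii}\oint_{\Gamma}(\mathcal{L}_{\vec{k}}-z)^{-1}\,dz\]
is one-dimensional and analytic in $\vec{k}$; this settles Part~(2). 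Upper semicontinuity of the spectrum gives Part~(4), since the portion of $\Sigma(\mathcal{L}_{\vec{k}})$ outside $\Gamma$ stays within a $\mathit{O}(|\vec{k}|)$ neighborhood of $\Sigma(\mathcal{L}_0)\setminus\{0\}\subset(-\infty,-\frak{g}]+\ii\,\R$. Part~(3) follows from differentiating the Neumann series termwise and bounding each factor using Part~(2) of Lemma~\ref{TwoDerivatives} together with the resolvent graph-norm bounds in Part~(4) of Proposition~\ref{RelBnds}.

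For Part~(1), pairing the eigenvalue relation $\mathcal{L}_{\vec{k}}\mathbf{P}_{\vec{k}}=\epsilon(\vec{k})\mathbf{P}_{\vec{k}}$ with the left eigenvector $1\in L^\infty$ (the annihilator of $\mathcal{L}_0$ by trace preservation of the momentum-marginal dynamics) and applying Rayleigh-Schr\"odinger perturbation yields
\begin{multline*}
\epsilon(\vec{k})=\sum_i k_i\,\langle 1,\mathcal{L}_{\vec{k}}^{(1)}(i)\nu_\infty\rangle\\
+\tfrac{1}{2}\sum_{i,j}k_ik_j\Big(\langle 1,\mathcal{L}_{\vec{k}}^{(2)}(i,j)\nu_\infty\rangle-2\big\langle 1,\mathcal{L}_{\vec{k}}^{(1)}(i)\,\mathcal{L}_0^{-1}(\textup{I}-\mathbf{P})\,\mathcal{L}_{\vec{k}}^{(1)}(j)\nu_\infty\big\rangle\Big)+\mathit{O}(|\vec{k}|^3).
\end{multline*}
The first-order coefficients vanish: the kinetic Hamiltonian contribution integrates to zero against the even distribution $\nu_\infty$ by parity, while the remaining noise/friction pieces cancel by trace preservation combined with the detailed-balance symmetry~(\ref{GammaSymmetry}) of Section~\ref{RemarkDetailedBalance}. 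For the second-order coefficient, substituting $\mathcal{L}_0^{-1}(\textup{I}-\mathbf{P})=-\int_0^\infty e^{t\mathcal{L}_0}(\textup{I}-\mathbf{P})\,dt$ converts the off-diagonal bilinear into a Green-Kubo correlation integral; identifying $\mathcal{L}_{\vec{k}}^{(1)}(i)\nu_\infty$ with $\frak{v}_i\nu_\infty$, where the noise correction $\frak{u}$ of~(\ref{VelocityFunction}) is precisely what survives the~(\ref{GammaSymmetry}) cancellations, recovers $-\tfrac12 D_{\textup{kin}}|\vec{k}|^2$, while direct expansion of~(\ref{OffRates}) for the diagonal term $\langle 1,\mathcal{L}_{\vec{k}}^{(2)}(i,j)\nu_\infty\rangle$ recovers $-\tfrac12 D_{\textup{jps}}|\vec{k}|^2$ after elimination of odd-parity-in-$\vec{q}$ terms. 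Summing gives $\epsilon(\vec{k})=-\tfrac12 D|\vec{k}|^2+\mathit{O}(|\vec{k}|^3)$ with $D=D_{\textup{kin}}+D_{\textup{jps}}$.

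Part~(5) is the main technical obstacle, since the spectral separation of Part~(4) does not by itself imply exponential semigroup decay for a non-self-adjoint generator. Since $\mathbf{P}_{\vec{k}}$ commutes with $\mathcal{L}_{\vec{k}}$, I split
\[e^{t\mathcal{L}'_{\vec{k}}}=e^{t\epsilon(\vec{k})}\mathbf{P}_{\vec{k}}+e^{t\frak{g}/2}\,e^{t\mathcal{L}_{\vec{k}}}(\textup{I}-\mathbf{P}_{\vec{k}}).\]
The first piece is uniformly bounded since $\textup{Re}\,\epsilon(\vec{k})\le 0$ and $\|\mathbf{P}_{\vec{k}}\|$ is controlled by Part~(2). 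For the second piece, the $\vec{k}=0$ decay $\|e^{t\mathcal{L}_0}(\textup{I}-\mathbf{P})\|\le C_\delta e^{-t(\frak{g}-\delta)}$ from Part~(3) of Proposition~\ref{RelBnds}, propagated to small $\vec{k}$ via the Duhamel identity
\[e^{t\mathcal{L}_{\vec{k}}}=e^{t\mathcal{L}_0}+\int_0^t e^{(t-s)\mathcal{L}_{\vec{k}}}(\mathcal{L}_{\vec{k}}-\mathcal{L}_0)e^{s\mathcal{L}_0}\,ds,\]
together with contractivity of $e^{t\mathcal{L}_{\vec{k}}}$ and the $\mathit{O}(|\vec{k}|)$ relative smallness of $\mathcal{L}_{\vec{k}}-\mathcal{L}_0$ on $\textup{D}(\mathcal{L}_0)$, yields by a Gronwall estimate that $\|e^{t\mathcal{L}_{\vec{k}}}(\textup{I}-\mathbf{P})\|\le Ce^{-t(\frak{g}-c|\vec{k}|)}$ after shrinking $r$; the $\mathit{O}(|\vec{k}|)$ discrepancy between $\mathbf{P}$ and $\mathbf{P}_{\vec{k}}$ (Part~(2)) is absorbed into the constant. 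The main recurring source of difficulty throughout is verifying that the detailed-balance cancellations of Section~\ref{RemarkDetailedBalance} collapse the perturbative expressions in Part~(1) exactly onto the closed-form decomposition $D=D_{\textup{kin}}+D_{\textup{jps}}$ recorded in Section~\ref{SecDis}.
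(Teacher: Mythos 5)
Parts (1)--(4) follow the paper's route: Kato analytic perturbation theory with relatively bounded derivatives, the Riesz projection as a contour integral, and a Neumann expansion of the resolvent using the spectral gap of $\mathcal{L}_0$ from Proposition~\ref{RelBnds}. One small point of bookkeeping: your Rayleigh--Schr\"odinger expression has $\epsilon^{(2)}(\vec{k})=\tfrac12\vec{k}\cdot\mathbf{b}\cdot\vec{k}-\vec{k}\cdot\mathbf{c}\cdot\vec{k}$ (your middle line carries an explicit factor $-2$ on the reduced-resolvent term), while the paper records $\epsilon^{(2)}(\vec{k})=\tfrac12\vec{k}\cdot(\mathbf{b}-\mathbf{c})\cdot\vec{k}$; make sure your normalization of the reduced resolvent $S$ and of $\mathbf{c}$ is consistent with how you ultimately identify $\mathbf{c}_{ij}=\delta_{ij}D_{\textup{kin}}$, as a mismatch here changes the stated value of $D$.

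The genuine gap is in Part~(5). You propose to propagate the decay $\|e^{t\mathcal{L}_0}(\textup{I}-\mathbf{P})\|\le C_\delta e^{-t(\frak{g}-\delta)}$ to $\vec{k}\neq0$ via Duhamel plus Gronwall. This does not close, because $\mathcal{L}_{\vec{k}}-\mathcal{L}_0$ is only $\mathcal{L}_0$-bounded, not bounded. Inside the Duhamel integral you are forced to estimate
\[
\big\|(\mathcal{L}_{\vec{k}}-\mathcal{L}_0)\,e^{s\mathcal{L}_0}(\textup{I}-\mathbf{P})\,y\big\|_1
\le c\,|\vec{k}|\,\big\|e^{s\mathcal{L}_0}(\textup{I}-\mathbf{P})y\big\|_{\mathcal{L}_0},
\]
which replaces the target $L^1$-norm by the graph norm $\|\cdot\|_{\mathcal{L}_0}$. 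The graph norm cannot be re-absorbed into the quantity you are trying to bound, so no Gronwall inequality forms in the $L^1$ operator norm; iterating Duhamel (Dyson series) only makes things worse, since each further application of $\mathcal{L}_{\vec{k}}-\mathcal{L}_0$ costs one more power of $\mathcal{L}_0$. Even ignoring this, your estimate produces an additive term of order $|\vec{k}|\cdot\|y\|_{\mathcal{L}_0}$ that is \emph{constant} in $t$, so you would get at best uniform boundedness of $e^{t\mathcal{L}_{\vec{k}}}(\textup{I}-\mathbf{P})$ rather than exponential decay. The same issue affects your final step of "absorbing the $\mathit{O}(|\vec{k}|)$ discrepancy between $\mathbf{P}$ and $\mathbf{P}_{\vec{k}}$ into the constant": $\|e^{t\mathcal{L}_{\vec{k}}}(\mathbf{P}-\mathbf{P}_{\vec{k}})\|=\mathit{O}(|\vec{k}|)$ is a time-uniform constant and cannot be absorbed into a bound that decays like $e^{-t\frak{g}/2}$. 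The paper instead obtains Part~(5) the way it obtained Part~(3) of Proposition~\ref{RelBnds}: establish the resolvent bound $\|(\mathcal{L}_{\vec{k}}-z)^{-1}(\textup{I}-\mathbf{P}_{\vec{k}})\|\le C/(z+\frak{g}/2)$ for $z>-\frak{g}/2$ via the Neumann series (this handles relative boundedness correctly, since the series is resummed inside the resolvent rather than iterated under a time integral), and then invoke the semigroup-from-resolvent machinery of \cite[Sec.IX.4]{Kato}. You need to replace the Duhamel--Gronwall step with this resolvent argument.
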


\begin{proof}\text{ }\\
\noindent Part (1):\\

By the second-order expansion in~(\ref{TwoDerivatives}) with operator derives relatively bounded to $\mathcal{L}_{0}$, perturbation theory~\cite{Kato} guarantees  there is an isolated eigenvalue $\epsilon (\vec{k})$ near zero with expansion
$$\epsilon(\vec{k})= \vec{k}\cdot\mathbf{a} +2^{-1}\vec{k}^{\otimes^{2}}\cdot(\mathbf{b}-\mathbf{c})+\mathit{O}(|\vec{k}|^{-3}),       $$
where $\mathbf{a}\in \C^{3}$, $\mathbf{b},\mathbf{c}\in \C^{3}\otimes \C^{3}$ are of the form
$$ \mathbf{a}_{j}=\langle 1|  \mathcal{L}_{0}^{(1)}(j)  |\nu_{\infty}\rangle,\quad    \mathbf{b}_{i,j}=\langle 1|  \mathcal{L}_{0}^{(2)}(i,j)  |\nu_{\infty}\rangle,\quad  \mathbf{c}_{i,j}=\langle 1|  \mathcal{L}_{0}^{(1)}(i) S \mathcal{L}_{0}^{(1)}(j)   |\nu_{\infty}\rangle,  $$
where $S$ is the reduced resolvent of $\mathcal{L}_{0}$.   Note that  $\langle 1|  A |\nu_{\infty}\rangle =\int_{\R^{3}}d\vec{p}\,  (A\nu_{\infty})(\vec{p})  $ for an operator $A$ on $L^{1}(\R^{3})$.  I claim  $\mathbf{a}=0$, $\mathbf{b}_{i,j}=-\delta_{i,j}D_{\textup{jps}}$, and  $\mathbf{c}_{i,j}=\delta_{i,j}D_{\textup{kin}}$. 

I begin by finding expressions for the first two derivatives $\mathcal{L}_{\vec{k}}$ at $\vec{k}=0$.  The derivatives are given by  
$$\hspace{2cm}(\nabla^{\otimes^N} \mathcal{L}_{\vec{r}})\big|_{\vec{r}=0}= -i\hbar^{-1}(\nabla^{\otimes^N} h_{\vec{r}})\big|_{\vec{r}=0}+(\nabla^{\otimes^N}\mathcal{J}_{\vec{r}})\big|_{\vec{r}=0}-(\nabla^{\otimes^N} \mathcal{E}_{\vec{r}})\big|_{\vec{r}=0},\quad \quad N=1,\,2,             $$       
where the terms on the right are discussed presently.  As a result of Lem.~\ref{TwoDerivatives}, the operators above map $\mathcal{T}_{1}$ into  $L^{1}(\R^{3}, \C^{3^{N}} )$.

The operator $-(\nabla h_{\vec{r}})|_{\vec{r}=0}$ acts as a vector of multiplication operators with multiplication function $  \hbar\,(\nabla H)(\vec{p}) $.  The terms  $(\nabla^{\otimes^2}h_{\vec{r}})|_{\vec{r}=0}$ and  $(\nabla\mathcal{E}_{\vec{r}})|_{\vec{r}=0}$ are both zero due to symmetry around $\vec{r}=0$.  To write $(\nabla^{\otimes^2} \mathcal{E}_{\vec{r}})|_{\vec{r}=0}$, $(\nabla\mathcal{J}_{\vec{r}})|_{\vec{r}=0}$, and $(\nabla^{\otimes^{2}}\mathcal{J}_{\vec{r}})|_{\vec{r}=0}$, I will make use of the functions $\textup{T}_{\vec{q}}(\vec{p}_{1},\vec{p}_{2})$ in the form~(\ref{Ems}).      
The operator $(\nabla^{\otimes^2} \mathcal{E}_{\vec{r}})\big|_{\vec{r}=0} $ acts as a vector of multiplication operators with multiplication functions
\begin{align} \label{Zijn}
(\nabla^{\otimes^2}\mathcal{E}_{\vec{r}})(\vec{p})\big|_{\vec{r}=0}   = 4^{-1}\hbar^{2} \int_{\R^{3}}d\vec{q}\,((\nabla_{1}+\nabla_{2})^{\otimes^{2}}\textup{T}_{\vec{q}})(\vec{p},\vec{p}),  
\end{align}
where $\nabla_{1}$ and $\nabla_{2}$ are the gradients with respect to the first and second arguments of $\textup{T}_{\vec{q}}(\vec{p}_{1},\vec{p}_{2})$. The operators $\nabla\mathcal{J}_{\vec{r}}|_{\vec{r}=0}$ and $ \nabla^{\otimes^2}\mathcal{J}_{\vec{r}}|_{\vec{r}=0}$ have the form
\begin{eqnarray}\label{Personne2}
(\nabla\mathcal{J}_{\vec{r}})\big|_{\vec{r}=0}&=& 2^{-1}\hbar\int_{\R^{3}} d\vec{q}\,\tau_{\vec{q}}\,w_{\vec{q}}^{(1)},\\
(\nabla^{\otimes^2}\mathcal{J}_{\vec{r}})\big|_{\vec{r}=0}&=& 4^{-1}\hbar^{2}\int_{\R^{3}} d\vec{q}\,\tau_{\vec{q}}\,w_{\vec{q}}^{(2)}\label{Personne},  
\end{eqnarray}
where  $\tau_{\vec{q}}$ acts as a shift $(\tau_{\vec{q}}y)(\vec{p})=y(\vec{p}-\vec{q})$, $y\in L^{1}(\R^{3})$ and $w_{\vec{q}}^{(1)}$ and
$w_{\vec{q}}^{(2)}$ are vectors of multiplication operators with  $$w_{\vec{q}}^{(1)}(\vec{p}):= -((\nabla_{1}-\nabla_{2})\textup{T}_{\vec{q}})(\vec{p},\vec{p})\quad \quad \text{and}\quad  \quad w_{\vec{q}}^{(2)}(\vec{p}):= ((\nabla_{1}-\nabla_{2})^{\otimes^2}\textup{T}_{\vec{q}})(\vec{p},\vec{p}).$$

Now I am almost ready to evaluate $\mathbf{a}$, $\mathbf{b}$, and $\mathbf{c}$.  As a preliminary observation,  there are rotational symmetries 
\beq \label{Rot}\nu_{\infty}(\textup{R}\vec{p})=\nu_{\infty}(\vec{p}),\quad H(\textup{R}\vec{p})=H(\vec{p}),\quad \textup{T}_{\textup{R}\vec{q}}(\textup{R}\vec{p}_{1},\textup{R}\vec{p}_{2})= \textup{T}_{\vec{q}}(\vec{p}_{1},\vec{p}_{2}),\quad \quad \textup{R}\in SO_{3}. \eeq
It follows that $\mathbf{a}$ must be zero.  Moreover, the tensors $\mathbf{b}$ and $\mathbf{c}$ must be invariant under the operation of $\textup{R}\otimes \textup{R}$ for all rotations $\textup{R}$, and therefore   $\mathbf{b}$, $\mathbf{c}$ are constant multiples of the identity tensor.

 With~(\ref{Zijn}) and~(\ref{Personne}), the value for $\mathbf{b}$ can be written  
\begin{align*}
\mathbf{b}&=\big\langle 1 \big|
\big(\nabla^{\otimes^2}\mathcal{J}_{\vec{r}}\big|_{\vec{r}=0}-\nabla^{\otimes^2} \mathcal{E}_{\vec{r}}\big|_{\vec{r}=0}\big)\nu_{\infty}\big\rangle\\ &= -\hbar^{2}\int_{\R^{3}}d\vec{p}\int_{\R^{3}} d\vec{q}\,\nu_{\infty}(\vec{p}) (\nabla_{1}\otimes_{s}\nabla_{2}\textup{T}_{\vec{q}})(\vec{p},\vec{p}),
\end{align*}
where $\otimes_{s}$ is the symmetrized tensor product. By the rotational symmetry mentioned above, I can write
$$ \big[\big\langle 1 \big|
\big(\nabla^{\otimes^2}\mathcal{J}_{\vec{r}}\big|_{\vec{r}=0}-\nabla^{\otimes^2} \mathcal{E}_{r}\big|_{\vec{r}=0}\big)\nu_{\infty}\big\rangle \big]_{i,j}= -\delta_{i,j} 3^{-1}\hbar^{2}\int_{\R^{3}}d\vec{p}\int_{\R^{3}} d\vec{q}\,\nu_{\infty}(\vec{p})(\widetilde{\Delta}\textup{T}_{\vec{q}})(\vec{p},\vec{p}),  $$
where $\widetilde{\Delta}=\sum_{j=1,2,3} \partial_{1,j}\partial_{2.j}$, and $\partial_{i,j}$ is the derivative for $i$th component in the $e_{j}$ direction for some orthonormal basis $(e_{1},e_{2},e_{3})$.  The definition of $\widetilde{\Delta}$ is invariant of the basis used.  The expression $(\widetilde{\Delta}\textup{T}_{\vec{q}})(\vec{p},\vec{p})$ is equal to 
\begin{align}\label{Aqui}
(\widetilde{\Delta}\textup{T}_{\vec{q}})(\vec{p},\vec{p})=&\int_{(\vec{q})_{\perp} }d\vec{v}\sum_{j=1}^{3}\big|\partial_{j}L_{\vec{q},\vec{v}}(\vec{p})\big|^{2}\nonumber \\  = &\eta \frac{m}{m_{*}^{2}}\frac{1}{|\vec{q}|}\int_{(\vec{q})_{\perp} }d\vec{v}\, r\big(\vec{v}+2^{-1}\frac{m}{m_{*}}\vec{q}+\frac{m}{M}\vec{p}_{\parallel \vec{q}  }    \big)\nonumber \\ & \times \Big(  \frac{\beta^{2}}{4M^{2}}\big| \vec{v}+2^{-1}\frac{m}{m_{*}}\vec{q}+\frac{m}{M}\vec{p}_{\parallel \vec{q}} \big|^{2}\,\big| \mathbf{f}\big(|\vec{p}_{\textup{rel}}|, \theta   \big)\big|^{2}+ \frac{m_{*}^{2}}{M^{2}}\big|\partial_{\mathbf{z}}\mathbf{f}\big(|\vec{p}_{\textup{rel}}|, \theta   \big)\big|^{2}     \Big) ,
 \end{align}
where $\vec{p}_{\textup{rel}} = \frac{m_{*}}{m}\vec{v}-\frac{m_{*}}{M}\vec{P}_{\perp \vec{q}}+2^{-1}\vec{q}$  and $2\tan(2^{-1}\Theta)=\big|\vec{q}  \big|\,\big|\frac{m_{*}}{m}\vec{v}-\frac{m_{*}}{M}\vec{p}_{\perp q} \big|^{-1}$.  In the above, the basis vectors $(e_{1},e_{2},e_{3})$ for $\tilde{\Delta}$ have been chosen to depend on $\vec{q},\vec{v},\vec{p}$ as $e_{1}=\frac{\vec{q}}{|\vec{q}|}$, $e_{2}=\frac{\frac{m_{*}}{m}\vec{v}-\frac{m_{*}}{M}\vec{p}_{\perp q}}{|\frac{m_{*}}{m}\vec{v}-\frac{m_{*}}{M}\vec{p}_{\perp q}|}$, and with $e_{3}$ as one of the two normalized vectors orthogonal to $e_{1}$ and $e_{2}$.  The derivative $\partial_{3}L_{\vec{q},\vec{v}}(\vec{p})$ with respect to the third direction $e_{3}$ is equal to zero.

By using equality~(\ref{Aqui}) and by shifting to  center-of-mass coordinates for the integration variables 
\begin{align*}
\int_{\R^{3}}d\vec{p}&\int_{\R^{3}} d\vec{q}\,\nu_{\infty}(\vec{p})(\widetilde{\Delta}\textup{T}_{\vec{q}})(\vec{p},\vec{p}) \\  = & \frac{\eta}{m_{*}}\int_{\R^{3}}d\vec{p}_{\textup{cm}}\frac{ e^{-\frac{\beta}{2(m+M)} \vec{p}_{\textup{cm}}^{2}}   }{\big(2\pi(m+M) \beta^{-1}   \big)^{\frac{3}{2}}   }\int_{\R^{3}}d\vec{p}_{\textup{rel}} \big|\vec{p}_{\textup{rel}} \big|  \frac{ e^{-\frac{\beta}{2m_{*}} \vec{p}_{\textup{rel}}^{2}}   }{\big(2\pi m_{*} \beta^{-1}   \big)^{\frac{3}{2}}   } \\ &\times \int_{\Omega} \Big(  \frac{\beta^{2}}{4M^{2}}\big| \vec{p}_{\textup{rel}}+\frac{m_{*}}{M}\vec{p}_{\textup{cm}} \big|^{2}\,\big| \mathbf{f}\big(|\vec{p}_{\textup{rel}}|, \theta   \big)\big|^{2}+ \frac{m_{*}^{2}}{M^{2}}\big|\partial_{\mathbf{z}}\mathbf{f}\big(|\vec{p}_{\textup{rel}}|, \theta   \big)\big|^{2}   \Big) \\  =& \frac{\eta}{m_{*}}\int_{\R^{3}}d\vec{p}_{\textup{rel}} \big|\vec{p}_{\textup{rel}} \big|  \frac{ e^{-\frac{\beta}{2m_{*}} \vec{p}_{\textup{rel}}^{2}}   }{\big(2\pi m_{*} \beta^{-1}   \big)^{\frac{3}{2}}   } \Big(  \frac{\beta^{2}}{4M^{2}}\big( |\vec{p}_{\textup{rel}}|^{2}+3\frac{m}{M}\frac{m_{*}}{\beta } \big)\,\sigma_{\textup{tot}}(|\vec{p}_{\textup{rel}}| )+ \frac{m_{*}^{2}}{M^{2}}\sigma_{\mathbf{z}}(|\vec{p}_{\textup{rel}}|)   \Big) \\  =&4\eta\frac{m_{*}}{M^{2}}\big(\frac{2m_{*} }{\pi\beta }   \big)^{\frac{1}{2}}\int_{\R_{+}}d\mathbf{q} \,\mathbf{q}^{3}    e^{- \mathbf{q}^{2}}   \Big(  \frac{\beta }{4m_{*}}\big( 2\mathbf{q}^{2}+3\frac{m}{M} \big)\, \sigma_{\textup{tot}}\big((\frac{2m_{*} }{\beta} )^{\frac{1}{2}}\mathbf{q} \big)+  \sigma_{\mathbf{z}}\big((\frac{2m_{*} }{\beta} )^{\frac{1}{2}}\mathbf{q}\big)    \Big).
\end{align*}

I have shown 
\begin{align*} \mathbf{b}_{i,j} &= \big[\big\langle 1 \big|
(\nabla^{\otimes^2}\mathcal{L}_{\vec{r}})\big|_{\vec{r}=0} \nu_{\infty}\big\rangle \big]_{i,j}\\ &= \big[\big\langle 1 \big|
\big( -\frac{\ii}{\hbar} (\nabla^{\otimes^2}h_{\vec{r}})\big|_{\vec{r}=0}+   (\nabla^{\otimes^2}\mathcal{J}_{\vec{r}})\big|_{\vec{r}=0}-(\nabla^{\otimes^2} \mathcal{E}_{\vec{r}})\big|_{\vec{r}=0}\big)\nu_{\infty}\big\rangle \big]_{i,j} \\ &=-\delta_{i,j}\frac{4\eta \hbar^{2} }{3}\frac{m_{*}}{M^{2}}\big(\frac{2m_{*} }{\pi\beta }   \big)^{\frac{1}{2}}\int_{\R_{+}}d\mathbf{q} \,\mathbf{q}^{3}    e^{- \mathbf{q}^{2}}   \Big(  \frac{\beta }{4m_{*}}\big( 2\mathbf{q}^{2}+3\frac{m}{M} \big)\, \sigma_{\textup{tot}}\big((\frac{2m_{*} }{\beta} )^{\frac{1}{2}}\mathbf{q} \big)+  \sigma_{\mathbf{z}}\big((\frac{2m_{*} }{\beta} )^{\frac{1}{2}}\mathbf{q}\big)    \Big), 
\end{align*}
which is equal to $-\delta_{i,j}D_{\textup{jps}}$. 

Finally, to find an expression for $\mathbf{c}$, I again use the rotation symmetry to get
$$\mathbf{c}_{i,j}=3^{-1}\delta_{i,j}\sum_{n=1}^{3} \langle 1|  \mathcal{L}_{0}^{(1)}(n) S \mathcal{L}_{0}^{(1)}(n)   |\nu_{\infty}\rangle.    $$
For the image of the spectral projection $\textup{I}-\mathbf{P}$ corresponding to the values $\Sigma(\mathcal{L}_{0})-\{0\}$, the reduced resolvent is equal to  $S=-\int_{0}^{\infty}dt\,e^{t\mathcal{L}_{0}}$.  The image of $\textup{I}-\mathbf{P}$ is the set of elements in $L^{1}(\R^{3})$ with integral zero.  My earlier observation that $\mathbf{a}=0$ means  $\langle 1| \mathcal{L}_{0}^{(1)}(n)   |\nu_{\infty}\rangle=0 $, $n=1,2,3$ so $\mathcal{L}_{0}^{(1)}(n)  \nu_{\infty}$ is in the image of $\textup{I}-\mathbf{P}$.  Hence, the elements of $\mathbf{c}$ can be expressed as          
\begin{align} \label{Cee}
\mathbf{c}_{i,j}&= -\frac{\delta_{i,j}}{3}\sum_{n=1}^{3}\int_{0}^{\infty}dt\, \big\langle 1 \big|  \mathcal{L}_{0}^{(1)}(n) e^{t\mathcal{L}_{0} } \mathcal{L}_{0}^{(1)}(n)   \big|\nu_{\infty}\big\rangle \nonumber \\ &= \frac{\delta_{i,j}}{3}\sum_{n=1}^{3}\int_{0}^{\infty}dt\, \Big(\big\langle 1 \big|\, \frak{v}_{n} e^{t\mathcal{L}_{0} }\,\frak{v}_{n}  \big|\nu_{\infty}\big\rangle + \big\langle 1 \big|\, \frak{v}_{n} e^{t\mathcal{L}_{0} }\,\frak{w}_{n}  \big|\nu_{\infty}\big\rangle \Big)
\nonumber \\ & = \frac{\delta_{i,j}}{3}\sum_{n=1}^{3}\int_{0}^{\infty}dt\, \big\langle 1 \big|\, \frak{v}_{n} e^{t\mathcal{L}_{0} }\,\frak{v}_{n}  \big|\nu_{\infty}\big\rangle
\end{align}
where $\frak{v}_{n}$ and $\frak{w}_{n}$ are multiplication operators acting on  $L^{1}(\R^{3})$ whose corresponding functions (denoted by the same symbol) are specified below.  The functions $\frak{v}_{n},\frak{w}_{n}$ are the $n$th components  of $\frak{v},\frak{w}:\R^{3}\rightarrow \R^{3} $ for $\frak{v}$ defined as in~(\ref{VelocityFunction}) and 
$$\frak{w}(\vec{p}) := \frac{1}{\nu_{\infty}(\vec{p})} \frac{\hbar}{2}\int_{\R^{3}} d\vec{q}\,w_{\vec{q}}^{(1)}(\vec{p}-\vec{q})\nu_{\infty}(\vec{p}-\vec{q}) - \frak{v}(\vec{p}) .  $$
The second inequality in~(\ref{Cee}) follows from the relation $\big\langle 1 \big|  \mathcal{L}_{0}^{(1)}(n)= -\ii\big\langle 1 \big|  \frak{v}_{n}$ and   the definition of $\frak{w}$.
The function $\frak{w}(\vec{p})$ turns out to be zero by the detailed balance symmetry, since   
\begin{align}\label{Zizzle}
\frak{w}(\vec{p}) = & \frac{\hbar}{2} \int_{\R^{3}} d\vec{q}\,\Big( \frac{\nu_{\infty}(\vec{p}-\vec{q}) }{\nu_{\infty}(\vec{p})}w_{\vec{q}}^{(1)}(\vec{p}-\vec{q})- w_{-\vec{q}}^{(1)}(\vec{p})  \Big) \nonumber  \\ = & \frac{\hbar}{2} \int_{\R^{3}} d\vec{q}\,\frac{\nu_{\infty}^{\frac{1}{2}}(\vec{p}-\vec{q})}{\nu_{\infty}^{\frac{1}{2}}(\vec{p})}\textup{Im}\Big((\nabla_{1}\gamma_{q})(\vec{p}-\vec{q},\vec{p}-\vec{q})-(\nabla_{1} \gamma_{-q})(\vec{p},\vec{p})   \Big)\nonumber  \\ = & 0.  
\end{align}
where $\nabla_{1}$ is the gradient of the first argument $\vec{p}_{1}$ of $\gamma_{q}(\vec{p}_{1},\vec{p}_{2}) $.  The first equality in~(\ref{Zizzle}) is by the equality 
$\frak{v}(\vec{p})=\frac{\hbar}{2}\int_{\R^{3}} d\vec{q}\,w_{\vec{q}}^{(1)}(\vec{p})$.  The second equality in~(\ref{Zizzle}) follows from 
$$w_{\vec{q}}^{(1)}(\vec{p})= -((\nabla_{1}-\nabla_{2})\textup{T}_{\vec{q}})(\vec{p},\vec{p})= -2\textup{i}\textup{Im}\big[(\nabla_{1}\textup{T}_{\vec{q}})(\vec{p},\vec{p})\big] = -2\textup{i}\frac{\nu_{\infty}^{\frac{1}{2}}(\vec{p})}{\nu_{\infty}^{\frac{1}{2}}(\vec{p}+\vec{q})}\textup{Im}\big[(\nabla_{1}\gamma_{q})(\vec{p},\vec{p})\big], $$
where the second equality is $\textup{T}_{\vec{q}}(\vec{p}_{1},\vec{p}_{2})= \overline{\textup{T}_{\vec{q}}(\vec{p}_{2},\vec{p}_{1})}$ (since $\textup{T}_{\vec{q}}$ is the kernel for a positive operator), and the third is the detailed balance form~(\ref{DetailedEms}).  The last equality in~(\ref{Zizzle}) holds because $(\nabla_{1}\gamma_{q})(\vec{p}-\vec{q},\vec{p}-\vec{q})=(\nabla_{1} \gamma_{-q})(\vec{p},\vec{p})$ by~(\ref{GammaSymmetry}).

By working with the adjoint semigroup $e^{t\mathcal{L}_{0}^{*}}$, I can write the elements of $\mathbf{c}$ as
\begin{align*}
\mathbf{c}_{i,j} &=\frac{\delta_{i,j}}{3}\sum_{n=1}^{3}\int_{0}^{\infty} \int_{\R^{3}}d\vec{p}\, (e^{t\mathcal{L}_{0}^{*}}\frak{v}_{n})(\vec{p})\, \frak{v}_{n}(\vec{p})\nu_{\infty}(\vec{p})\\ &=\frac{\delta_{i,j}}{3}\int_{0}^{\infty}dt\, \mathbb{E}_{\nu_{\infty}}\big[ \frak{v}(\frak{p}_{t})\cdot \frak{v}(\frak{p}_{0})\big]\\ &=\delta_{i,j} D_{\textup{kin}},
\end{align*}
where $\frak{p}_{t}$ is a Markov process whose probability densities evolve according to the master equation~(\ref{MomDist}) starting from the stationary state $\nu_{\infty}$.

\vspace{.5cm}

\noindent Part (2):\\

This follows from the relatively bounded perturbation expansion from Part (1) of Lem.~\ref{TwoDerivatives}.

\vspace{.5cm}

\noindent Part (3):\\  

I will build up control for the Taylor expansion of $\mathbf{P}_{\vec{k}}$ through control of the resolvent  $\frac{1}{ \mathcal{L}_{\vec{k}}-z  }$ for $z$ in a certain region of $z$ bounded away form the origin.  More precisely, 
I will use that the projection $\mathbf{P}_{\vec{k}}$ can be written as  a complex integral 
\beq \label{ComplexInt}
\mathbf{P}_{\vec{k}}=-\frac{1}{2\pi \textup{i}}\int_{\gamma}\frac{1}{\mathcal{L}_{\vec{k}}-z}dz,
\eeq
for a counter-clockwise curve $\gamma$ along the circle of radius $2^{-1}\frak{g}$ around the origin.

By Part (1), I can pick $r>0$ small enough so  $|\epsilon(\vec{k})|\leq 4^{-1}\frak{g}$ for all $|\vec{k}|\leq r$ and $\Sigma(\mathcal{L}_{\vec{k}})$ contains no other values in the ball of radius $\frac{3}{4}\frak{g}$ around the origin.  Let us also pick $r$ small enough so  $\sup_{|\vec{k}|\leq r}\|(\mathcal{L}_{\vec{k}}-\mathcal{L}_{0})y\|_{1}\leq \delta\|y\|_{\mathcal{L}_{0}}$ for all $y\in L^{1}$ and some small $\delta>0$.  The resolvent expansion   
$$ \frac{1}{\mathcal{L}_{\vec{k}}-z}=  \frac{1}{\mathcal{L}_{0}-z}    \sum_{n=0}^{\infty}(-1)^{n}\Big(( \mathcal{L}_{\vec{k}}-\mathcal{L}_{0}) \frac{1}{ \mathcal{L}_{0}-z  } \Big)^{n}     $$
makes sense for $|\vec{k}|\leq r$ and $  \frac{3}{8}\frak{g}  < |z|< \frac{5}{8}\frak{g}$, since there is the following operator norm bound:
\begin{align*}
\big\|( \mathcal{L}_{z}-\mathcal{L}_{0}) \frac{1}{ \mathcal{L}_{0}-z }\big\| \leq & \delta\big\|\frac{1 }{\mathcal{L}_{0}-z   } \big\|_{\mathcal{L}_{0} } \\ = & \delta\Big(\big\| \frac{1}{ \mathcal{L}_{0}-z}  \big\|+ \big\| \frac{\mathcal{L}_{0} }{ \mathcal{L}_{0}-z}  \big\|    \Big)\\ < &  \delta\Big( 2+\frac{5}{2}C_{\frac{\frak{g}}{8}}+\frac{5}{\frak{g}}+\frac{ 4 }{\frak{g}  }C_{\frac{\frak{g}}{8}}  \Big)  \leq 2^{-1}. 
\end{align*}
The second inequality above is for some $C_{\frac{\frak{g}}{8}}>0$ by Part (4) of Prop.~\ref{RelBnds}, and the last inequality is for small enough $\delta$.

The first two derivatives for the resolvent around $\vec{k}=0$ are
\begin{eqnarray*}
 \partial_{i}\frac{1}{\mathcal{L}_{\vec{v}}-z}\big|_{\vec{v}=0} &=&-\frac{1}{\mathcal{L}_{0}-z}\mathcal{L}_{0}^{(1)}(i)\frac{1}{\mathcal{L}_{0}-z},\\ \partial_{i}\partial_{j}\frac{1}{\mathcal{L}_{\vec{v}}-z}\big|_{\vec{v}=0} &=& -\frac{1}{\mathcal{L}_{0}-z}\mathcal{L}_{0}^{(2)}(i,j)\frac{1}{\mathcal{L}_{0}-z}\\ & &+\sum_{(n_{1},n_{2})= \left\{ \substack{ (i,j),\\ (j,i)}  \right.  } \frac{1}{\mathcal{L}_{0}-z} \mathcal{L}_{0}^{(1)}(n_{1})\frac{1}{\mathcal{L}_{0}-z} \mathcal{L}_{0}^{(1)}(n_{2})\frac{1}{\mathcal{L}_{0}-z}, 
 \end{eqnarray*}
where $\mathcal{L}_{0}(i)$ and $\mathcal{L}_{0}^{(2)}(i,j)$ are defined as in Lem.~\ref{TwoDerivatives}, and the sum is counted twice with $i=j$.   The third-order error $\mathbf{E}_{\vec{k}}^{(3)}(z)$ for expanding $\frac{1}{\mathcal{L}_{\vec{k}}-z}$ around $\vec{k}=0$ has the form
\begin{align*}
\mathbf{E}_{\vec{k}}^{(3)}(z):=&\frac{1}{\mathcal{L}_{\vec{k}}-z}-\frac{1}{\mathcal{L}_{0}-z}-\vec{k}\cdot\nabla\big(\frac{1}{\mathcal{L}_{\vec{v}}-z}\big)\Big|_{\vec{v}=0}-2^{-1}\vec{k}^{\otimes^{2}}\cdot\nabla^{\otimes^{2}}\big(\frac{1}{\mathcal{L}_{\vec{v}}-z}\big)\Big|_{\vec{v}=0} \\ =& -\frac{1}{\mathcal{L}_{0}-z}\hat{R}_{\vec{k}}^{(3)}\frac{1}{\mathcal{L}_{0}-z}+  \frac{1}{\mathcal{L}_{0}-z}\hat{R}_{\vec{k}}^{(2)}\frac{1}{\mathcal{L}_{0}-z}\hat{R}_{\vec{k}}^{(1)}\frac{1}{\mathcal{L}_{0}-z}\\ &+ \frac{1}{\mathcal{L}_{0}-z}\hat{R}_{\vec{k}}^{(1)}\frac{1}{\mathcal{L}_{0}-z}\hat{R}_{\vec{k}}^{(2)}\frac{1}{\mathcal{L}_{0}-z}\\ & - \frac{1}{\mathcal{L}_{0}-z}\hat{R}_{\vec{k}}^{(2)}\frac{1}{\mathcal{L}_{0}-z}\hat{R}_{\vec{k}}^{(2)}\frac{1}{\mathcal{L}_{0}-z}+
 \frac{1}{\mathcal{L}_{0}-z}    \sum_{n=3}^{\infty}(-1)^{n}\Big(( \mathcal{L}_{\vec{k}}-\mathcal{L}_{0}) \frac{1}{ \mathcal{L}_{0}-z  } \Big)^{n}, 
\end{align*}
where  $\hat{R}_{\vec{k}}^{(n)}$ is the error for the $(n-1)$th order Taylor expansion of $\mathcal{L}_{\vec{k}}$ around zero:      
$$ \hat{R}_{\vec{k}}^{(n)}= \mathcal{L}_{\vec{k}}-\mathcal{L}_{0}-\sum_{j=1}^{n-1}\frac{1}{j!}\vec{k}^{\otimes^{j}}\cdot\nabla^{\otimes^j}\big(\frac{1}{\mathcal{L}_{\vec{v}}-z}\big)\Big|_{\vec{v}=0}.    $$

The technique for bounding the second derivative and the error $\mathbf{E}_{\vec{k}}^{(3)}(z)$ will be clear from the analysis of the first derivatives $\partial_{i}\frac{1}{\mathcal{L}_{\vec{v}}-z}\big|_{\vec{v}=0}$. 
   For $|z|=2^{-1}\frak{g}$, I have the following string of inequalities. 
\begin{align*} \| \big( \partial_{i}\frac{1}{\mathcal{L}_{\vec{v}}-z}\big|_{\vec{v}=0}\big)y \|_{\mathcal{L}_{0}^{n}} & < 4(1+\frac{1}{\frak{g}} )(1+c_{\frac{\frak{g}}{4} })\|\mathcal{L}_{0}^{(1)}(i)\frac{1}{\mathcal{L}_{0}-z}y\|_{\mathcal{L}_{0}^{n-1}}\\ &\leq 4(1+\frac{1}{\frak{g}} )(1+c_{\frac{\frak{g}}{4} })C_{n}\|\frac{1}{\mathcal{L}_{0}-z}y\|_{\mathcal{L}_{0}^{n}}\\ & < 16(1+\frac{1}{\frak{g}} )^{2}(1+c_{\frac{\frak{g}}{4} })^{2} C_{n}\|y\|_{\mathcal{L}_{0}^{n-1}}.\end{align*}
The first and last inequalities hold for some $c_{\frac{\frak{g}}{4} }$ by the bounds on $\|\frac{1}{\mathcal{L}_{0}-z}\|$ and $\|\frac{\mathcal{L}_{0}}{\mathcal{L}_{0}-z}\|$ from  Part (4) of Prop.~\ref{RelBnds}.  The second equality above is by Part (2) of Lem.~\ref{TwoDerivatives}.  Since the integral~(\ref{ComplexInt}) is over a compact curve and all the estimates are uniform, I obtain the necessary bound on the value of $\|(\partial_{i}\mathbf{P}_{\vec{v}}|_{\vec{v}=0}) y\|_{\mathcal{L}_{0}^{n}}$.

\vspace{.5cm}

\noindent Part (4):\\

For any ball $B_{r}$ of radius $r$ around the origin, there are constants $a,b>0$ such that 
$$\hspace{3cm} \sup_{|\vec{k}|\leq r}\sum_{j=1,2,3}\|\partial_{j}\mathcal{L}_{\vec{k}}y\|_{1}\leq a\|\mathcal{L}_{0}y\|_{1}+b\|y\|_{1}, \hspace{2cm} y\in L^{1}(\R^{3}).  $$
Consequently, for $|\vec{k}|\leq r$, the operators   $\mathcal{L}_{\vec{k}}-\mathcal{L}_{0}$ are relatively bounded to $\mathcal{L}_{0}$ with uniform constants given by
\begin{align*}
\|(\mathcal{L}_{\vec{k}}-\mathcal{L}_{0})y\|_{1} &\leq \int_{0}^{1}dv \big\| \vec{k} \nabla \mathcal{L}_{v\vec{k}}y \big\|_{1}\\ &\leq  3a r\|\mathcal{L}_{0}y\|_{1}+3br\|y\|_{1}.  
\end{align*}
I can pick $r$ to make the coefficients $3ar$ and $3br$ arbitrarily small.  In particular, I can pick $r$ so that $$\Sigma(\mathcal{L}_{\vec{k}})\subset \Sigma(\mathcal{L}_{0})+B_{\delta},  $$
and the non-degenerate eigenvalue $\epsilon(\vec{k})$ is the only spectral value within a radius $\delta$ of $0$.  It follows the rest of the spectrum has real part less than $-\frak{g}+\delta$, and I can take  $\delta<2^{-1}\frak{g}$.

\vspace{.5cm}

\noindent Part (5):\\

 By similar reasoning as for Part (3) of Prop.~\ref{RelBnds},
there is $c>0$ such that for all $t\in \R_{+}$ and $|\vec{k}|\leq r$ for $r>0$ from Part (4),
\begin{align}\label{Trim}
  \big\| e^{t\mathcal{L}_{\vec{k}} }\big( \textup{I}-\mathbf{P}_{\vec{k}}\big)  \big\| \leq ce^{-t\frac{\frak{g}}{2} }.  
\end{align}
 
The difference between $\mathcal{L}_{\vec{k}}$ and $\mathcal{L}_{\vec{k}}'$ is bounded, and thus $\mathcal{L}_{\vec{k}}'$ is a closed operator with domain $D(\mathcal{L}_{\vec{k}})$. The semigroup $e^{t\mathcal{L}_{\vec{k}}'}$ can be expressed as
$$ e^{t\mathcal{L}_{\vec{k}}'}= e^{t\frac{\frak{g}}{2} } e^{t\mathcal{L}_{\vec{k}}}\big(\textup{I}-\mathbf{P}_{\vec{k}}\big)+e^{t\epsilon(\vec{k})}\mathbf{P}_{\vec{k}}.$$ 
By the triangle inequality and~(\ref{Trim}),
\begin{align*}
\| e^{t\mathcal{L}_{\vec{k}}'}\big\| \leq e^{t\frac{\frak{g}}{2} }\big\| e^{t\mathcal{L}_{\vec{k}} }(\textup{I}-\mathbf{P}_{\vec{k}})\|+e^{t\epsilon(\vec{k})}\|\mathbf{P}_{\vec{k}}\|  \leq c+1.
\end{align*}

\end{proof}

\section{Proof of main result}\label{SecProofMain}

With the results from previous sections, the proof of the convergence for the rescaled position distribution in Thm.~\ref{MainThm} has the pattern found in~\cite{Diff}.  The convergence of the moments requires more effort due to the unbounded operators in this case.

\begin{proof}[Proof of Thm.~\ref{MainThm}]
Parts (i) and (ii) below treat respectively  the distributional convergence of the measure $t^{\frac{3}{2}}\mu_{t}(t^{\frac{1}{2}}\vec{r} )$ as $t\rightarrow \infty$ and the convergence of its second moments. The notation $\partial_{j}$, $j=1,2,3 $ and $\nabla$ will refer to partial derivatives and the gradient, respectively,  of the fiber variable $\vec{k  }\in \R^{3}$. \vspace{.5cm}

\noindent (i).\hspace{.1cm}  To prove the distributional convergence of the probability densities $t^{\frac{3}{2}}\mu_{t}(t^{\frac{1}{2}}\vec{r} )$
to a Gaussian, I will prove the pointwise convergence of the characteristic functions $$\varphi_{t}(\vec{s})= \int_{\R^{3}}d\vec{r}\,t^{\frac{3}{2}}\mu_{t}(t^{\frac{1}{2}}\vec{r} )  e^{\ii\vec{r}\vec{s}}$$ to $e^{-\frac{D}{2}\vec{s}^{2}}  $.  The characteristic function for $\mu_{t}$ can be written
\begin{align}\label{Alt0}
\tilde{\mu}_{t}(\vec{k})& =\int_{\R^{3}} d\vec{x}\,\mu_{t}(\vec{x})\,e^{\ii\vec{k}\vec{x}}=\Tr\big[\rho_{t}e^{\ii\vec{k}\vec{X} }\big]\nonumber  \\  &=\int_{\R^{3}} d\vec{p}\,\rho_{t}(\vec{p},\vec{p}+\hbar\vec{k}) =\int_{\R^{3}}d\vec{p}\,[\rho_{t}]_{\vec{k}}(\vec{p})\nonumber \\ &=\int_{\R^{3}}d\vec{p}\,( e^{t\mathcal{L}_{\vec{k}}} [\rho]_{\vec{k}})(\vec{p})=\langle 1\big|\, e^{t\mathcal{L}_{\vec{k} }}[\rho]_{\vec{k} }\rangle, 
\end{align}
where I have used that $\rho_{t}e^{i\vec{k}\vec{X} }$ is a trace class with integral vernal $\mathcal{K}(p_{1},p_{2})=\rho_{t}(p_{1},p_{2}+\hbar\vec{k})$ for the third equality,  a change of integration variable $\vec{p}+2^{-1}\hbar \vec{k}\rightarrow \vec{p}$ for the fourth, and the fiber decomposition for the fifth. The last expression is the evaluation of the vector $e^{t\mathcal{L}_{\vec{k} }}[\rho]_{\vec{k} }\in L^{1}(\R^{3}) $ for the linear functional determined by integration against the constant function $1_{\R^{3}}\in L^{\infty}(\R^{3})$.  Fix $\vec{s}\in \R^{3}$ and let $\vec{k}= t^{-\frac{1}{2}}\vec{s}$. By the change of variable $t^{\frac{1}{2}}\vec{r}\rightarrow \vec{r}$, 
\begin{align}\label{Alt}
\varphi_{t}(\vec{s})= \int_{\R^{3}}d\vec{x}\,\mu_{t}(\vec{x} )  e^{\ii t^{-\frac{1}{2}}\vec{x}\vec{s}}= \tilde{\mu}_{t}(\vec{k})  =\langle 1\big|\, e^{t\mathcal{L}_{\vec{k} }}[\rho]_{\vec{k} }\rangle, 
\end{align}
where the last equality is from~(\ref{Alt0}).

Now I study the large $t$ asymptotics for  $e^{t\mathcal{L}_{\vec{k} }}[\rho]_{\vec{k} }\in L^{1}(\R^{3})$. By Prop.~\ref{Perturbation}, for small enough $\vec{k}$ there is an isolated non-degenerate eigenvalue $\epsilon(\vec{k})$ with projection 
$\mathbf{P}_{\vec{k}}$ such that the remainder of the spectral values have real part falling below $-2^{-1}\frak{g}$.  The operator    $e^{t\mathcal{L}_{\vec{k}} }$ can be written as
\begin{align}\label{Zoo}
e^{t\mathcal{L}_{\vec{k}} }= e^{t\epsilon(\vec{k}) }\mathbf{P}_{\vec{k}}+ e^{t\mathcal{L}_{\vec{k} }}(\textup{I}-\mathbf{P}_{\vec{k}}),    
\end{align}  
where the operator norm of $e^{t\mathcal{L}_{ \vec{k} }}(\textup{I}-\mathbf{P}_{\vec{k}})$ decays with the exponential rate $e^{-\frac{\frak{g}}{2} t}$ by Part (5) of Prop.~\ref{Perturbation}. 
Thus, by Parts (1), (2), and (5) of Prop.~\ref{Perturbation}, 
\begin{align*}
 e^{t\mathcal{L}_{ \vec{k} }}[\rho]_{ \vec{k} }=   e^{-\frac{D}{2}\vec{s}^{2}   }\mathbf{P} [\rho]_{ \vec{k} }    +\mathit{O}(t^{-\frac{1}{2}}),
 \end{align*}
where the error is with respect to the $\|\cdot\|_{1}$ norm.  By  Prop.~\ref{FiberStuff}, the condition  $X_{j}\rho$, $j=1,2,3$ (and equivalently $\rho X_{j}$) is trace class implies  $[\rho]_{\vec{k}}$ has a bounded derivative in $\vec{k}$ in the $L^{1}(\R^{3})$ norm.  In particular, $[\rho]_{\vec{k} }=[\rho]_{0}+\mathit{O}(t^{-\frac{1}{2}})$, and it follows  
\begin{align}\label{Exp}
 e^{t\mathcal{L}_{\vec{k} }}[\rho]_{ \vec{k} }=   e^{-\frac{D}{2}\vec{s}^{2}   }\nu_{\infty}    +\mathit{O}(t^{-\frac{1}{2}}),
 \end{align}
since $\langle 1| [\rho]_{0}\rangle=\int_{\R}d\vec{p}[\rho]_{0}(\vec{p})=1$.  Plugging~(\ref{Exp}) in to~(\ref{Alt}), then I conclude  
$$ \varphi_{t}(\vec{s})= e^{-\frac{D}{2}\vec{s}^{2}   }   +\mathit{O}(t^{-\frac{1}{2}}).  $$
This proves the convergence of $t^{\frac{3}{2}}\mu_{t}(t^{\frac{1}{2}}\vec{r} )$ to a normal distribution with variance $D$.  

\vspace{.5cm}
\noindent (ii).\hspace{.1cm} To show  the second moments converge, I can rewrite  the variance in terms of $e^{t\mathcal{L}_{\vec{k} }}[\rho]_{\vec{k} }$ as
\begin{align}\label{Chong}
 t^{-1}\Tr\big[X_{i}X_{j}\rho_{t}]  = & t^{-1}\int_{\R^{3}}d\vec{x}\,x_{i}x_{j}\mu_{t}(\vec{x})\nonumber \\  =&-t^{-1}\partial_{i}\partial_{j}\langle 1\big|\, e^{t\mathcal{L}_{\vec{k} }}[\rho]_{\vec{k} }\rangle\big|_{\vec{k}=0}\nonumber  \\  = & t^{-1}\partial_{i}\partial_{j}\big\langle 1\big|\, e^{t\epsilon(\vec{k}) }\mathbf{P}_{\vec{k}}[\rho]_{\vec{k} }\big\rangle\big|_{\vec{k}=0}\nonumber \\ &+t^{-1}e^{-t\frac{\frak{g}}{2} }\partial_{i}\partial_{j}\big\langle 1\big|\, e^{t(\mathcal{L}_{\vec{k} }+\frac{\frak{g}}{2})   }(\textup{I}-\mathbf{P}_{\vec{k}})[\rho]_{\vec{k} }\big \rangle\big|_{\vec{k}=0}, 
 \end{align}
where the second equality follows from differentiating~(\ref{Alt0}), and the third is~(\ref{Zoo}).  However, the differentiability of $[\rho]_{\vec{k}}$, $\epsilon(\vec{k})$, and $\mathbf{P}_{\vec{k}}$ gives 
$$t^{-1}\partial_{i}\partial_{j}\big\langle 1\big|\, e^{t\epsilon(\vec{k}) }\mathbf{P}_{\vec{k}}[\rho]_{\vec{k} }\big\rangle\big|_{\vec{k}=0}=\delta_{i,j}D+\mathit{O}(t^{-1}),$$   
since $\nabla \epsilon(\vec{k})=0$.

The remainder of the proof is concerned with bounding the last term in~(\ref{Chong}).  Define $\mathcal{L}_{\vec{k}}^{\prime}=  \mathcal{L}_{\vec{k} }+2^{-1}\frak{g}(\textup{I}-\mathbf{P}_{\vec{k}})$ as in Part (5) of Prop.~\ref{Perturbation}.  Then similarly to~(\ref{Zoo}),  $e^{t\mathcal{L}_{\vec{k}}^{\prime}}=e^{t\epsilon(\vec{k})}\mathbf{P}_{\vec{k}}+e^{t(\mathcal{L}_{\vec{k}}+ \frac{\frak{g}}{2})}(\textup{I}-\mathbf{P}_{\vec{k}}) $.
   By the triangle inequality 
\begin{align*}
\big|\partial_{i}\partial_{j}\big\langle 1\big|\, e^{t(\mathcal{L}_{\vec{k} }+\frac{\frak{g}}{2} )   }(\textup{I}-\mathbf{P}_{\vec{k}})[\rho]_{\vec{k} }\big \rangle\big|_{\vec{k}=0}\big| & = \big|\partial_{i}\partial_{j}\big\langle 1\big|\, \big(e^{t\mathcal{L}_{\vec{k}}^{\prime}   } -e^{t\epsilon(\vec{k})}\mathbf{P}_{\vec{k}}  \big)[\rho]_{\vec{k} } \big \rangle\big|_{\vec{k}=0}\big|\\ 
 &\leq \big|\partial_{i}\partial_{j}\big\langle 1\big|\, e^{t\mathcal{L}_{\vec{k}}^{\prime}   } [\rho]_{\vec{k} }\big \rangle\big|_{\vec{k}=0}\big|+\big|\partial_{i}\partial_{j}\big\langle 1\big|\,  e^{t\epsilon(\vec{k}) }\mathbf{P}_{\vec{k}}  [\rho]_{\vec{k} }\big \rangle\big|_{\vec{k}=0}\big|  \\ & \leq \|\partial_{i}\partial_{j} \big(e^{t\mathcal{L}_{\vec{k}}^{\prime}   } [\rho]_{\vec{k} }\big)|_{\vec{k}=0}\|_{1}+\|\partial_{i}\partial_{j}\big(e^{t\epsilon(\vec{k})}\mathbf{P}_{\vec{k}}  [\rho]_{\vec{k} }\big)|_{\vec{k}=0} \|_{1}.
\end{align*}
The rightmost term is finite by the comments above.  I claim  $ \|\partial_{i}\partial_{j} \big(e^{t\mathcal{L}_{\vec{k}}^{\prime}   } [\rho]_{\vec{k} }\big)|_{\vec{k}=0}\|_{1}\leq c(1+t^{2}) $ for some $c>0$.  This would complete the proof, since the  last term in~(\ref{Chong}) has the exponentially decaying factor $e^{-t\frac{\frak{g}}{2} }$. 

 By the product rule, there are four terms to deal with:
\begin{align} \label{OldProd}
\partial_{i}\partial_{j}\big( e^{t\mathcal{L}_{\vec{k}}^{\prime}   } [\rho]_{\vec{k} }\big)|_{\vec{k}=0}\nonumber  = & \big(\partial_{i}\partial_{j}e^{t\mathcal{L}_{\vec{k}}^{\prime}   }|_{\vec{k}=0}\big) [\rho]_{0}+\big( \partial_{i} e^{t\mathcal{L}_{\vec{k}}^{\prime}   }|_{\vec{k=0}}\big)\big( \partial_{j}[\rho]_{\vec{k}}|_{\vec{k}=0 }\big) \\ &+\big( \partial_{j} e^{t\mathcal{L}_{\vec{k} }^{\prime} }|_{\vec{k}=0} \big)\big( \partial_{i}[\rho]_{\vec{k}}|_{\vec{k}=0 }\big)+   e^{t\mathcal{L}_{0}^{\prime}   } \big(\partial_{i}\partial_{j}[\rho]_{\vec{k}}|_{\vec{k}=0 }\big),
\end{align}
where the middle two terms are similar.  
 The formal expressions for $\partial_{i} e^{t\mathcal{L}_{\vec{k}}^{\prime}   }|_{\vec{k}=0}:\mathcal{T}_{1}\rightarrow L^{1}(\R^{3})$ and  $\partial_{i}\partial_{j}e^{t\mathcal{L}_{\vec{k} }^{\prime}   }|_{\vec{k}=0}:\mathcal{T}_{2}\rightarrow L^{1}(\R^{3})$ are given by~(\ref{Duhamel1}) and~(\ref{Duhamel2}).  Part of the analysis involves showing   $\|\partial_{i} e^{t\mathcal{L}_{\vec{k}}^{\prime}   }|_{\vec{k}=0}y\|_{1}$ and $\|\partial_{i}\partial_{j}e^{t\mathcal{L}_{\vec{k} }^{\prime}   }|_{\vec{k}=0}y\|_{1}$ are bounded by the multiples of the norms $\|y\|_{\frak{M}}$ and $\|y\|_{\frak{M}^{2}}$, respectively.  The existence of the first and second derivatives of $e^{t\mathcal{L}_{\vec{k}}^{\prime}   }$ is established with the third-order error~(\ref{Error1}) analysis at the end of the proof.  

Now, I bound the terms on the right side of~(\ref{OldProd}).  By Part (5) of Prop.~\ref{Perturbation}, for all $\vec{k}\in \R^{3}$ in a small enough neighborhood around the origin and all $t\geq 0$, there is a $C>0$ such that 
\begin{align}\label{Bibi}
  \big\|e^{t\mathcal{L}_{\vec{k}}^{\prime}   }\big\|\leq C.
\end{align}
In words, $ e^{t\mathcal{L}_{\vec{k}}^{\prime}   }$ is a bounded semigroup.  The last term in~(\ref{OldProd}) is $\mathit{O}(1)$,  by~(\ref{Bibi}) and the differentiability of $[\rho]_{\vec{k}}$.  To bound $\big(\partial_{i} e^{t\mathcal{L}_{\vec{k}}^{\prime}   }|_{\vec{k}=0}\big) \big(\partial_{j} [\rho]_{\vec{k} }|_{\vec{k}=0}\big)$ in $L^{1}(\R^{3})$, note  
the derivative of $e^{t\mathcal{L}_{\vec{k}}^{\prime}}$ at $\vec{k}=0$ has the form
\beq \label{Duhamel1}
\partial_{i} e^{t\mathcal{L}_{\vec{k}}^{\prime}   }|_{\vec{k}=0}= \int_{0}^{t}dt_{1}e^{(t-t_{1})\mathcal{L}_{0}^{\prime} }(\partial_{i} \mathcal{L}^{\prime}_{\vec{k}}|_{\vec{k}=0} ) e^{t_{1}\mathcal{L}_{0}^{\prime}   },   
\eeq
where $\partial_{i} \mathcal{L}^{\prime}_{\vec{k}} |_{\vec{k}=0}=\partial_{i} \mathcal{L}_{\vec{k}}|_{\vec{k}=0}- 2^{-1}\frak{g}\,\partial_{i}\mathbf{P}_{\vec{k}}|_{\vec{k}=0} $.

For $y\in L^{1}(\R^{3})$,
\begin{align}\label{Bambam}
\| \partial_{i} e^{t\mathcal{L}_{\vec{k}}^{\prime}   }|_{\vec{k}=0}y \|_{1}&\leq C\int_{0}^{t}dt_{1}\big\| \partial_{i} \mathcal{L}^{\prime}_{\vec{k}} |_{\vec{k}=0} e^{t_{1}\mathcal{L}_{0}^{\prime}   }y\big\|_{1}\nonumber \\ &\leq C\int_{0}^{t}dt_{1}\big\| \partial_{i} \mathcal{L}_{\vec{k}} |_{\vec{k}=0} e^{t_{1}\mathcal{L}_{0}^{\prime}   }y\big\|_{1}+2^{-1}C\frak{g}\int_{0}^{t}dt_{1}\big\| \partial_{i}\mathbf{P}_{\vec{k}}|_{\vec{k}=0} e^{t_{1}\mathcal{L}_{0}^{\prime}   }y\big\|_{1}\nonumber  \\ &\leq  CC_{0}\int_{0}^{t}dt_{1}\big\|  e^{t_{1}\mathcal{L}_{0}^{\prime}   }y\big\|_{\mathcal{L}_{0}}+2^{-1}Cc_{0}\frak{g}\int_{0}^{t}dt_{1}\big\|  e^{t_{1}\mathcal{L}_{0}^{\prime}   }y\big\|_{1}\nonumber \\ &\leq tC^{2}\big(C_{0}+ 2^{-1}c_{0}\frak{g}  )\| y \|_{\mathcal{L}_{0}}\leq t\mathbf{c} C^{2}\big( C_{0}+ 2^{-1}c_{0}\frak{g}   )\|  y \|_{\frak{M}}.     
\end{align}
The first inequality above uses the bound for $e^{t\mathcal{L}_{0}   }$ above, and the second is the triangle inequality.   For the third inequality, I use that there some $C_{0}>0$ such that  $\|\partial_{i} \mathcal{L}_{\vec{k}} |_{\vec{k}=0}y\|_{1}\leq C_{0}\|y\|_{\mathcal{L}_{0}}$ for all $y$ by Part (1) of Lem.~\ref{TwoDerivatives} and  $\max_{i}\| \partial_{i}\mathbf{P}_{\vec{k}}|_{\vec{k}=0}\|_{\infty}=c_{0}<\infty $ by Part (3) of Prop.~\ref{Perturbation}.  The fourth inequality follows because $e^{t\mathcal{L}_{0}^{\prime}   }$ and $\mathcal{L}_{0}$ commute and then by~(\ref{Bibi}) again.  The last inequality is $\|\cdot\|_{\mathcal{L}_{0}}\leq \mathbf{c}\|\cdot\|_{\frak{M}}$ where $\mathbf{c}$ exists by the equivalence of the norms $\|\cdot\|_{\mathcal{L}_{0}}$ and $\|\cdot\|_{\frak{M}}$ from Part (5) of Prop.~\ref{RelBnds}.  With the above, 
\begin{align*}
 \| \big(\partial_{i}e^{t\mathcal{L}_{\vec{k}}^{\prime}   }|_{\vec{k}=0} \big)\big(\partial_{j} [\rho]_{\vec{k}}|_{\vec{k}=0} \big) \|_{1}& \leq t\mathbf{c} C^{2}\big( C_{0}+ 2^{-1}c_{0}\frak{g}   )\|\big(\partial_{j} [\rho]_{\vec{k}}|_{\vec{k}=0} \big) \|_{\frak{M}}\\ &= t\mathbf{c} C^{2}\big( C_{0}+ 2^{-1}c_{0}\frak{g}   )(\| \big(\partial_{j} [\rho]_{\vec{k}}|_{\vec{k}=0} \big)\|_{1}+\| |\vec{p}|\big(\partial_{j} [\rho]_{\vec{k}}|_{\vec{k}=0} \big)\|_{1})\\ & \leq t\mathbf{c} C^{2}\big( C_{0}+ 2^{-1}c_{0}\frak{g}   )\big(2^{-1}\|\{X_{j},\rho   \}\|_{\mathbf{1}}+  2^{-2}\sum_{n=1}^{3}\|\{ P_{n},\{X_{j},\rho   \}\}\|_{\mathbf{1}}\big).  
 \end{align*}
The second inequality is by Prop.~\ref{FiberStuff}, and the right side is finite by the assumptions on $\rho$.

For the $\big( \partial_{i}\partial_{j}e^{t\mathcal{L}_{\vec{k} }^{\prime}   }|_{\vec{k}=0}\big) [\rho]_{0} $ term,
\begin{align}\label{Duhamel2}
 \partial_{i}\partial_{j}e^{t\mathcal{L}_{\vec{k} }^{\prime}   }|_{\vec{k}=0}  = &   \int_{0}^{t}dt_{1}e^{(t-t_{1})\mathcal{L}_{0}^{\prime} }(\partial_{i}\partial_{j}\mathcal{L}^{\prime}|_{\vec{k}=0}) e^{t_{1}\mathcal{L}_{0}^{\prime}   }\nonumber \\ &+ \sum_{(n_{1},n_{2})=\left\{ \substack{(i,j)\\ (j,i) } \right. }\int_{0}^{t}dt_{2}\int_{0}^{t_{2}}dt_{1} e^{(t-t_{2})\mathcal{L}_{0}^{\prime} }(\partial_{n_{1} }\mathcal{L}^{\prime}|_{\vec{k}=0}) e^{(t_{2}-t_{1})\mathcal{L}_{0}^{\prime} }(\partial_{n_{2}}\mathcal{L} ^{\prime}|_{\vec{k}=0})e^{t_{1}\mathcal{L}_{0}^{\prime}},
 \end{align}
where the sum over $(n_{1},n_{2})$ counts twice when $i=j$.  The first term on the right side is relatively bounded to $\frak{M}$ by the same argument as for~(\ref{Duhamel1}).  Also by the steps in~(\ref{Bambam}) up to the next to last inequality,  then a single term from the sum in~(\ref{Duhamel2}) can be bounded by 
\begin{align*}
\Big\| \int_{0}^{t}dt_{2}\int_{0}^{t_{2}}& dt_{1}  e^{(t-t_{2})\mathcal{L}_{0}' }(\partial_{i }\mathcal{L}'_{\vec{k}}|_{\vec{k}=0} ) e^{(t_{2}-t_{1})\mathcal{L}_{0}' }(\partial_{j }\mathcal{L}'_{\vec{k}}|_{\vec{k}=0}) e^{t_{1}\mathcal{L}_{0}'}y \Big\|_{1}
\\ &\leq C^{2}\big(C_{0}+ 2^{-1}c_{0}\frak{g}  ) 
\int_{0}^{t}dt_{2}\int_{0}^{t_{2}}dt_{1} \big\|  (\partial_{j }\mathcal{L}'_{\vec{k}}|_{\vec{k}=0}) e^{t_{1}\mathcal{L}_{0}^{\prime}}y \big\|_{\mathcal{L}_{0}}\\ &\leq C^{2}(C_{0}+ 2^{-1}c_{0}\frak{g}  ) 
\int_{0}^{t}dt_{2}\int_{0}^{t_{2}}dt_{1}\big(C_{0} \big\|  e^{t_{1}\mathcal{L}_{0}^{\prime}}y \big\|_{\mathcal{L}_{0}^{2}}+ 2^{-1}c_{0}\frak{g}\|e^{t_{1}\mathcal{L}_{0}^{\prime}} y\|_{1}  \big)\\ &\leq 2^{-1}\mathbf{c}C^{3}t^{2}(C_{0}+ 2^{-1}c_{0}\frak{g}  )^{2}\|y\|_{\frak{M}^{2}} .
\end{align*}
The second inequality follows by using the triangle inequality with $\partial_{j }\mathcal{L}'_{\vec{k}}|_{\vec{k}=0}=\partial_{j }\mathcal{L}_{\vec{k}}|_{\vec{k}=0}-2^{-1}\frak{g}\partial_{j}\mathbf{P}_{\vec{k}}|_{\vec{k}=0}$  and applying respectively Part (2) of Lem.~\ref{TwoDerivatives} and Part (3) of Prop.~\ref{Perturbation}, as before.

Since I have used the formal expressions~(\ref{Duhamel1}) and~(\ref{Duhamel2}) for the derivatives of $e^{t\mathcal{L}^{\prime}_{\vec{k}}}$ at $\vec{k}=0$, I should control the resulting error in using those terms in a second-order Taylor expansion.  Using a rearranged third-order Duhamel equation,  
\begin{align}\label{Error1}
\mathbf{E}_{\vec{k}}^{(3)}:=&e^{t\mathcal{L}_{\vec{k}}^{\prime}   }-e^{t\mathcal{L}_{0}   }-   \vec{k}\cdot(\nabla e^{t\mathcal{L}_{\vec{r} }^{\prime}   }|_{\vec{r}=0})-2^{-1}\vec{k}^{\otimes^{2}}\cdot(\nabla^{\otimes^{2}} e^{t\mathcal{L}_{\vec{r} }^{\prime}   }|_{\vec{r}=0}) \nonumber       \\ =&\int_{0}^{t}dt_{1}e^{(t-t_{1})\mathcal{L}_{\vec{k}}^{\prime} } \hat{R}_{\vec{k}}^{(3)}e^{t_{1}\mathcal{L}_{0} ^{\prime}  }\nonumber  \\ &+ \int_{0}^{t}dt_{2}\int_{0}^{t_{2}}dt_{1}\, e^{(t-t_{2})\mathcal{L}_{\vec{k} }^{\prime} }\hat{R}_{\vec{k}}^{(2)}e^{(t_{2}-t_{1})\mathcal{L}_{0}^{\prime} }\hat{R}_{\vec{k}}^{(1)}e^{t_{1}\mathcal{L}_{0}^{\prime}} \nonumber \\ & +\int_{0}^{t}dt_{2}\int_{0}^{t_{2}}dt_{1} e^{(t-t_{2})\mathcal{L}_{\vec{k}}^{\prime} }\hat{R}_{\vec{k}}^{(1)}e^{(t_{2}-t_{1})\mathcal{L}_{0}^{\prime}}\hat{R}_{\vec{k}}^{(2)}e^{t_{1}\mathcal{L}_{0}^{\prime}} \nonumber \\ &-\int_{0}^{t}dt_{2}\int_{0}^{t_{2}}dt_{1}\, e^{(t-t_{2})\mathcal{L}_{\vec{k}}^{\prime} }\hat{R}_{\vec{k}}^{(2)}e^{(t_{2}-t_{1})\mathcal{L}_{0}^{\prime}}\hat{R}_{\vec{k}}^{(2)}e^{t_{1}\mathcal{L}_{0}^{\prime}} \nonumber \\   &  +\int_{0}^{t}dt_{3}\int_{0}^{t_{3}}dt_{2}\int_{0}^{t_{2}}dt_{1} e^{(t-t_{3})\mathcal{L}_{\vec{k}}^{\prime} }\hat{R}_{\vec{k}}^{(1)}e^{(t_{3}-t_{2})\mathcal{L}_{0}^{\prime}}\hat{R}_{\vec{k}}^{(1)}e^{(t_{2}-t_{1})\mathcal{L}_{0}}\hat{R}_{\vec{k}}^{(1)}e^{t_{1}\mathcal{L}_{0}^{\prime}},
\end{align}
where $\hat{R}_{\vec{k}}^{(n)}$ is the error the $n$th order Taylor expansion of $\mathcal{L}_{\vec{k}}^{\prime}$ around $\vec{k}=0$:   
\beq \label{Error2}
  \hspace{2.5cm}\hat{R}_{\vec{k}}^{(n)} =\mathcal{L}_{\vec{k}}^{\prime}-\mathcal{L}_{0}^{\prime}-\sum_{m=1}^{n-1} \frac{1}{m!}\vec{k}^{\otimes^{m}}\cdot(\nabla^{\otimes^{m}} \mathcal{L}_{\vec{r}}^{\prime}|_{\vec{r}=0}), \quad \quad n\geq 0.        
  \eeq
Since $\mathcal{L}_{\vec{k}}^{\prime}=\mathcal{L}_{\vec{k}}+2^{-1}\frak{g}(\textup{I}-\mathbf{P}_{\vec{k}})$, the error $\hat{R}_{\vec{k}}^{(n)}$ is equal to 
 $ \hat{R}_{\vec{k},1}^{(n)}-2^{-1}\frak{g}\hat{R}_{\vec{k},2}^{(n)}$ where $\hat{R}_{\vec{k},1}^{(n)}$ and $\hat{R}_{\vec{k},2}^{(n)}$ are the analogously defined errors for $\mathcal{L}_{\vec{k}}$ and $\mathbf{P}_{\vec{k}}$, respectively.   
  
Using Lem.~\ref{TwoDerivatives} and Prop.~\ref{Perturbation} along with the same techniques as  above for the first two derivatives, then it can be shown  $\| \mathbf{E}_{\vec{k}}^{(3)} y \|_{1}$ is bounded by a constant multiple of $|\vec{k}|^{3}\|y\|_{\frak{M}^{3}}$ for all $y\in \mathcal{T}_{3}$.  By Prop.~\ref{FiberStuff},  
\begin{align*}
\|[\rho]_{\vec{k}}\|_{\frak{M}^{3}}& \leq \|[\rho]_{\vec{k}}\|_{1}+\sum_{i_{1},i_{2},i_{3}\in \{1,2,3\}  }\|\,p_{i_{1}}p_{i_{2}}p_{i_{3}}[\rho]_{\vec{k}}   \|_{1}\\ & \leq   \|\rho\|_{\mathbf{1}}+\sum_{i_{j}\in \{1,2,3\}  }2^{-3}\| \{ P_{i_{1}},\{P_{i_{2}},\{P_{i_{3}},\rho \}\}\}  \|_{\mathbf{1}}   .
\end{align*}   
By the triangle inequality and the self-adjointness of $\rho$, the term on the right is bounded by a sum of terms $\|P_{i_{1}}P_{i_{2}}\rho P_{i_{3}}\|_{\mathbf{1}}$ and $\|P_{i_{1}}P_{i_{2}}P_{i_{3}}\rho \|_{\mathbf{1}}$.  However,
$$\|P_{i_{1}}P_{i_{2}}\rho P_{i_{3}}\|_{\mathbf{1}}\leq \|\,|\vec{P}|^{2}\rho P_{i_{3}}\|_{\mathbf{1}}= \| P_{i_{3}}\rho |\vec{P}|^{2}\|_{\mathbf{1}}\leq \|\, |\vec{P}|\rho |\vec{P}|^{2}\|_{\mathbf{1}}<\infty.$$
 The first inequality holds since    $ ( P_{i_{3}} \rho P_{i_{2}}|P_{i_{1}}|^{2}P_{i_{2}}\rho P_{i_{3}})^{\frac{1}{2}}\leq ( P_{i_{3}} \rho |\vec{P}|^{4}\rho P_{i_{3}})^{\frac{1}{2}}$ by the operator monotonicity of the square root function.  The equality uses that the adjoint operation preserves trace norm, and second inequality is by the same reason as for the first.  Finally the right side is finite by my assumptions on $\rho$.

\end{proof}

\section*{Acknowledgments}
I thank Wojciech De Roeck for many helpful insights and  observations regarding the topic of this article.  I also thank Klaus Hornberger for kindly sending me an offprint of his article (joint with Bassano Vacchini) \emph{Quantum linear Boltzmann equation}. This work is supported by the Belgian Interuniversity Attraction Pole P6/02 and the European Research Council grant No. 227772.  Part of this work was completed at the Mittag-Leffler institute.

\begin{appendix}

\section{A Fourier transform relating $\Psi$ and the component maps $\mathbf{T}_{\vec{q}}$ }\label{FourFormal}
In this section, I point out a canonical relation between the completely positive map $\Psi$ and the maps $\mathbf{T}_{\vec{q}}$ in the form~(\ref{Jeez}) through a Fourier transform.  The statement and the ``proof" of Prop.~\ref{Rough} are heuristic.  I think the basic mechanics for the relations may interest the reader, but it is not necessary to clarify the analytic details for this article.

Note that  the  covariance properties~(\ref{BiVar}) below is equivalent to saying  the maps $\mathbf{T}_{\vec{q}} :\mathcal{B}_{1}\big(L^{2}(\R^{d})\big)   $ act as  multiplications operator in the momentum representation as in~(\ref{Ems}).

\begin{proposition}\label{Rough} \text{ }\\
\begin{enumerate}
\item Let $\Psi: \mathcal{B}_{1}\big(L^{2}(\R^{d})\big) $ be a map of the form    
\beq \label{ToMap}
\Psi(\rho)=\int_{\R^{d}}d\vec{q}\, e^{\ii \frac{\vec{q}}{\hbar}\vec{X}}\mathbf{T}_{\vec{q}}(\rho )e^{-\ii \frac{\vec{q}}{\hbar}\vec{X}}, 
\eeq
where the maps $\mathbf{T}_{\vec{p}}$ are completely positive and for all $\vec{x},\vec{q}\in \R^{d}$ satisfy 
\beq \hspace{4cm} \label{BiVar}\mathbf{T}_{\vec{q}}(\tau_{\vec{x}}\rho)=\tau_{\vec{x}}\mathbf{T}_{\vec{q}}(\rho)\quad \text{and} \quad \mathbf{T}_{\vec{q}}(\rho\tau_{\vec{x}})=\mathbf{T}_{\vec{q}}(\rho)\tau_{\vec{x}}, 
\eeq
in which  $\tau_{\vec{x}}=e^{\ii\frac{\vec{x}}{\hbar}\vec{P}}$.   Then, $\Psi$ is translation covariant and completely positive.  Moreover, the map $\Psi$ is related to the maps $\mathbf{T}_{\vec{q}}$ through the Fourier transform
\beq \label{ToComp}
\mathbf{T}_{\vec{q}}(\rho)=e^{-\ii\frac{\vec{q}}{\hbar}\vec{X}}\Big(\frac{1}{(2\pi \hbar)^{d}}\int_{\R^{d}}d\vec{x}\,e^{\ii\frac{\vec{q}\vec{x}}{\hbar} }  \tau_{\vec{x}}^{*}
\Psi(\tau_{x}\rho ) \Big) e^{\ii\frac{\vec{q}}{\hbar}\vec{X}}.  
\eeq

\item  Conversely, if $\Psi$ is completely positive and maps $\mathbf{T}_{\vec{q}}$ are defined by~(\ref{ToComp}), then the maps $\mathbf{T}_{\vec{q}}$ are completely positive and satisfy~(\ref{BiVar}).  Also, $\Psi$ can be written as~(\ref{ToMap}).

\end{enumerate}

\end{proposition}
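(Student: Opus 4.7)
The two halves of Proposition~\ref{Rough} are dual statements about a Fourier pairing between $\Psi$ and the family $\{\mathbf{T}_{\vec{q}}\}$, so the plan is to carry out Part~(1) in detail and then obtain Part~(2) by reading the same identities in the opposite direction. The single algebraic tool used throughout is the Weyl commutation $e^{\ii\vec{q}\cdot\vec{X}/\hbar}\tau_{\vec{x}}=e^{-\ii\vec{q}\cdot\vec{x}/\hbar}\,\tau_{\vec{x}}\,e^{\ii\vec{q}\cdot\vec{X}/\hbar}$, together with its adjoint/conjugate variants, all of which follow from evaluating both sides on a test function in $L^{2}(\R^{d})$.

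For Part~(1), complete positivity of $\Psi$ is immediate, since each integrand $\rho\mapsto e^{\ii\vec{q}\vec{X}/\hbar}\mathbf{T}_{\vec{q}}(\rho)e^{-\ii\vec{q}\vec{X}/\hbar}$ is a composition of a completely positive map with unitary conjugation, and CP is preserved under integration in $\vec{q}$ against a positive measure. Translation covariance $\Psi(\tau_{\vec{a}}\rho\tau_{\vec{a}}^{*})=\tau_{\vec{a}}\Psi(\rho)\tau_{\vec{a}}^{*}$ follows by combining the two bivariance hypotheses~(\ref{BiVar}) with the Weyl relation: the phases $e^{\pm\ii\vec{q}\vec{a}/\hbar}$ produced by commuting $\tau_{\vec{a}}$ past $e^{\pm\ii\vec{q}\vec{X}/\hbar}$ cancel exactly. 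To derive the inversion formula~(\ref{ToComp}), I would substitute~(\ref{ToMap}) for $\Psi(\tau_{\vec{x}}\rho)$ on the right side of~(\ref{ToComp}), use the left bivariance of $\mathbf{T}_{\vec{q}'}$ to pull $\tau_{\vec{x}}$ outside the inner map, apply the Weyl relation to move $\tau_{\vec{x}}$ past $e^{\ii\vec{q}'\vec{X}/\hbar}$, and perform the $\vec{x}$-integral to produce $(2\pi\hbar)^{d}\delta(\vec{q}-\vec{q}')$, which then collapses the remaining $\vec{q}'$-integral to the left side.

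For Part~(2), I would first derive the two bivariance identities from translation covariance of $\Psi$. For the left identity $\mathbf{T}_{\vec{q}}(\tau_{\vec{y}}\rho)=\tau_{\vec{y}}\mathbf{T}_{\vec{q}}(\rho)$, I would substitute $\tau_{\vec{y}}\rho$ into the defining integral, use $\tau_{\vec{x}}\tau_{\vec{y}}=\tau_{\vec{x}+\vec{y}}$ to shift the integration variable to $\vec{x}'=\vec{x}+\vec{y}$, rewrite $\tau_{\vec{x}'-\vec{y}}^{*}=\tau_{\vec{y}}\tau_{\vec{x}'}^{*}$, and commute $\tau_{\vec{y}}$ out through $e^{-\ii\vec{q}\vec{X}/\hbar}$ by the Weyl relation. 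For the right identity, I would first use translation covariance of $\Psi$ (with shift $-\vec{y}$) to rewrite $\Psi(\tau_{\vec{x}}\rho\tau_{\vec{y}})=\tau_{-\vec{y}}\Psi(\tau_{\vec{x}+\vec{y}}\rho)\tau_{\vec{y}}$, and then proceed as in the left case. The representation~(\ref{ToMap}) of $\Psi$ is the inverse identity: substituting the defining formula for $\mathbf{T}_{\vec{q}}$ into $\int d\vec{q}\,e^{\ii\vec{q}\vec{X}/\hbar}\mathbf{T}_{\vec{q}}(\rho)e^{-\ii\vec{q}\vec{X}/\hbar}$ and swapping the $\vec{q}$ and $\vec{x}$ integrations produces $(2\pi\hbar)^{d}\delta(\vec{x})$, which after the $\vec{x}$-integral returns $\Psi(\rho)$.

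The main obstacle is the complete positivity of the $\mathbf{T}_{\vec{q}}$ defined by the Fourier formula, because the defining integral carries the oscillatory factor $e^{\ii\vec{q}\vec{x}/\hbar}$ and is not manifestly a positive combination of CP maps. My plan is to first observe that the bivariance just derived forces $\mathbf{T}_{\vec{q}}$ to commute with left and right multiplication by $\tau_{\vec{x}}$ for every $\vec{x}$; passing to the momentum representation, this rigidity forces $\mathbf{T}_{\vec{q}}$ to act as multiplication by a kernel $\textup{T}_{\vec{q}}(\vec{p}_{1},\vec{p}_{2})$ as in~(\ref{Ems}), and complete positivity of such a Schur-type multiplication is equivalent to $\textup{T}_{\vec{q}}$ being a positive-semidefinite kernel on $\R^{d}$ for each $\vec{q}$. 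To extract this positivity I would invoke a Kraus decomposition $\Psi(\rho)=\sum_{\alpha}L_{\alpha}\rho L_{\alpha}^{*}$: translation covariance of $\Psi$ together with the Holevo-type analysis for translation-covariant CP maps allows the Kraus operators to be chosen in Weyl form $L_{\alpha}=e^{\ii\vec{q}_{\alpha}\vec{X}/\hbar}g_{\alpha}(\vec{P})$, and substituting this form into the Fourier formula, then applying the Weyl relation to gather all the $\tau_{\vec{x}}$ factors, collapses the $\vec{x}$-integral onto those $\alpha$ with $\vec{q}_{\alpha}=-\vec{q}$ and exhibits $\textup{T}_{\vec{q}}(\vec{p}_{1},\vec{p}_{2})$ as a positive combination of rank-one kernels $g_{\alpha}(\vec{p}_{1})\overline{g_{\alpha}(\vec{p}_{2})}$, which is the desired positive-semidefiniteness.
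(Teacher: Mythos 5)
Your Part~(1), your derivation of the bivariance identities, and your recovery of~(\ref{ToMap}) all match the paper's argument: Weyl intertwining, a change of variables in $\vec{x}$, and the formal identity $\delta(\vec{p}-\vec{q})=(2\pi\hbar)^{-d}\int d\vec{x}\,e^{\ii\vec{x}(\vec{q}-\vec{p})/\hbar}$. Where you genuinely diverge is the complete positivity of the $\mathbf{T}_{\vec{q}}$ defined by~(\ref{ToComp}). The paper does not pass through the momentum-kernel characterization or a Kraus decomposition at all; it fixes $\phi\in L^{2}(\R^{d})$ and a positive $\rho$, observes that translation covariance makes $F_{\phi,\rho}(\vec{x}-\vec{y}):=\langle\phi|\tau_{\vec{x}}\Psi(\tau_{\vec{x}}^{*}\rho\tau_{\vec{y}})\tau_{\vec{y}}^{*}\phi\rangle$ a function of the difference variable which, by complete positivity of $\Psi$, is a positive-definite kernel, and then invokes Bochner's theorem to conclude that its Fourier transform --- which is exactly $\langle e^{-\ii\vec{q}\vec{X}/\hbar}\phi|\mathbf{T}_{\vec{q}}(\rho)e^{-\ii\vec{q}\vec{X}/\hbar}\phi\rangle$ --- is nonnegative; complete positivity then follows by tensoring with an ancilla. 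That argument is self-contained and uses only the hypotheses of the proposition. Your route instead appeals to a Holevo-type structure theorem to put the Kraus operators of $\Psi$ in Weyl form $e^{\ii\vec{q}_{\alpha}\vec{X}/\hbar}g_{\alpha}(\vec{P})$; this is close to assuming the conclusion, since the decomposition~(\ref{ToMap}) with CP fibers \emph{is} essentially that structure theorem for this class of maps, and the "collapse onto $\alpha$ with $\vec{q}_{\alpha}=-\vec{q}$" tacitly assumes a discrete (rather than direct-integral) family of Weyl labels. At the heuristic level of this appendix your argument is not wrong, but the Bochner argument is both more elementary and non-circular, and I would encourage you to adopt it; your observation that bivariance forces $\mathbf{T}_{\vec{q}}$ to be a Schur-type multiplier in momentum space is correct and is in fact stated in the paper just before the proposition, but it is not needed for the positivity proof.
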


\begin{proof}\text{  }\\
\noindent Part (1):\\

  The complete positivity of $\Psi$ follows trivially from the form~(\ref{ToMap}), and the translation covariance of $\Psi$ follows from the the left and right covariance properties~(\ref{BiVar}) and the Weyl intertwining  relation 
\beq  \label{Weyl}e^{\ii\frac{\vec{x}}{\hbar}\vec{P}}e^{i\frac{\vec{q}}{\hbar}\vec{X}}=e^{\ii\frac{\vec{x}\vec{q}}{\hbar}} e^{\ii\frac{\vec{q}}{\hbar}\vec{X}}e^{i\frac{\vec{x}}{\hbar}\vec{P}}.
\eeq
Also due to the intertwining relations~(\ref{BiVar}) and~(\ref{Weyl}),
\begin{align}\label{FourTrans}
e^{-\ii\frac{\vec{q}}{\hbar}\vec{X}} \Big(\frac{1}{(2\pi\hbar)^{d}}\int_{\R^{d}}& d\vec{x}\, e^{\ii\frac{\vec{q}\vec{x}}{\hbar} }\tau_{\vec{x}}^{*}\Psi(\tau_{\vec{x}}\rho )\Big)e^{i\frac{\vec{q}}{\hbar}\vec{X} }\nonumber  \\ &= e^{-\ii\frac{\vec{q}}{\hbar}\vec{X}} \Big(\frac{1}{(2\pi\hbar)^{d}}\int_{\R^{d}}d\vec{x}\, e^{\ii\frac{\vec{q}\vec{x}}{\hbar} }\tau_{\vec{x}}^{*}\int_{\R^{d}}d\vec{p}\, e^{\ii \frac{\vec{p}}{\hbar}\vec{X}}\mathbf{T}_{\vec{p}}(\tau_{\vec{x}}\rho )e^{-\ii \frac{\vec{p}}{\hbar}\vec{X}}\Big)e^{i\frac{\vec{q}}{\hbar}\vec{X} }\nonumber \\ & \nonumber =\int_{\R^{d}} d\vec{p} \Big(\frac{1}{(2\pi \hbar)^{d}} \int_{\R^{d}}d\vec{x}\,e^{\ii\frac{\vec{x}}{\hbar}(\vec{p}-\vec{q})}\Big)\Big(e^{\ii\frac{\vec{p}-\vec{q}}{\hbar}\vec{X}}\mathbf{T}_{\vec{p}}(\rho ) e^{-\ii\frac{\vec{p}-\vec{q}}{\hbar}\vec{X}}\Big)\\ &=\mathbf{T}_{\vec{q}}(\rho).
\end{align}
The above comes through switching the order of integration and the formal identity $\delta(\vec{p}-\vec{q})=\frac{1}{(2\pi \hbar)^{d}}\int_{\R^{d}}d\vec{x}\,e^{\ii\frac{\vec{x}}{\hbar}(\vec{q}-\vec{p})}$. 

\vspace{.5cm}

\noindent Part (2):\\

First, I show  $\mathbf{T}_{\vec{q}}$ satisfies~(\ref{BiVar}).  By the Weyl intertwining relation~(\ref{Weyl}),  
\begin{align}
 \tau_{\vec{y}}\mathbf{T}_{\vec{q}}(\rho)& =e^{-\ii\frac{\vec{q}}{\hbar}\vec{X}}\Big(\frac{1}{(2\pi \hbar)^{d}}\int_{\R^{d}}d\vec{x}\,e^{\ii\frac{\vec{q}(\vec{x}-\vec{y}) }{\hbar} }  \tau_{\vec{x}-\vec{y} }^{*}
\Psi(\tau_{x-y}\tau_{y}\rho ) \Big) e^{\ii\frac{\vec{q}}{\hbar}\vec{X}}
\nonumber \\
&=e^{-\ii\frac{\vec{q}}{\hbar}\vec{X}}\Big(\frac{1}{(2\pi \hbar)^{d}}\int_{\R^{d}}d\vec{x}\,e^{\ii\frac{\vec{q}\vec{x} }{\hbar} }  \tau_{\vec{x} }^{*}
\Psi(\tau_{x}\tau_{y}\rho ) \Big) e^{\ii\frac{\vec{q}}{\hbar}\vec{X}}\nonumber \\ &=\mathbf{T}_{\vec{q}}(\tau_{\vec{y}}\rho),\end{align}
where the second equality is a change of integration $\vec{x}-\vec{y}\rightarrow \vec{x}$, and the third is the definition of $\mathbf{T}_{\vec{q}}$.  The  same argument can be applied to show   
$\mathbf{T}_{\vec{q}}(\rho\tau_{\vec{y}})=\mathbf{T}_{\vec{q}}(\rho )\tau_{\vec{y}}$ by using  the expression $\tau_{\vec{x} }^{*}
\Psi(\tau_{x}\rho )$ in the integral~(\ref{ToComp}) can be replaced by 
$\Psi(\rho \tau_{x}^{*})\tau_{\vec{x} } $ due to  the translation covariance of $\Psi$.

Now, I show  $\mathbf{T}_{\vec{q}}$ maps positive operators to positive operators if $\Psi$ is completely positive. The argument is based on Bochner's theorem.  First notice  the translation covariance of $\Psi$ implies $\tau_{\vec{x}}\Psi(\tau_{\vec{x}}^{*}\rho\tau_{y})\tau_{y}^{*}\in \mathcal{B}_{1}\big(L^{2}(\R^{d})\big) $ is a function of $\vec{x}-\vec{y}$.  Let $\phi\in L^{2}(\R^{d})$, $\rho\in \mathcal{B}_{1}\big(L^{2}(\R^{d})\big)$ be positive,  and $F_{\phi,\rho}:\R^{d}\rightarrow \C$ be defined as    
$$F_{\phi,\rho}(\vec{x}-\vec{y}) := \langle \phi \big|  \tau_{\vec{x}}\Psi(\tau_{\vec{x}}^{*}\rho\tau_{\vec{y}})\tau_{\vec{y}}^{*} \phi \rangle.  $$           
By the complete positivity of $\Psi$, it follows  $F_{\phi,\rho}(\vec{x}-\vec{y})$ is the integral kernel for a positive operator on $L^{2}(\R^{d})$ when $\rho$ is a positive operator.  By Bochner's theorem the Fourier transform  of $F_{\phi,\rho}$ is positive-valued, so   
  \begin{align*}0\leq \int_{\R^{d}}d\vec{x}\, e^{\ii\frac{\vec{q}\vec{x}}{\hbar} } F_{\phi,\rho}(\vec{x}) &=  \Big\langle \phi \Big| \Big(\int_{\R^{d}}d\vec{x}\, e^{\ii\frac{\vec{q}\vec{x}}{\hbar} }   \tau_{\vec{x} }^{*}\Psi(\tau_{\vec{x}}\rho)\Big) \phi \Big\rangle\\ &= \big\langle e^{-\ii\frac{\vec{q}}{\hbar}\vec{X}   }\phi \big| \mathbf{T}_{\vec{q}}(\rho) e^{-\ii\frac{\vec{q}}{\hbar}\vec{X}   } \phi \big\rangle. 
  \end{align*}
Since $\phi\in L^{2}(\R^{d})$ is arbitrary, it follows that $\mathbf{T}_{\vec{q}}(\rho)$ is a positive operator.  
 The argument above extends trivially to show that  $\mathbf{T}_{\vec{q}}$ is completely positive by replacing $\Psi$, $\mathbf{T}_{\vec{q}}$ with    $\Psi\otimes I_{\textup{ext}} $, $\mathbf{T}_{\vec{q}}\otimes I_{\textup{ext}}$ operating on the extended Banach space  $\mathcal{B}_{1}\big(L^{2}(\R^{d})\otimes \mathcal{H}_{\textup{ext}} \big)$ for some Hilbert space $\mathcal{H}_{\textup{ext}}$.

A computation similar to~(\ref{FourTrans}) shows that plugging  $\mathbf{T}_{\vec{q}}$ in to the right side of~(\ref{ToMap}) yields $\Psi$.

\end{proof}

\section{Properties of the fiber maps}

The following elementary proposition gives sufficient conditions for the smoothness of the fiber elements $ [\rho]_{\vec{k}}$, $\rho\in \mathcal{B}_{1}\big(L^{2}(\R^{d})\big) $ as a function of $\vec{k}\in \R^{d}$ in the $L^{1}(\R^{d})$-norm in terms of weighted tracial semi-norms of $\rho$.   The proposition also gives bounds for the moments of $[\rho]_{\vec{k}}$.  The partial derivatives $\partial_{j}$, $j\in [1,d]$  in the proposition below refer to the fiber variable $\vec{k}$.

\begin{proposition}\label{FiberStuff}
Let $\rho\in \mathcal{B}_{1}\big(L^{2}(\R^{d})\big)$, then

\begin{enumerate}
\item 
$\| \partial_{j_{1}}\cdots \partial_{j_{n}} [\rho]_{\vec{k}}  \|_{1}\leq \frac{1}{2^{n}}\| \{X_{j_{1}},\cdots,\{X_{j_{n}} ,  \rho\}\cdots \}   \|_{\mathbf{1}},   $  

\item  $\|p_{j_{1}}\cdots p_{j_{n}}[\rho]_{\vec{k}} \|_{1}\leq \frac{1}{2^{n}}\| \{P_{j_{1}},\cdots,\{P_{j_{n}} ,  \rho\}\cdots \}   \|_{\mathbf{1}}  $,

\item
 $\| p_{i}\partial_{j}[\rho]_{\vec{k}}\|_{1}\leq \frac{1}{4}\|  \{P_{i},  \{X_{j}, \rho\}\} \|_{\mathbf{1}} $.

\end{enumerate}

\end{proposition}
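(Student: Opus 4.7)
The unifying idea is to identify the multiplication and differentiation operations on the fiber $[\rho]_{\vec k}$ with applications of anticommutators on $\rho$, and then invoke the contractivity of the map $[\,\cdot\,]_{\vec k}:\mathcal{B}_1\big(L^2(\R^d)\big)\to L^1(\R^d)$ coming from~(\ref{BndFib})--(\ref{Reisz}). The key algebraic ingredient is the symmetric rewriting
\begin{equation*}
e^{\ii\vec k\vec X}\,g(\vec P+2^{-1}\hbar\vec k) \;=\; e^{\ii\vec k\vec X/2}\,g(\vec P)\,e^{\ii\vec k\vec X/2},
\end{equation*}
which follows from the Weyl intertwining $g(\vec P)e^{\ii\vec k\vec X/2}=e^{\ii\vec k\vec X/2}g(\vec P+2^{-1}\hbar\vec k)$. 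Plugging this into the defining identity~(\ref{Reisz}) and using cyclicity of the trace gives
\begin{equation*}
\int_{\R^d}\! d\vec p\,[\rho]_{\vec k}(\vec p)\,g(\vec p)\;=\;\Tr\!\big[g(\vec P)\,e^{\ii\vec k\vec X/2}\rho\, e^{\ii\vec k\vec X/2}\big], \qquad g\in L^\infty(\R^d).
\end{equation*}

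For Part~(1), I differentiate this identity in $k_j$. Since $X_j$ commutes with the scalar $e^{\ii\vec k\vec X/2}$, each $\partial_{k_j}$ acts on the sandwich $e^{\ii\vec k\vec X/2}\rho\, e^{\ii\vec k\vec X/2}$ by producing the factor $2^{-1}\ii\,e^{\ii\vec k\vec X/2}\{X_j,\rho\}e^{\ii\vec k\vec X/2}$. Iterating $n$ times and reading the result back through the Riesz identity yields the pointwise equality
\begin{equation*}
\partial_{j_1}\!\cdots\partial_{j_n}[\rho]_{\vec k}\;=\;\bigl[(2^{-1}\ii)^n\{X_{j_1},\ldots,\{X_{j_n},\rho\}\cdots\}\bigr]_{\vec k},
\end{equation*}
whence Part~(1) follows from the contraction bound $\|[\cdot]_{\vec k}\|_{1}\le\|\cdot\|_{\mathbf 1}$. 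Justifying the interchange of $\partial_{k_j}$ with the trace is standard once one assumes the nested anticommutator is trace class (which is the only case in which the right side is finite); the quadratic-form interpretation stated before Thm.~\ref{MainThm} handles the fact that $X_j\rho$ and $\rho X_j$ need not be trace class individually.

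For Part~(2), I test $p_j[\rho]_{\vec k}$ against compactly supported $g\in L^\infty$ by setting $h(\vec p)=p_j g(\vec p)$. Using the same Riesz identity with $h$ in place of $g$, then applying the operator identity $(P_j+2^{-1}\hbar k_j)=\tfrac{1}{2}\big((P_j+\hbar k_j)+P_j\big)$ together with the Weyl rule $P_je^{\ii\vec k\vec X}=e^{\ii\vec k\vec X}(P_j+\hbar k_j)$ and cyclicity, I rewrite
\begin{equation*}
\Tr\!\big[e^{\ii\vec k\vec X}(P_j+2^{-1}\hbar k_j)g(\vec P+2^{-1}\hbar\vec k)\rho\big]\;=\;\tfrac{1}{2}\Tr\!\big[e^{\ii\vec k\vec X}g(\vec P+2^{-1}\hbar\vec k)\{P_j,\rho\}\big].
\end{equation*}
Uniqueness in the Riesz representation then identifies $p_j[\rho]_{\vec k}=[2^{-1}\{P_j,\rho\}]_{\vec k}$ a.e., and an $n$-fold iteration gives the analogous relation with the nested $P$-anticommutator; contractivity again delivers the bound. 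Part~(3) then follows immediately by composing the derivations of Parts~(1) and~(2): first apply the $\partial_{j}$ identity to replace $\partial_j[\rho]_{\vec k}$ by $[2^{-1}\ii\{X_j,\rho\}]_{\vec k}$, then apply the $p_i$ identity to the trace-class operator $2^{-1}\ii\{X_j,\rho\}$ to pick up a further factor of $2^{-1}$ and the outer anticommutator with $P_i$.

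The only real subtlety—and hence the main obstacle—is purely bookkeeping: one must ensure that at every step the operator to which $[\,\cdot\,]_{\vec k}$ is applied is genuinely trace class, so that the Riesz representation is meaningful and the contraction estimate applies. This is not a serious issue because, whenever the right side of a given inequality is finite, the intermediate nested anticommutators are all trace class by the assumed quadratic-form interpretation, and when the right side is infinite the claimed inequality is vacuous.
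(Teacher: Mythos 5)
Your proposal is correct and follows essentially the same route as the paper: the symmetric Weyl rewriting $e^{\ii\vec k\vec X}g(\vec P+2^{-1}\hbar\vec k)=e^{\ii\vec k\vec X/2}g(\vec P)e^{\ii\vec k\vec X/2}$, differentiation (resp.\ multiplication by $p_j$) under the trace to identify the result with nested anticommutators $2^{-1}\{X_j,\cdot\}$ and $2^{-1}\{P_j,\cdot\}$, and then duality against $\|g\|_\infty\le 1$, i.e.\ contractivity of $[\,\cdot\,]_{\vec k}$. Your treatment of Parts (2) and (3) is merely a more explicit spelling-out of the paper's remark that $e^{\ii\vec k\vec X}\rho e^{\ii\vec k\vec X}=e^{\ii\vec k\{\vec X,\cdot\}}(\rho)$ and that the superoperators $\{X_j,\cdot\}$ and $\{P_i,\cdot\}$ commute.
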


\begin{proof}
By the definition of the fiber maps $[\cdot]_{\vec{k}}:\mathcal{B}_{1}\big(L^{2}(\R^{d})\big)\rightarrow L^{1}(\R^{d})$,  they satisfy
\begin{align*}
\int_{\R^{d}}d\vec{p} \,g(\vec{p})\,[\rho]_{\vec{k}}(\vec{p})&=\Tr\big[  e^{\ii \vec{k}\vec{X}}\,g(\vec{P}+2^{-1}\hbar \vec{k} ) \rho \big]\\ &=\Tr\big[  g(\vec{P})\,e^{\ii\frac{\vec{k}}{2}\vec{X}} \rho \,e^{\ii\frac{\vec{k}}{2}\vec{X}}\big]   
\end{align*}
for all $  g\in L^{\infty}(\R^{d})$. Differentiating both sides by $\partial_{j}$ and taking the absolute value,
\begin{align*}
\big|\int_{\R^{d}}d\vec{p}\,g(\vec{p})\partial_{j} [\rho]_{\vec{k}}(\vec{p})\big|& \leq 2^{-1}\big|\Tr\big[  g(\vec{P})\,e^{\ii\frac{\vec{k}}{2}\vec{X}}\{X_{j},\rho\} \,e^{\ii\frac{\vec{k}}{2}\vec{X}}\big] \big|\\ &\leq 2^{-1}\|g\|_{\infty}\|\{X_{j},\rho\}\|_{\mathbf{1}}.     
\end{align*}
Supremizing over all $\|g\|_{\infty}=1$,  the left side is equal to $\|\partial_{j}[\rho]_{\vec{k}}\|_{1}$.  Higher derivatives work by the same argument.  In order to bound the moments in Parts (2) and (3), recall  $e^{i\vec{k}\vec{X}}\rho e^{i\vec{k}\vec{X}}=e^{i\vec{k}\{\vec{X},\cdot\} }(\rho)$ and  $\{X_{j},\cdot \}$ and $\{P_{i},\cdot\}$ commute for all $1\leq i,j\leq d$.

\end{proof}

\section{Some relative bounds}

\begin{lemma}\label{Leftover}
The operators $H_{f}$ and $L_{\vec{q},\vec{v}}$ are relatively bounded to $|\vec{P}|$.  
\end{lemma}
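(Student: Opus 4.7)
The common analytic input for both assertions is the uniform linear bound
\begin{equation*}
\bigl|\mathbf{f}(\mathbf{p},\theta)\bigr|\le c\Bigl(\tfrac{\hbar}{a}+\mathbf{p}\Bigr),\qquad \mathbf{p}\ge 0,\ \theta\in[0,\pi],
\end{equation*}
with $c>0$ independent of $\mathbf{p}$ and $\theta$. The plan is to establish this first, then use that $H_f(\vec{P})$ and $L_{\vec q,\vec v}(\vec P)$ are both multiplication operators in the momentum representation, so relative boundedness to $|\vec P|$ reduces to a pointwise linear estimate on their multiplier functions.

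To prove the uniform bound I will start from the partial-wave expansion
$\mathbf{f}(\mathbf{p},\theta)=\frac{\hbar}{2\ii\mathbf{p}}\sum_{\ell\ge 0}(2\ell+1)(S_\ell(\kappa)-1)\mathfrak{L}_\ell(\cos\theta)$ with $\kappa=a\mathbf{p}/\hbar$, use $|\mathfrak{L}_\ell(\cos\theta)|\le 1$, and split the sum at $\ell=\lfloor 2\kappa\rfloor$. For $\ell\le\lfloor 2\kappa\rfloor$ the trivial bound $|S_\ell(\kappa)-1|\le 2$ yields a contribution of order $\kappa^2$, hence a contribution of order $\frac{\hbar}{\mathbf{p}}\kappa^{2}\lesssim a^{2}\mathbf{p}/\hbar$ to $|\mathbf{f}|$. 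For $\ell>\lfloor 2\kappa\rfloor$ I will recycle the spherical-Bessel asymptotics~(\ref{Asy}) and~(\ref{Asy2}) already used in the proof of Part~(2) of Prop.~\ref{RelBnds} to get $|S_\ell(\kappa)-1|\le 2e^{-(\ell+1/2)}$ in the intermediate range and super-exponential decay for $\ell>\lfloor 2\kappa\rfloor^{2}$; summing a polynomial times these tails produces a uniformly bounded remainder. Combined with the small-$\kappa$ expansion from~(\ref{SmallKappa}), which yields a constant bound on $|\mathbf{f}(\mathbf{p},\theta)|$ for $\kappa\le 1$, this gives the claimed inequality.

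Given the uniform scattering-amplitude bound, the two assertions follow directly. For $H_f$, the multiplier~(\ref{AltDisp}) satisfies
\begin{equation*}
\bigl|H_f(\vec p)\bigr|\le\tfrac{2\pi\hbar^{2}\eta}{m_{*}}\!\int_{\R^{3}}\!d\vec q\,r(\vec q)\,\bigl|\mathbf{f}\bigl(\bigl|\tfrac{m_{*}}{m}\vec q-\tfrac{m_{*}}{M}\vec p\bigr|,0\bigr)\bigr|
\le c'\bigl(1+|\vec p|\bigr)
\end{equation*}
after inserting the bound, applying the triangle inequality, and using that $r$ is Gaussian (so $\int r(\vec q)\,|\vec q|\,d\vec q<\infty$). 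A multiplication operator by a function with linear growth is relatively bounded to $|\vec P|$. For $L_{\vec q,\vec v}(\vec P)$, with $\vec q,\vec v$ fixed, the factor $r^{1/2}$ in~(\ref{ElementKraus}) is bounded by $\|r\|_{\infty}^{1/2}$, and the scattering-amplitude factor is evaluated at
$\mathbf{p}'=\bigl(\bigl|\tfrac{m_{*}}{m}\vec v-\tfrac{m_{*}}{M}\vec p_{\perp\vec q}\bigr|^{2}+4^{-1}|\vec q|^{2}\bigr)^{1/2}\le\tfrac{m_{*}}{m}|\vec v|+\tfrac{m_{*}}{M}|\vec p|+\tfrac12|\vec q|$, so applying the uniform bound gives $|L_{\vec q,\vec v}(\vec p)|\le C_{\vec q,\vec v}(1+|\vec p|)$.

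The main obstacle is the uniform partial-wave estimate: the sup-norm of $\mathbf{f}(\mathbf{p},\cdot)$ actually grows like $\mathbf{p}$ (driven by the forward diffraction peak discussed after~(\ref{HighEnergy})), so there is no hope of a better-than-linear bound, and one must be careful that the $\ell\le\lfloor2\kappa\rfloor$ contribution produces exactly the $\mathbf{p}^{1}$ growth and nothing worse. Everything else is a routine triangle-inequality/Gaussian-integral computation. Note the constant for the $L_{\vec q,\vec v}$ bound is allowed to depend on $\vec q,\vec v$, as required by the application in the proof of Thm.~\ref{ExistenceLemma}.
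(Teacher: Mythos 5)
Your proposal is correct and follows essentially the same route as the paper: both reduce the lemma to the pointwise bound $|\mathbf{f}(\mathbf{p},\theta)|\le c(a^{-1}\hbar+\mathbf{p})$ via the partial-wave expansion with $|\frak{L}_\ell(\cos\theta)|\le 1$, and then conclude by the triangle inequality (for $L_{\vec q,\vec v}$) and a Gaussian integration (for $H_f$). The paper's proof is terser and simply points to ``the analysis in the proof of Lem.~\ref{ScatAmpl}'' for the large-$\kappa$ control, whereas you spell out the split at $\ell=\lfloor 2\kappa\rfloor$, the trivial bound $|S_\ell-1|\le 2$ on the head, and the Bessel asymptotics on the tail — but that is exactly what Lem.~\ref{ScatAmpl}'s proof does, so the content is the same.
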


\begin{proof}

In the expression~(\ref{ElementKraus}) for $L_{\vec{q},\vec{v}}$, the only unbounded factor for fixed $\vec{q}\in\R^{3}$, $\vec{v}\in (\vec{q})_{\perp}$ is the scattering amplitude  
\begin{multline*}
f\Big(\frac{m_{*}}{m}\vec{v}-\frac{m_{*}}{M}\vec{P}_{\perp \vec{q}}-2^{-1}\vec{q} , \,    \frac{m_{*}}{m}\vec{v}-\frac{m_{*}}{M}\vec{P}_{\perp \vec{q}}+2^{-1}\vec{q}\Big)\\ = \mathbf{f}\Big(\big|\frac{m_{*}}{m}\vec{v}-\frac{m_{*}}{M}\vec{P}_{\perp \vec{q}}+2^{-1}\vec{q} \big|, \,2\tan^{-1}\big( \frac{ 2^{-1}\big|\vec{q} \big|}{ \big|  \frac{m_{*}}{m}\vec{v}-\frac{m_{*}}{M}\vec{P}_{\perp \vec{q}}\big|}   \big)\Big).
\end{multline*}
In particular, the unboundedness of $\mathbf{f}(\mathbf{p},\theta)$ occurs as  $\mathbf{p}\rightarrow \infty$  for $\theta$ near zero.  However, there is a $c>0$ such that    
\beq \label{Turnip}
 | \mathbf{f}(\mathbf{p},\theta)|\leq c\big( a^{-1}\hbar+\mathbf{p   }\big).
\eeq 
If I show this, then $L_{\vec{q},\vec{v}}$ is relatively bounded to $|\vec{P}|$, since  
$$\Big|f\Big(\frac{m_{*}}{m}\vec{v}-\frac{m_{*}}{M}\vec{P}_{\perp \vec{q}}-2^{-1}\vec{q} , \,    \frac{m_{*}}{m}\vec{v}-\frac{m_{*}}{M}\vec{P}_{\perp \vec{q}}+2^{-1}\vec{q}\Big)\Big|\leq c\big( a^{-1}\hbar+\frac{m_{*}}{m}|\vec{v}|+2^{-1}|\vec{q}|+ \frac{m_{*}}{M}|\vec{P}|\big). $$

Showing~(\ref{Turnip}) requires a similar form argument as in Lem.~\ref{ScatAmpl}.  By the partial-wave expansion~(\ref{PartialWave}) and the fact that $|P_{\ell}(\cos(\theta))|\leq 1$, then
$$
|\mathbf{f}(\mathbf{p},\theta)|\leq\frac{\hbar}{2\mathbf{p}}\sum_{\ell =0}^{\infty}(2\ell+1)\big| S_{\ell}(\frac{a}{\hbar}\mathbf{p})  -1\big|.$$
However, for $\frac{a}{\hbar}\mathbf{p}\gg 1$, this sum is bounded by a constant multiple of $\mathbf{p}$ by the analysis in the proof of Lem.~\ref{ScatAmpl}.

For the Hamiltonian $H_{f}$,  
\begin{align*}
H_{f}(\vec{p})& \leq\frac{2\pi\hbar^{2}\eta}{ m_{*}}\int_{\R^{3}}d\vec{q}\,r(\vec{q})\,\big|\mathbf{f}\big(\big|\frac{m_{*}}{m}\vec{q}-\frac{m_{*}}{M}\vec{p}\big|,\, 0\big) \big| \\ &\leq \frac{2\pi c \hbar^{2}\eta}{ m_{*}}\int_{\R^{3}}d\vec{q}\,r(\vec{q})\,\big(a^{-1}\hbar+\big|\frac{m_{*}}{m}\vec{q}-\frac{m_{*}}{M}\vec{p} \big| \big) \\ &\leq \frac{2\pi c \hbar^{2}\eta}{ m_{*}}\Big(a^{-1}\hbar+2 \frac{m_{*}}{m}\big(\frac{2m}{\pi \beta}   \big)^{\frac{1}{2}}+\frac{m_{*}}{M}|\vec{p}|    \Big),
\end{align*}
where I have used inequality~(\ref{Turnip}), an explicit integration of the Gaussian $r(\vec{q})$, and the triangle inequality.  The above gives a linear bound in $|\vec{P}|$ for $H_{f}(\vec{P})$.

\end{proof}

\end{appendix}

\end{document}